\documentclass[aoas,preprint]{imsart}

\RequirePackage[OT1]{fontenc}
\RequirePackage{amsthm,amsmath,amsfonts}
\RequirePackage[authoryear]{natbib}
\RequirePackage[colorlinks,citecolor=blue,urlcolor=blue]{hyperref}

\usepackage{graphicx}              
\usepackage{subcaption}
\usepackage{float}
\usepackage{caption}
\usepackage[ruled]{algorithm}
\usepackage{soul}  
\usepackage{notoccite}

\usepackage[table]{xcolor}
\usepackage{multirow}
\usepackage{multicol}
\usepackage[font={small,it}]{caption}
\captionsetup[table]{position=bottom, skip=10pt, justification=raggedright}

\usepackage{algorithm}
\usepackage{enumerate}
\usepackage{bm}
\usepackage{bbm}
\usepackage[parfill]{parskip}

\usepackage{pdflscape}

\newtheorem*{remark*}{Remark}

\captionsetup[table]{
  labelsep = newline,
  textfont = sc, 
  name = TABLE, 
  justification=raggedright,
  singlelinecheck=false, 
  labelsep=colon,
  skip = \medskipamount}
  
\captionsetup[subtable]{justification=raggedright}

\usepackage{booktabs}

\usepackage[compact]{titlesec}




\newcommand{\bft}{\fontseries{b}\selectfont}

\DeclareMathOperator{\real}{\mathbb{R}}
\DeclareMathOperator{\X}{\mathbf{X}}

\DeclareMathOperator{\y}{\mathbf{y}}
\DeclareMathOperator*{\minimize}{\mathrm{minimize}}

\newcommand{\1}{\mathbbm{1}}
\newcommand{\bbeta}{\bm{\beta}}
\newcommand{\bW}{\bm{W}}
\newcommand{\bX}{\bm{X}}
\newcommand{\bTheta}{\bm{\Theta}}

\newcommand{\by}{\bm{y}}

\newcommand{\simiid}{\buildrel{\textsc{iid}}\over\sim}
\newcommand{\lmax}{\Lambda_{\max}}
\DeclareMathOperator*{\argmin}{arg\,min}

\newcommand{\Supp}{\text{Supp}}
\newcommand{\bPhi}{\bm{\Phi}}




\startlocaldefs
\numberwithin{equation}{section}
\theoremstyle{plain}

\newtheorem{prop}{Proposition}
\endlocaldefs

\setattribute{journal}{name}{}

\begin{document}

\begin{frontmatter}
\title{Feature Selection for Data Integration with Mixed Multi-view Data}
\runtitle{Feature Selection for Mixed Multi-view Data}

\begin{aug}
\author{\fnms{Yulia} \snm{Baker}\thanksref{m1}\ead[label=e1]{yulia.baker@rice.edu}},
\author{\fnms{Tiffany M.} \snm{Tang}\thanksref{m2}\ead[label=e3]{tiffany.tang@berkeley.edu}},
\author{\fnms{Genevera I.} \snm{Allen}\thanksref{m1}\ead[label=e2]{gallen@rice.edu }}



\affiliation{Rice University\thanksmark{m1} and University of California, Berkeley\thanksmark{m2}}

\address{Department of Statistics\\
Rice University\\
Houston, TX 77005 \\
\printead{e1}}

\address{Department of Statistics\\
University of California, Berkeley\\
Berkeley, CA 94720\\
\printead{e3}}

\address{Department of Electrical and Computer Engineering\\
Rice University\\
Houston, TX 77005 \\
\printead{e2}}
\end{aug}

\begin{abstract}
Data integration methods that analyze multiple sources of data simultaneously can often provide more holistic insights than can separate inquiries of each data source. Motivated by the advantages of data integration in the era of ``big data'', we investigate feature selection for high-dimensional multi-view data with mixed data types (e.g. continuous, binary, count-valued). This heterogeneity of multi-view data poses numerous challenges for existing feature selection methods. However, after critically examining these issues through empirical and theoretically-guided lenses, we develop a practical solution, the Block Randomized Adaptive Iterative Lasso (B-RAIL), which combines the strengths of the randomized Lasso, adaptive weighting schemes, and stability selection. B-RAIL serves as a versatile data integration method for sparse regression and graph selection, and we demonstrate the effectiveness of B-RAIL through extensive simulations and a case study to infer the ovarian cancer gene regulatory network. In this case study, B-RAIL successfully identifies well-known biomarkers associated with ovarian cancer and hints at novel candidates for future ovarian cancer research.
\end{abstract}

%
\begin{keyword}
\kwd{data fusion}
\kwd{multi-modal data}
\kwd{integrative genomics}
\kwd{variable selection}
\kwd{Lasso/GLM Lasso}
\kwd{stability selection}
\kwd{mixed graphical models}
\end{keyword}

\end{frontmatter}

\section{Introduction}

As the amount of data grows in volume and variety, data integration, or the analysis of multiple sources of data simultaneously, is becoming increasingly necessary in numerous disciplines. For example, in genomics, scientists can gather data from many related, yet distinct sources including gene expression, miRNA expression, point mutations, and DNA methylation. Since all of these genomic sources interact within the same biological system, it can be advantageous to analyze them together via data integration. Ultimately, the abundance and diversity of information captured by integrated data offers an invaluable opportunity to gain a better and more holistic understanding of the phenomena at hand.


In this work, we aim to perform feature selection for a common family of integrated data sets called high-dimensional \textit{multi-view} data. Multi-view data refers to data collected on the same set of samples, but with features from multiple sources of potentially mixed types (e.g. categorical, binary, count, proportion, continuous, and skewed continuous values). More formally, suppose we observe multi-view data with $K$ high-dimensional views (or sources), $\X_1 \in \real^{n \times p_1}, \dots, \X_K \in \real^{n \times p_K}$, which are measured on the same $n$ samples but with features of mixed types. We seek to recover a sparse set of features from each $\X_k$ associated with the response $\y \in \real^n$ by considering:
\begin{align} \label{eq:prob}
\minimize_{\alpha, \bbeta_1, \ldots \bbeta_K}  \ - \frac{1}{n} \ell \left(
\y ; \ \ \alpha \mathbf{1}_n + \sum_{k=1}^{K} \X_k \bbeta_k
\right) \ \ \textrm{subject to }  \ \sum_{k=1}^{K} || \bbeta_k ||_0 \leq \nu.
\end{align}
Here, $\bbeta_k \in \real^{p_k}$ are the coefficients associated with view $k$, $\nu > 0$ is a tuning parameter which regulates the sparsity level, and $\ell()$ is the generalized linear model (GLM) log-likelihood associated with $\y$. Note we not only consider continuous (Gaussian) responses, but also the broader class of GLMs including the Poisson (log-linear) and Bernoulli (logistic) families.

While there are many applications for multi-view feature selection in genomics, imaging, national security, economics, and other fields, major difficulties, stemming from the heterogeneity of features and how to appropriately integrate such differences, have prevented the successful use of multi-view feature selection in practice. To our knowledge, no one has proposed an effective practical solution to perform feature selection with multi-view data. A plethora of works have studied feature selection in the high-dimensional setting via the Lasso or GLM Lasso \citep{tibshirani1996regression, yuan2007model, tibshirani2013lasso, zhao2006model}, and others have studied various data integration problems \citep{hall1997introduction, shen2009integrative, acar2011all}. However, there is limited research at the intersection of the two fields. 

The one area that touches on multi-view feature selection is in the context of mixed graphical models, which estimate sparse graphs between features in multi-view data \citep{yang2014general, yang2014mixed, lee2013structure, cheng2013high, haslbeck2015mgm}. Using the node-wise neighborhood estimation approach of \cite{meinshausen2006high}, mixed graphical models estimate the neighborhood of each node (i.e. feature) separately via a penalized regression model (typically based on the Lasso or GLM Lasso) and combine neighborhoods using an ``AND'' or ``OR'' rule. Though mixed graphical models perform well in idealized settings for which theoretical guarantees have been proven, we will demonstrate in Section~\ref{sec:Challenges} that there are severe limitations with these approaches in realistic settings with correlated, heterogeneous features, commonly found in multi-view data.

To facilitate more effective integrative analyses in practice, we investigate the under-studied problem of high-dimensional multi-view feature selection, and we propose a practical solution. Our work is the first to identify and critically examine the fundamental challenges of multi-view feature selection, and we leverage this deep understanding of the challenges to develop a new high-dimensional multi-view selection method, the Block Randomized Adaptive Iterative Lasso (B-RAIL). B-RAIL is a practical tool for multi-view feature selection with its roots grounded in theory, and it builds upon adaptive $\ell_1$ penalties, the randomized Lasso, and stability selection \citep{meinshausen2010stability} to overcome the issues incurred by existing methods. Our method can be used for both regression and mixed graphical selection, thus lending itself to a host of important applications. 

In Section~\ref{sec:Challenges}, we investigate the major challenges of multi-view feature selection and highlight the literature gaps relating to these issues. We also show that the culmination of these challenges lead to poor feature recovery in existing Lasso-type methods and mixed graphical models. In Section~\ref{sec:BRAIL}, we introduce our proposed method, B-RAIL, which takes steps to address the challenges from Section~\ref{sec:Challenges}. In Section~\ref{sec:Sims}, we showcase the strong empirical performance of B-RAIL through simulations and contrast it to existing methods. In Section~\ref{sec:OVCase}, we further demonstrate the effectiveness of B-RAIL in a novel integrative genomics case study for ovarian cancer, and we provide concluding remarks in Section~\ref{sec:Disc}.

\section{Challenges}
\label{sec:Challenges}


Before introducing our proposed method, it is instructive to understand the challenges posed by feature selection in the multi-view setting. These challenges have been over-looked in previous methods and thus contribute to many of their shortcomings. In this section, we focus on the challenges faced by linear models with Lasso-type penalties due to their overwhelming popularity and desirable statistical properties \citep{tibshirani1996regression, yuan2007model, meinshausen2009lasso, tibshirani2013lasso, zhao2006model, zhang2008sparsity}. Given data $\X \in \real^{n \times p}$ and response $\y \in \real^n$, recall that the (GLM) Lasso solves
\begin{align}
\hat{\alpha}, \hat{\bbeta} = \underset{\alpha \in \real, \ \bbeta \in \real^p}{\operatorname{\argmin}}  \ \ - \frac{1}{n} \ell \left(\y ; \ \ \alpha \mathbf{1}_n + \X \bbeta \right) + \lambda || \bbeta ||_1, \label{eq:lasso} 
\end{align} 
where $\lambda > 0$ is a regularization parameter and $\ell()$ is the GLM log-likelihood associated with the response. For clarity, we use the term ``Lasso'' to refer to the $\ell_1$-penalized model with continuous (Gaussian) responses, ``GLM Lasso'' to mean the $\ell_1$-penalized model with non-Gaussian GLM responses (e.g. binary, Poisson), and ``Lasso-type'' methods to mean either the Lasso or GLM Lasso with some form of $\ell_1$ penalty (e.g. a global penalty, separate penalties, adaptive penalties).

Our focus here is not on deriving new theoretical guarantees for the Lasso in multi-view settings. Rather, we highlight deep practical concerns, which are rooted in theory and commonly arise in feature selection for data integration. By identifying these practical challenges, we open up numerous avenues for future theoretical research and set the stage for the construction of a new method, which overcomes the identified issues.

\subsection{Motivating Example}
To first illustrate the current challenges and motivate the need for a solution, we present in Figure~\ref{fig:MotivPlot} the estimated graphs from common Lasso-type methods and our proposed method when applied to real ovarian cancer genomics data. Here, there are $n = 293$ samples and $p = 836$ features from three views: count-valued RNASeq data ($p_{\text{RNASeq}}= 408$), continuous miRNA data ($p_{\text{miRNA}}= 307$), and proportion-valued methylation data ($p_{\text{Methyl}}= 301$) (refer to Section~\ref{sec:OVCase} for data collection and pre-processing details). As in several previous graphical models and mixed graphical models \citep{meinshausen2006high, ravikumar2010high, jalali2011learning}, we estimated the graphs using node-wise neighborhood selection. We then combined neighborhoods using the ``AND'' rule and applied stability selection \citep{meinshausen2010stability, liu2010stability} with the threshold $0.98$ to select stable edges.

Figure~\ref{fig:MotivPlot} specifically compares three types of estimation schemas at each node: (a) GLM Lasso with one global penalty, (b) GLM Lasso with separate penalties for each view, and (c) our proposed B-RAIL algorithm (introduced in Section~\ref{sec:BRAIL}). The first two methods have been proposed in several mixed graphical models \citep{yang2014general, chen2014selection, haslbeck2015mgm} and satisfy strong theoretical guarantees in idealized settings. In the real data example however, the Lasso-type methods are unstable (illustrated by the fewer edges), favor feature selection within one view, and select only a few edges between views. This overall instability indicates that the Lasso-type methods are not robust to small perturbations of the data and raises serious concerns about the reproducibility and reliability of the results \citep{yu2013stability}. Our proposed B-RAIL algorithm, in contrast, avoids these issues and exhibits greater stability as well as balance, selecting a larger number of within and between block edges under the same thresholding value. We will later see through extensive simulations in Section~\ref{sec:Sims} that the issues with existing Lasso-type methods observed here are recurring problems in very general multi-view scenarios. 

\begin{figure}[h!]
\begin{tabular}[c]{cc}
\captionsetup{type=figure}\addtocounter{figure}{-1}
\begin{subfigure}[b]{0.44\textwidth}
\centering
\includegraphics[width=1\linewidth]{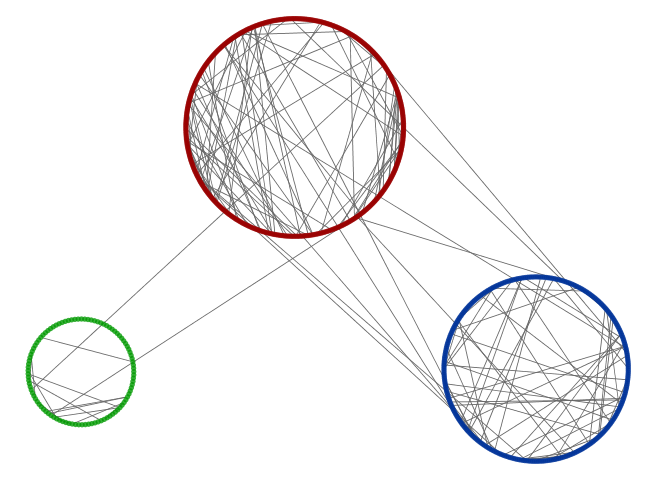}
  \captionof{figure}{Lasso with a single global penalty}
\end{subfigure}& 
\hspace{.5em}
\begin{subfigure}[b]{0.44\textwidth}
\centering
\includegraphics[width=1\linewidth]{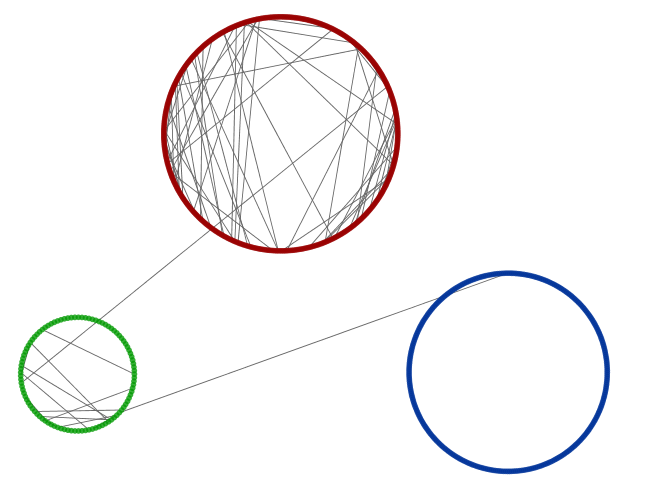}
  \captionof{figure}{Lasso with separate penalties}
\end{subfigure}\\
\multicolumn{2}{c}{
\hspace{.5em}
\begin{subfigure}[b]{0.44\textwidth}
\centering
\includegraphics[width=1\linewidth]{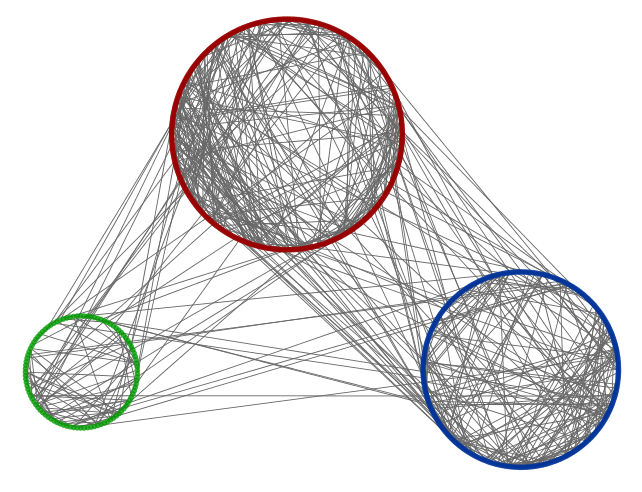}
  \captionof{figure}{The B-RAIL algorithm}
\end{subfigure} }\\
  \end{tabular}   
  \captionof{figure}{We compare three different graph selection methods when applied to real ovarian cancer genomics data. The data is comprised of three blocks: RNASeq (red), miRNA (blue), and methylation (green), with $n=293$ and $p=836$. For all three methods, we use stability selection with the threshold $0.98$ to select stable edges, and hence, we can directly compare the number of selected edges across the various methods. The Lasso with a single global penalty and the Lasso with separate penalties select few edges within blocks and almost no edges between blocks, indicating that these methods are highly unstable to small perturbations of the data.}
    \label{fig:MotivPlot}
\end{figure}

To begin understanding why existing Lasso-type methods struggle in practice, we identify and study four major challenges of feature selection for high-dimensional multi-view data: 1) scaling, 2) ultra-high-dimensionality, 3) signal interference, and 4) domain-specific beta-min. These issues stem from a combination of domain differences, signal differences, and the high-dimensionality of each view. Together, these challenges can have a significant adverse effect on feature recovery for data integration. We next examine each of these challenges in greater detail.

\subsection{Scaling}

The first and most obvious challenge with integrative analyses revolves around scaling. That is, each view in a multi-view data set is often measured on a different scale, and it is unclear how to most effectively integrate such differences. Many believe that normalizing all features to mean $0$ and variance $1$ remedies the scaling differences, but this is not always the case. Even after centering and scaling, data views remain distinct if they differ in ways beyond the first and second moments. This issue is especially problematic with binary and count-valued data blocks, two common types in multi-view data, since they are defined by much higher moments. We thus highly discourage using the ordinary (GLM) Lasso with a single penalty \eqref{eq:lasso} on normalized multi-view data.

Now while one can use different regularization parameters for each view to help alleviate the scaling differences, this generates another set of issues that are complicated by the following challenges. We will revisit the scaling issue in light of these complications later in this section.


\subsection{Ultra-High-Dimensionality}

In addition to the scaling issue, performing exact feature selection with the Lasso is already difficult in the ordinary high-dimensional setting. For exact feature selection, the number of samples $n$ must be above a theoretical minimum known as the \emph{sample complexity}. In the highly idealized scenario of an iid standard Gaussian design and a Gaussian response, \cite{wainwright2009sharp} showed that the sample complexity scales at approximately $2s \log(p - s)$, where $p$ is the number of features, and $s$ is the number of non-zero features. This idealized lower bound can be difficult to attain in many applications including genomics, where typical values of $p = 1000$ and $s = 30$ demand $n\approx 400$ patients - a large and highly expensive study. We informally refer to the regime where $p \gg n \geq 2s \log(p - s)$ as ``high-dimensional" and $n < 2 s \log(p - s)$ as ``ultra-high-dimensional." Roughly, the Lasso can never perform exact feature selection in the ultra-high-dimensional regime. 

For non-Gaussian responses and more realistic designs such as correlated, heterogeneous views in multi-view data, the sample complexity is significantly higher than the idealized Gaussian bound \citep{chen2014selection, ravikumar2010high}. As an example, the Poisson GLM's sample complexity scales at approximately $s^2 \log(p (\log p)^2)$ \citep{yang2015graphical}, so if $p = 1000$ and $s = 30$, we require $n \approx 10,000$ samples. This problem is further exacerbated in multi-view settings since combining multiple high-dimensional views for data integration almost always results in an ultra-high-dimensional problem.


\subsection{Signal Interference}

The third challenge we identify stems from a problem with the Lasso known as shrinkage noise. \cite{su2015false} showed that with high probability, no matter how strong the effect sizes, false discoveries appear early on the Lasso path due to pseudo-noise introduced by shrinkage in the high- and ultra-high-dimensional regimes. When the Lasso selects its first few features using large regularization parameters, the residuals still contain much of the signal associated with the selected features, and it is this extra noise which \cite{su2015false} calls shrinkage noise. 

In the multi-view context, shrinkage noise becomes a very complex and serious issue due to the different signals across blocks. Since the Lasso naturally selects features from the block with the highest signal first, the resulting shrinkage noise will mask the weaker signals from other blocks and compromise our ability to select from these weaker blocks. We refer to this adverse consequence of shrinkage noise as \textit{signal interference}.

\begin{figure}
\begin{tabular}[c]{cc}
\captionsetup{type=figure}\addtocounter{figure}{-1}
\begin{subfigure}[b]{0.46\textwidth}
\includegraphics[width=\linewidth]{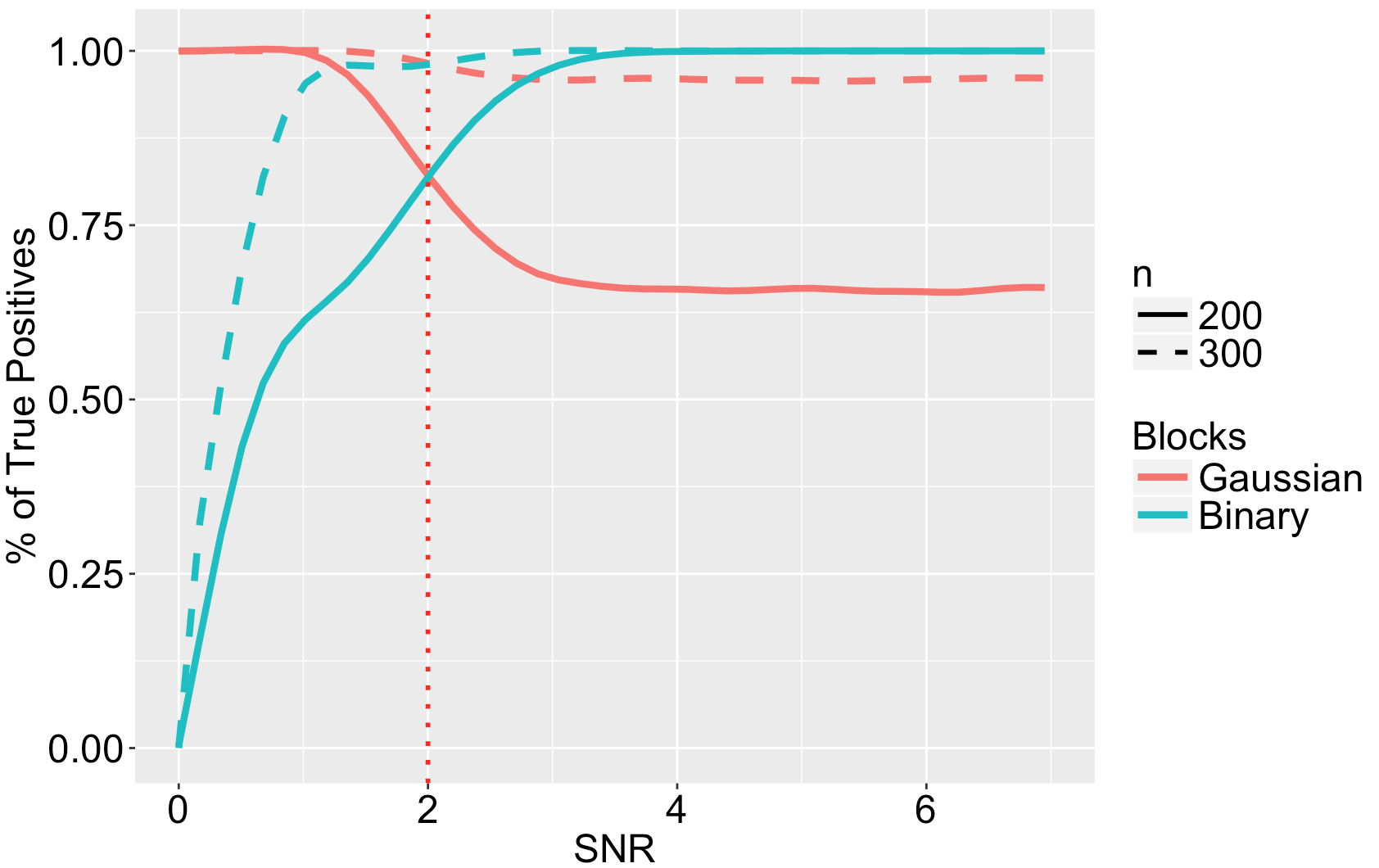}
  \captionof{figure}{ Gaussian response, Gaussian and Binary predictors}
\end{subfigure}& 
\hspace{.5em}
\begin{subfigure}[b]{0.46\textwidth}
\includegraphics[width=\linewidth]{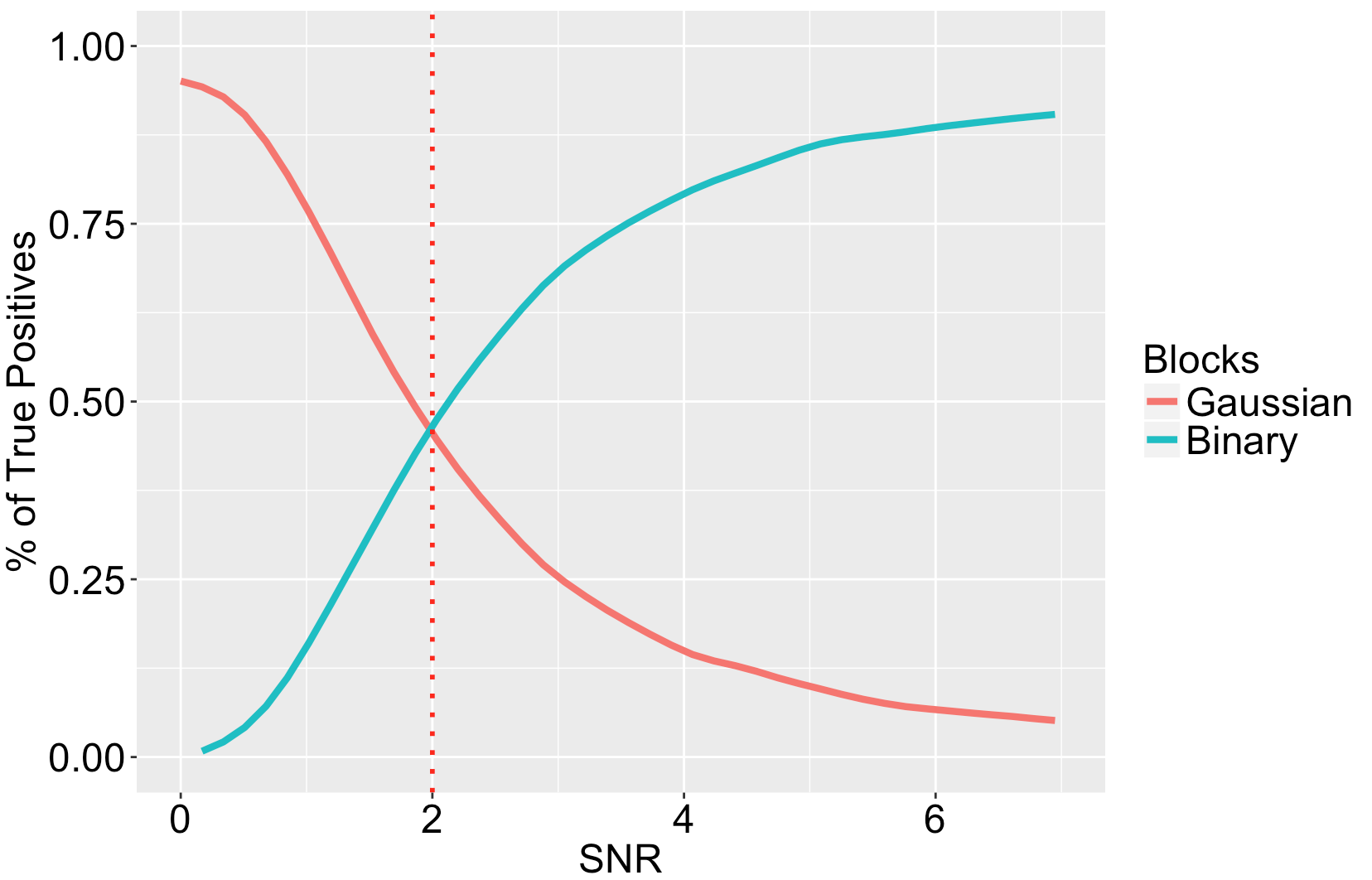}
  \captionof{figure}{ Binary response, Gaussian and Binary predictors}
\end{subfigure}\\
  \end{tabular}   
  \captionof{figure}{We illustrate signal interference for both Gaussian and binary responses given iid Gaussian $\X_1$ and iid binary $\X_2$ predictors. We simulate $n = 200$, $p_1 = p_2 = 1000$, and 10 true features in each block. We fix the SNR for the Gaussian block at $2$ and let the SNR of the binary block vary between $0$ and $7$. The vertical red line highlights the point at which $SNR_1 = SNR_2 = 2$. As the SNR of the binary block increases, it interferes with the ability to recover the true Gaussian features. This signal interference is even more severe for binary responses.}
    \label{fig:SigInterf}
\end{figure}

The problem of shrinkage noise has not been widely studied beyond the iid Gaussian design in \cite{su2015false}, but we provide strong empirical evidence in Figure~\ref{fig:SigInterf} that confirms the existence of shrinkage noise and signal interference in non-Gaussian multi-view settings. In the case of an iid Gaussian and an iid binary block, shown in Figure~\ref{fig:SigInterf}, the Lasso achieves perfect recovery in the Gaussian block when the signal-to-noise ratio (SNR) in the binary block is $0$, but as the SNR of the binary block increases, it interferes with our ability to recover the Gaussian features in the small sample scenario of $n = 200$. This signal interference is especially disastrous in the GLM Lasso with binary responses, where support recovery in the Gaussian block tends to $0$. When we increase the sample size to $n=300$ however, there is no decline in the recovery of the Gaussian block in Figure~\ref{fig:SigInterf}(a). This agrees with the known result from \cite{su2015false} that shrinkage noise occurs when the Lasso's theoretical conditions are violated and in particular, when $n$ is not sufficiently large.



\subsection{Domain-Specific Beta-min Condition}

\begin{figure}
\centering
\begin{minipage}{1\textwidth}
  \centering
  \includegraphics[width=.9\linewidth]{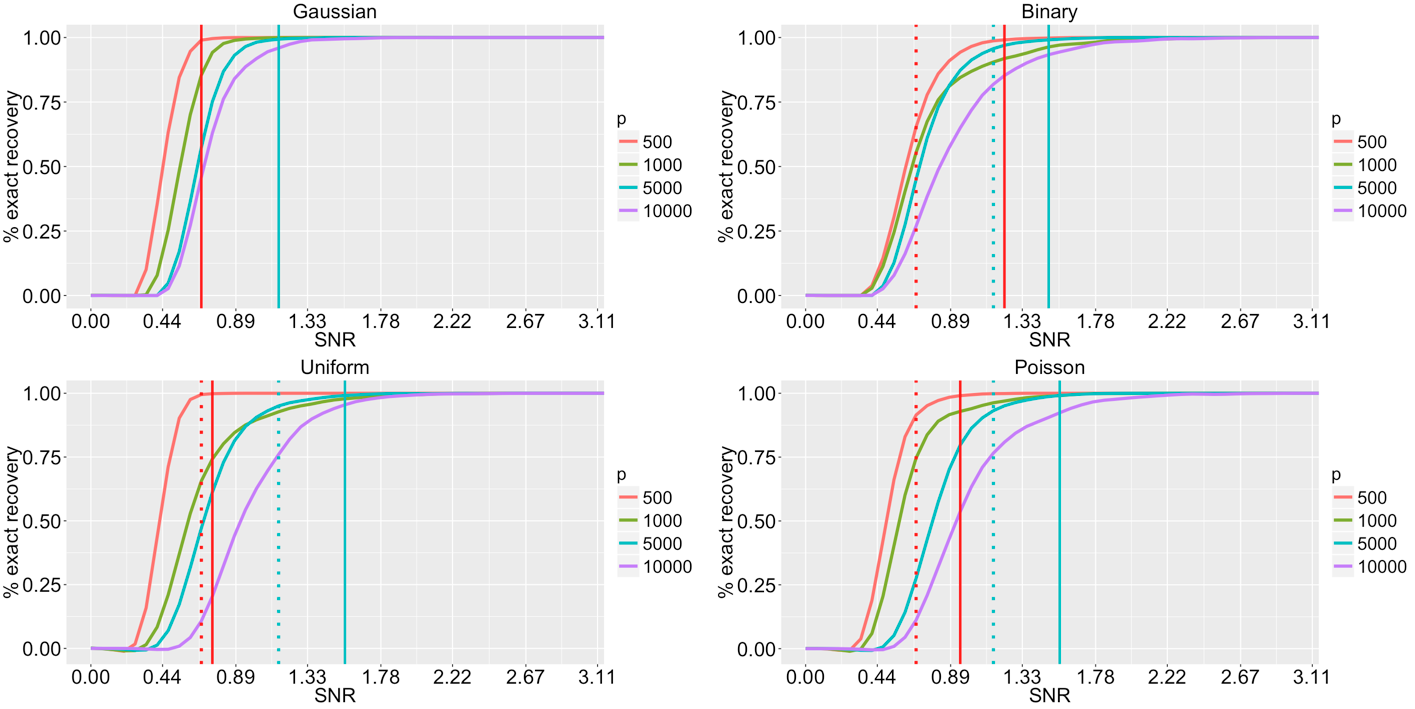}
  \captionof{figure}{We simulate Gaussian responses given four types of predictors (Gaussian, binary, uniform, Poisson) and compare our ability to recover the true features under the four designs. There are $n = 200$ samples, $p$ features, and $10$ true features. The red and blue solid vertical lines indicate the minimum SNR required to achieve $99\%$ recovery for $p=500$ and $p=5000$, respectively. In the case of non-Gaussian predictors, dashed vertical lines are overlayed to compare the minimum SNR requirements to those of Gaussian predictors. These results show that different data types can tolerate different minimum SNRs.}
  \label{fig:SNR}
\end{minipage}%
\end{figure}

Finally, analogous to how signal differences can exacerbate the Lasso's shrinkage noise issue, domain differences in multi-view problems can complicate the Lasso's \textit{beta-min condition}, which establishes a lower bound for the minimum amount of signal (i.e. SNR) required for feature recovery \citep{zhao2006model, meinshausen2006high, buhlmann2013statistical}.

In Figure~\ref{fig:SNR}, we report our ability to recover the true features for a simple simulation with iid features from four data types (Gaussian, binary, uniform, and Poisson). In each subplot, the solid vertical lines indicate the minimum SNR needed to recover $99\%$ of all true features when $p = 500$ (red) and $p = 5000$ (blue). We observe that the minimum SNR requirement varies based upon the domain of the features and that the Gaussian predictors can tolerate the lowest SNR. These empirical results reveal that if two blocks have the same amount of signal but are from different domains, we can only recover the features that pass the minimum signal threshold dictated by the domains. Put concretely, if we were to perform feature selection on our simulated multi-view data with $p = 500$ and SNR $= 0.66$, we would be able to recover $99\%$ of the true features in the Gaussian block but only about $3/4$ of the true features in the binary block. This observed phenomenon agrees with previous work, which has shown that an increase in the sparsity in $\X$ effectively reduces the SNR in the high-dimensional setting \citep{wang2010information}. Beyond this however, the beta-min condition has been relatively unexplored for the GLM Lasso and domain differences and remains a ripe area for future theoretical work.


\subsection{Additional Challenges}
It is important to note that the four challenges above do not act independently from one another. In fact, the main source of difficulty with multi-view feature selection is arguably the interactions between challenges. For instance, consider the problem of selecting features from high-dimensional discrete blocks with weak signals. The ultra-high-dimensionality issue can exacerbate the already existing problem of signal interference, which can then worsen scaling issues, increase minimum SNR requirements, and amplify the overall difficulty of the problem.

In conjunction with these complex interactions, the need to select an appropriate amount of regularization $\lambda$ through model selection methods can also increase the difficulty of multi-view feature selection. We will compare three common selection methods, namely, stability selection \citep{meinshausen2010stability, liu2010stability}, cross validation \citep{stone1974cv, allen1974cv, shao1993linear}, and extended BIC \citep{chen2012extended}, and discuss their additional challenges in Section~\ref{sec:Sims}.

We lastly note that the majority of our discussion has been focused on the Lasso. Feature selection is even more challenging for the GLM Lasso. \cite{chen2014selection} investigated this for mixed graphical models and concluded that the predictors associated with Gaussian responses are easier to recover than those with responses from other exponential families. Specifically, Gaussian responses require fewer samples, allow for a wider range of tuning parameters, and generally have a higher probability of success.

\subsection{Challenges with Existing Methods}

Having identified a host of challenges, we return to address why common Lasso-type methods are not well-suited for multi-view feature selection.

To begin with its most simple form, the \textit{Lasso with a single global penalty} \eqref{eq:lasso} uses the same penalty for all views and does not alleviate the scaling issues or signal interference issues in multi-view data. The consequences of these problems are evident, especially in the case of non-Gaussian blocks with weak signals, in Figure~\ref{fig:MotivPlot}(a), where fewer edges are selected within the proportion-valued methylation block.

By employing the \textit{Lasso with separate penalties} for each data view, we can mitigate the issue of scaling. Nevertheless, model selection becomes more challenging with multiple penalties, and signal interference remains a driver of poor recovery. In fact, having a separate penalty for each view exacerbates signal interference and encourages selection from the block with the strongest signal and no selection from the blocks with weak signals. This signal interference is exemplified in Figure~\ref{fig:MotivPlot} by the extreme selection imbalance among views, with almost no selection in the miRNA block and heavy selection in the RNASeq block.

In the \textit{Adaptive Lasso} \citep{zou2006adaptive}, the amount of $\ell_1$-regularization associated with $\beta_j$ is typically $\lambda / |\hat{\beta}_j^{OLS}|^{\gamma}$ for some constants $\gamma, \lambda > 0$. This adaptive penalty mitigates the scaling issue by adjusting for signal differences through $\hat{\beta}_j^{OLS}$, but it also encourages selection of features with higher signals and penalizes features with weaker signals. The Adaptive Lasso hence complicates signal interference by treating weaker signals as noise and results in little to no selection in the blocks with weak signals.

While the previous methods all struggle with signal interference, one simple way to reduce the signal interference between blocks is to perform \textit{separate Lassos} for each data view. Since independently-estimated blocks cannot possibly interfere with one another, this method addresses both scaling and signal interference issues. It also avoids the problem of ultra-high-dimensionality. However, each view by itself usually does not contain sufficient information to explain much of the variability in the response, and we lose the advantages of data integration.

Beyond the Lasso-type methods, there are selection methods with non-convex penalties such as SCAD and MCP \citep{fan2001variable, zhang2010nearly}. These non-convex penalties tend to scale better than the Lasso-type penalties but are still not variable selection consistent in the ultra-high-dimensional regime, especially for non-Gaussian responses and highly correlated data. We investigate MCP/SCAD feature selection in Table \ref{tab:SCAD_MCP} in the Appendix, but our primary focus in this paper is on the more commonly used Lasso-type penalties.

\section{Block Randomized Adaptive Iterative Lasso}
\label{sec:BRAIL}

Driven by the many challenges and the lack of effective tools, we propose a new method for multi-view feature selection, the Block Randomized Adaptive Iterative Lasso (B-RAIL). For the sake of notation, suppose we observe the response vector $\y$ and multi-view data $\X = [\X_1, \dots, \X_K]$ with $K$ views of potentially mixed types, $n$ samples, and $p$ total features. We will assume $p \gg n$ and typically $p_k \gg n$ for each view. Let $S$ denote the indices of the support, and let $[\X]_{S}$ denote the columns of $\X$ indexed by $S$. We will introduce B-RAIL in the context of regression and later discuss its extension to graph selection. 

Under the regression framework, the goal of B-RAIL can be viewed as two-fold: 1) to select features from each view $\X_k$ that are associated with the response $\y$, and 2) to do so while avoiding the challenges discussed in Section~\ref{sec:Challenges}. With this goal in mind, we briefly outline the B-RAIL algorithm in Algorithm~\ref{alg:alg1} and summarize the key steps taken to overcome the current challenges.

\begin{algorithm}[h]
{\fontsize{8}{8}\selectfont
\caption{Outline of Block - Randomized Adaptive Iterative Lasso}
\begin{flushleft}
\textbf{Initialize} $t=0$ and $\hat{\bbeta}^{(0)}_{k}$ to have a fixed proportion of sparsity for $k = 1, \ldots K$. \\
\vspace{1mm}
\textbf{Do} until $\Supp(\hat{\bbeta}^{(t)})$ stops changing:  \\ 
\begin{itemize} 
\setlength{\itemindent}{-1em}
\item Set $t = t+1$.
\item For $k = 1, \dots, K$, estimate $\hat{\bbeta}^{(t)}_k$ blockwise, holding $\hat{\bbeta}^{(t)}_l$ ($l < k$) and $\hat{\bbeta}^{(t-1)}_l$ ($l > k$) fixed: \\
 \vspace{3pt}
\begin{enumerate}
\setlength{\itemindent}{-1.5em}
\item Estimate the support $\hat{S}^{(t)}_k$ of block $k$:
\begin{itemize}
\setlength{\itemindent}{-2em}
\item Use stability selection with the randomized Lasso and adaptive penalties
\end{itemize}
\item Given $\hat{S}^{(t)}_k$, estimate $[\hat{\bbeta}^{(t)}_{k}]_{\hat{S}^{(t)}_k}$, the estimated non-zero coefficient values of block $k$
\end{enumerate}
\end{itemize}
\textbf{Output} $\hat{\bbeta}_1 , \ldots \hat{\bbeta}_{K}$.
\end{flushleft}
\label{alg:alg1}
}
\end{algorithm}

At a high-level, B-RAIL iterates across the data blocks $k = 1, \dots, K$ and estimates $\hat{\bbeta}_k$ for each data block $\X_k$ separately while holding all other blocks fixed. This iterative procedure is motivated by the advantages of performing separate Lassos - namely, that it mitigates the ultra-high-dimensionality and signal interference issues. Then, within each of the individual block estimations, B-RAIL first estimates the block's support and subsequently, the coefficient values given the support. Here, B-RAIL leverages ideas from adaptive weighting schemes, stability selection, and the randomized Lasso in an attempt to reduce the scaling discrepancies and domain-specific beta-min issues.

We next provide the full B-RAIL algorithm in Algorithm~\ref{alg:alg2} and proceed to discuss each step of the B-RAIL algorithm in greater detail.





\begin{algorithm}[h!]
{\fontsize{8}{8}\selectfont
  \caption{Block - Randomized Adaptive Iterative Lasso (B-RAIL)}
  \begin{enumerate}[~]
    \item \textbf{Initialization}: 
      \begin{itemize}
        \item Set $t=0$.
        \item Initialize $\hat{\bbeta}^{(0)} = \begin{bmatrix} \hat{\bbeta}^{(0)}_{1} & \cdots & \hat{\bbeta}^{(0)}_K \end{bmatrix}$, where $\|\hat{\bbeta}^{(0)}_k\|_0 \approx 0.2p_k$ for $k = 1, \ldots K$.
        \item Re-order blocks in the data $\bX$, if necessary.
      \end{itemize}
    \item \textbf{Do}: 
      \begin{itemize}
        \item Set $t = t + 1$.
        \item For $k = 1, \dots, K$, estimate $\hat{\bbeta}^{(t)}_k$ \textbf{blockwise}, holding $\hat{\bbeta}^{(t)}_l$ ($l < k$) and $\hat{\bbeta}^{(t-1)}_l$ ($l > k$) fixed:
          \begin{enumerate}[1.]
            \item Update $\hat{S}^{(t)}_k$, the estimated support for block $k$:
              \begin{enumerate}[(a)]
                \item Set \textbf{adaptive regularization}:
                  \begin{align}
                    \lambda_{k,j}^{(t)} = \begin{cases} \eta^{(t)}_k   & \text{if } \hat{\beta}^{(t-1)}_{k, j} \neq 0, \\ 
                                                                           2\eta^{(t)}_k & \text{otherwise} \end{cases} \label{eq:lambda}
                  \end{align}
                  where 
                  \begin{equation} 
                    \eta^{(t)}_k =\frac{\lmax(\widehat{\bTheta}^{(t-1)})}{\lmax(\bX^T\bX)} \  \frac{1}{\sqrt{n} } \|\hat{\bbeta}^{(t-1)}_{k}\|_{2} \sqrt{\frac{\log( p_{k})}{n}\hspace{3pt} \|\hat{\bbeta}^{(t-1)}_{k} \|_{0} } \label{eq:eta}
                  \end{equation}
                  and $\widehat{\bTheta}^{(t-1)} = \bX^T \bW(\hat{\bbeta}^{(t-1)}) \bX$ is the estimated Fisher information matrix.
                \item Perform \textbf{stability selection}: 
                  \begin{enumerate}[i.] 
                    \item Take $B$ bootstrap samples: $\{ \by^{*b}, \bX^{*b} \}_{b=1}^{B}$.
                    \item Solve the \textbf{randomized Lasso}: For each $b = 1, \dots, B$,
                      \begin{align}
                        \hat{\bbeta}_k^{(t)}(b) = \argmin_{\alpha, \bbeta}  -\frac{1}{n} \ell \left(\by^{*b}; \alpha + \bX^{*b}_k \bbeta + \bPhi^{(t)}_k(b) \right) + \sum_{j=1}^{p_k} \gamma_j\lambda^{(t)}_{k,j} |\beta_{j}| \label{eq:main} 
                      \end{align}
                      where $\gamma_j \simiid \mathcal{U}([0.5, 1.5])$ and $\bPhi^{(t)}_k(b) = \sum_{l <  k} \bX_l^{*b} \hat{\bbeta}^{(t)}_l   + \sum_{l >  k} \bX_l^{*b} \hat{\bbeta}^{(t-1)}_l $.
                    \item Select features at stability level $\tau$: 
                      \begin{align}
                        \hat{S}_{k}^{(t)} =  \left\{ j : \frac{1}{B}\sum_{b=1}^{B} \1 \left( \hat{\beta}_{k,j}^{(t)}(b) \neq 0 \right) \geq \tau \right\}. \label{eq:S}
                      \end{align}
                  \end{enumerate}
            \end{enumerate}
          \item Update $[\hat{\bbeta}^{(t)}_{k}]_{\hat{S}^{(t)}_k}$, the estimated non-zero coefficients for block $k$:
            \begin{align}
              [\hat{\bbeta}_{k}^{(t)}]_{\hat{S}^{(t)}_k} = \argmin_{\alpha, \bbeta} -\frac{1}{n} \ell \left(\by;  \alpha +  \left[\bX_k\right]_{\hat{S}^{(t)}_{k}} \bbeta + \bPhi^{(t)}_k \right) + \epsilon \|\bbeta\|_{2}^{2},  \label{eq:coef}
            \end{align}
            where $\bPhi^{(t)}_k = \sum_{l <  k} \bX_l \hat{\bbeta}^{(t)}_l   + \sum_{l >  k} \bX_l \hat{\bbeta}^{(t-1)}_l$.
        \end{enumerate}
      \end{itemize}
    \item \textbf{Until}: $\Supp(\hat{\bbeta}^{(t)}) = \Supp(\hat{\bbeta}^{(t-1)})$, where $\Supp(\cdot)$ denotes the signed support of a vector. 
    \item \textbf{Output}: $\hat{\bbeta}_{\text{B-RAIL}} = \begin{bmatrix} \hat{\bbeta}_1^{(t)}  & \cdots & \hat{\bbeta}_{K}^{(t)} \end{bmatrix}$.
  \end{enumerate}
  \label{alg:alg2} }
\end{algorithm}

\subsection{Initialization}

In our proposed B-RAIL algorithm, the coefficients are first initialized to a pre-specified sparsity level (e.g. $0.2p_k$ non-zero features per block) by fitting separate Lasso path regressions for each block. As long as the algorithm is initialized to an over-selection of the support of $\bbeta$, we have found in all of our empirical simulations, the B-RAIL algorithm tends to perform well and is very robust to the exact choice of initialization.

After initializing $\bbeta$, we must specify the order of the blocks to iterate over. This ordering can slightly alter the estimation results of B-RAIL as accurate estimation of previous blocks makes subsequent estimations of other blocks easier, but in most cases, we have found that the block ordering is not as important to B-RAIL's performance as initializing the coefficients to an over-selection of the support. Nevertheless, for best practices, since previous Lasso results guarantee a high probability of support recovery when $n$ is sufficiently large compared to $p$, we advise estimating the blocks with the smallest $p$ first, especially if $p \leq n$. If dimensions of all the blocks are of similar sizes or much larger than $n$, we recommend starting with Gaussian blocks, which tend to have better support recovery than non-Gaussian blocks.
 
\subsection{Estimating Support ($\hat{S}^{(t)}_k$)}

After initialization, we repeatedly iterate across the $K$ data blocks and estimate the support of each block separately, holding the estimates of all other blocks fixed. This block-wise estimation avoids the ultra-high-dimensionality issue, and because shrinkage noise is mainly a problem in the ultra-high-dimensional regime, the signal interference issue is also mitigated as a direct bi-product. 

Furthermore, to effectively handle correlated features in practice, we incorporate stability selection with the randomized Lasso \citep{meinshausen2010stability} to estimate each block. As given by step 1(b) in Algorithm~\ref{alg:alg2}, we solve the Lasso $B$ times using the bootstrap and randomized penalty terms, and we threshold the stability score at $\tau$ \eqref{eq:S} to select the most stable features. Though $\tau \in (0, 1)$ is a user-specified hyperparameter, the B-RAIL algorithm is insensitive to choices of $\tau$ within reasonable ranges. This insensitivity to $\tau$ has also been observed in previous work on stability selection \citep{meinshausen2010stability}. Ultimately, by utilizing randomized penalties and stability selection when estimating the support of each block, B-RAIL leverages the key property that the randomized Lasso is feature selection consistent even when the Lasso's irrepresentable condition is violated \citep{meinshausen2010stability} and hence can effectively handle correlated features. 


\subsection{Adaptive Regularization ($\lambda$)}
Now, looking more closely at the penalty term in \eqref{eq:main} of the randomized Lasso, the penalty term includes a random weight $\gamma$ like the original randomized Lasso. However, in order to account for the scaling discrepancies, signal variability, and domain differences between blocks, we introduce a block-specific adaptive penalty $\lambda$ in \eqref{eq:main} as well. For feature $j$ in block $k$, we define the adaptive weight
\begin{align*}
\lambda_{k,j}^{(t)} = \begin{cases}
\eta & \textrm{if }
\hat{\beta}^{(t-1)}_{k,j} \neq 0, \\ 
2  \eta & \textrm{otherwise}
\end{cases}  \hspace{10pt}  \tag{\ref{eq:lambda}}
\end{align*}
where
\begin{align*} \eta=\underbrace{\frac{ \Lambda_{\max}(\widehat{\boldsymbol{\Theta}}^{(t-1)})}{\ \Lambda_{\max}(\X^T\X)}}_{\substack{(a) \text{ domain} \\ \text{correction}}} \  \underbrace{\frac{1}{\sqrt{n} } || \hat{\bbeta}^{(t-1)}_{k} ||_{2}}_{\substack{(b) \text{ signal} \\ \text{correction}}} \ \underbrace{\sqrt{\frac{\log( p_{k})}{n}\hspace{3pt} \| \hat{\bbeta}^{(t-1)}_{k} \|_{0} }}_{\substack{(c) \text{ Lasso} \\ \text{penalty}}}. \tag{\ref{eq:eta}}
\end{align*}
Here, $\widehat{\bTheta}^{(t-1)}$ is the Fisher information matrix corresponding to the GLM of the response $\y$, and $\Lambda_{\max}$ denotes the maximum eigenvalue.

In this definition of $\lambda$, there are two moving parts. First, the multiplicative scheme in \eqref{eq:lambda} encourages previously selected features to remain selected while still allowing all features to freely enter or exit the model. Secondly, $\eta$ accounts for the heterogeneity of multi-view data and helps to mitigate the challenges of Section~\ref{sec:Challenges}.

Though the exact form of $\eta$ was derived experimentally, $\eta$ can be interpreted as the product of three factors, each of which is rooted in solid theoretical foundations. Namely, part (c) of \eqref{eq:eta} is closely related to the theoretical bound on the regularization parameter needed for selection consistency of the Lasso \citep{zhao2006model, meinshausen2006high}. The ratio of eigenvalues in part (a) (i.e. the domain correction term) is motivated by the theoretical conditions imposed on the Fisher information matrix for exponential family distributions \citep{yang2015graphical}, and part (b) of \eqref{eq:eta} (i.e. the signal correction term) can be viewed as the average signal in block $k$ since $\frac{1}{\sqrt{n}}$ is derived from the theoretical sparsity level within each block \citep{bunea2007sparsity}. 

By constructing the adaptive penalty $\eta$ in this way, B-RAIL accounts for different block sizes through the $\frac{\log(p_k)}{n}$ term and automatically penalizes non-Gaussian blocks less heavily than Gaussian blocks since $\Lambda_{\text{max}}(\hat{\Theta})$ is larger for Gaussian blocks. This helps to balance the inherently different beta-min conditions. In addition, because $||\hat{\bbeta}_k||_2$ captures information about both the signal and scale of the $k^{th}$ block, $\eta$ addresses the scaling differences and penalizes the stronger signal blocks more heavily to allow for the possibility of selection from weaker blocks.

While in theory this specific combination of weights should correct for scaling and domain-specific beta-min differences across views, we reinforce our choice of $\eta$ through strong empirical results in Section~\ref{sec:Sims}. We also note that even if the form of $\eta$ is slightly misspecified, stability selection is known to be fairly robust to the exact amount of regularization, as long as the amount of regularization is within reason \citep{meinshausen2010stability}. Incorporating stability selection with the randomized Lasso thus serves as a built-in check within B-RAIL, which is advantageous in practice.

\subsection{Coefficient Estimations} 
After estimating the block-wise support using the randomized Lasso with adaptive weights, we seek to estimate the coefficients of the support as accurately as possible since these values are used in future block estimations and iterations of B-RAIL. We hence refit a penalized regression model with a small ridge penalty (e.g., $\epsilon \approx 10^{-4}$) in \eqref{eq:coef} to avoid the known bias issues with the Lasso. The only reason to include the tiny ridge penalty is to ensure that we can still estimate coefficients when the selected support is greater than $n$. The exact choice of $\epsilon$ has a negligible impact in practice because it is chosen to be so small.

\subsection{Convergence}

We finally declare convergence of B-RAIL's iterative block estimation procedure when the estimated support remains unchanged. Our empirical analysis indicates that B-RAIL has quick support convergence, and we provide one example of this fast convergence in Figure~\ref{fig:Conv} in the Appendix. Using the ovarian cancer simulation (see Section~\ref{sec:OVCase}) for three different responses (Gaussian, binary, and Poisson), we report that the average number of iterations until convergence is between 4 and 5 with the maximum number of iterations reaching 15 (over 100 runs). These ranges are similar for all designs, empirically demonstrating B-RAIL's fast convergence.

Though convergence of the full-blown B-RAIL algorithm is currently limited to empirical analysis, B-RAIL can be viewed as a block coordinate descent algorithm, which can be studied theoretically under some simplifications to gain additional insights into the full B-RAIL algorithm. Namely, if we omit the adaptive regularization parameter and apply the ordinary (GLM) Lasso in each block of the algorithm (as detailed in Algorithm~\ref{alg:block_lasso} in the Appendix), we call the resulting algorithm the blockwise (GLM) Lasso and discuss its convergence below.

\vspace{-4mm}

\begin{prop}\label{prop:converge}
Consider the optimization problem:
\begin{align}
\hat{\alpha}, \hat{\bbeta} = \underset{\substack{\alpha_1, \dots, \alpha_K \in \real, \\ \ \bbeta_1, \dots, \bbeta_K \in \real^p}}{\operatorname{\argmin}} - \frac{1}{n} \sum_{k=1}^{K} \ell \left(\y ; \ \ \alpha_k \mathbf{1}_n + \X_k \bbeta_k \right) + \sum_{k=1}^{K} \sum_{j = 1}^{p_k} \lambda_{k, j} | \bbeta_{k, j} |, \label{eq:block_lasso} 
\end{align}
where $\hat{\alpha} = [\hat{\alpha}_1, \dots, \hat{\alpha}_K]$, $\hat{\bbeta} = [\hat{\bbeta}_1, \dots, \hat{\bbeta}_K]$, and $\lambda_{k, j} \geq 0$, and suppose that the objective function in \eqref{eq:block_lasso} is bounded below. Then the blockwise (GLM) Lasso converges to a global solution of\eqref{eq:block_lasso}.
\end{prop}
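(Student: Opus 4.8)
The plan is to view the blockwise (GLM) Lasso of Algorithm~\ref{alg:block_lasso} as a cyclic block-coordinate-descent iteration applied to a convex objective whose non-smooth part is block-separable, and then run the standard convergence argument for that setting. First I would fix notation: write the objective in \eqref{eq:block_lasso} as $F(\alpha,\bbeta) = f(\alpha,\bbeta) + \sum_{k=1}^{K} P_k(\bbeta_k)$, where $f(\alpha,\bbeta) = -\tfrac1n\sum_{k=1}^K \ell(\y;\alpha_k\mathbf 1_n + \X_k\bbeta_k)$ and $P_k(\bbeta_k) = \sum_{j=1}^{p_k}\lambda_{k,j}|\bbeta_{k,j}|$. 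Then I would record the three structural facts that drive everything: (i) $f$ is convex and continuously differentiable on $\RR^K\times\RR^p$, since each canonical GLM negative log-likelihood is a convex $C^\infty$ function of its linear predictor (its Hessian is the variance function, which is nonnegative) composed with the affine map $(\alpha_k,\bbeta_k)\mapsto \alpha_k\mathbf 1_n + \X_k\bbeta_k$; (ii) each $P_k$ is a finite convex function depending only on the coordinate block $\bbeta_k$, so the non-differentiable part $\sum_k P_k$ is separable over the blocks $(\alpha_k,\bbeta_k)$, with $\alpha_k$ unpenalized; and (iii) $F$ is bounded below by hypothesis.

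With this in hand, the argument has three moves. (1) \emph{Monotone decrease.} Each inner update in Algorithm~\ref{alg:block_lasso} exactly minimizes $F$ over one block with the remaining blocks frozen, so the iterates $\{(\alpha^{(t)},\bbeta^{(t)})\}$ satisfy $F(\alpha^{(t+1)},\bbeta^{(t+1)})\le F(\alpha^{(t)},\bbeta^{(t)})$; since $F$ is bounded below, the values $F(\alpha^{(t)},\bbeta^{(t)})$ converge. (2) \emph{Stationarity at limit points.} Taking a limit point $(\bar\alpha,\bar\bbeta)$ of the iterates and passing to the limit in the first-order optimality conditions of the block subproblems — using continuity of $\nabla f$ and the closed-graph property of each $\partial P_k$ — yields $0\in\nabla_{(\alpha_k,\bbeta_k)}f(\bar\alpha,\bar\bbeta)+\partial P_k(\bar\bbeta_k)$ for every $k$. (3) \emph{Block-stationarity implies global optimality.} Because the $P_k$ act on disjoint coordinate blocks, $\partial\big(\sum_k P_k\big)$ is the product of the $\partial P_k$, so the per-block inclusions assemble into $0\in\nabla f(\bar\alpha,\bar\bbeta)+\partial\big(\sum_k P_k\big)(\bar\bbeta)=\partial F(\bar\alpha,\bar\bbeta)$; by convexity of $F$ this is exactly the condition for $(\bar\alpha,\bar\bbeta)$ to be a global minimizer, so the limiting objective value equals $\inf F$ and the blockwise (GLM) Lasso converges to a global solution of \eqref{eq:block_lasso}.

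The step I expect to be the real obstacle is the one silently invoked at the start of move (2): guaranteeing that the iterates actually \emph{have} a limit point, i.e. that the trajectory does not escape to infinity. The clean way to close this is to note that the whole trajectory lies in the sublevel set $\{F\le F(\alpha^{(0)},\bbeta^{(0)})\}$ and to argue this set is bounded; this holds, for instance, whenever every $\lambda_{k,j}>0$ (so that $F$ grows at least like $\|\bbeta\|_1$), or more generally whenever no recession direction of $f$ is simultaneously annihilated by the penalty — a mild condition that should be added to, or deduced from, the standing assumptions. Two smaller wrinkles also deserve care: the block subproblems need not have unique minimizers when $p_k>n$, and in a cyclic scheme the ``other blocks'' are evaluated at mixed iteration indices; both are handled by the subdifferential-calculus argument above together with the standard asymptotic-regularity estimate $\|(\alpha^{(t+1)},\bbeta^{(t+1)})-(\alpha^{(t)},\bbeta^{(t)})\|\to 0$ for exact block minimization, i.e. by appealing to classical block-coordinate-descent convergence theory (e.g., Tseng, 2001), which is tailored precisely to convex objectives with a block-separable non-smooth term and possibly non-unique block minima.
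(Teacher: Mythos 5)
Your proposal is correct and takes essentially the same route as the paper: the paper likewise views the blockwise (GLM) Lasso as cyclic block coordinate descent on a convex objective whose non-smooth $\ell_1$ part is block-separable, and invokes Theorem 4.1 of Tseng (2001) (with regularity supplied by its Lemma 3.1, since the smooth part is Gateaux-differentiable on an open domain) to conclude convergence to a stationary, hence global, minimizer. The existence-of-limit-points obstacle you rightly flag is also left unaddressed in the paper's own proof, which assumes only that the objective is bounded below.
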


\vspace{-4mm}

To prove Proposition~\ref{prop:converge}, we leverage the block coordinate descent view of the blockwise Lasso and apply Theorem 4.1 from \citet{tseng2001convergence} since the objective function is convex and separable with respect to each block $\bbeta_k$. The detailed proof is provided in the Appendix. 

The two main differences between the blockwise Lasso and B-RAIL are the adaptive regularization parameter $\eta$ and the use of stability selection with the randomized Lasso in each block's update. If the estimated support from stability selection converges to a common support across the many block iterations in B-RAIL, then B-RAIL reduces to the blockwise Lasso algorithm, for which we have shown convergence. While there is empirical evidence to believe that stability selection, applied iteratively with the adaptive regularization parameter as in B-RAIL, converges to a common support, proving this theoretically is challenging due to the purely algorithmic and random nature of stability selection. In addition, existing optimization-theoretic frameworks cannot handle adaptive parameters ($\eta$) that are dependent on previous iterates ($\hat{\bbeta}^{(t-1)}$). Developing such a framework to handle both of these issues is  beyond the scope of this work, but we plan to further investigate it in the future. For now, due to these serious difficulties, we rely on our empirical analysis to demonstrate B-RAIL's quick convergence.

\subsection{B-RAIL Summary}
While we have introduced B-RAIL under the regression framework, B-RAIL can be naturally extended to estimate mixed graphical models via a penalized node-wise regression approach \citep{meinshausen2006high}. As in the motivating example in Section~\ref{sec:Challenges}, we can use B-RAIL to estimate the neighborhood of each node separately via penalized regressions and then combine the neighborhoods using an``AND'' or ``OR" rule to obtain the graph.

In either the regression or graph selection setting, our B-RAIL algorithm deliberately takes steps to exploit the practical advantages of existing Lasso-type methods while avoiding the drawbacks described in Section~\ref{sec:Challenges}. For instance, by performing iterative block-by-block estimations, B-RAIL inherits the advantages of performing separate Lassos and avoids the issue of ultra-high-dimensionality. This in turn reduces signal interference between blocks since shrinkage noise is only a concern when $n$ is not sufficiently large relative to $p$. Furthermore, we mitigate the scaling and beta-min problems by engineering adaptive $\ell_1$ penalties in B-RAIL to correct for domain and signal differences between blocks. In this construction, slightly weaker non-Gaussian blocks are penalized less heavily and thus not completely overshadowed by Gaussian blocks. Still, selecting an appropriate amount of $\ell_1$ regularization is challenging in practice, especially due to highly correlated data. B-RAIL thus incorporates randomized stability selection, which is known to be feature selection consistent under stronger and more complex dependencies than can be handled by the Lasso. This boosts the support estimation of correlated features, and together, with the previous components, B-RAIL effectively overcomes the many practical challenges of multi-view feature selection and lends itself to a plethora of data integration applications.

\section{Numerical Studies}
\label{sec:Sims}

We next reinforce the theoretically-guided choices in our B-RAIL construction and demonstrate its effectiveness through extensive simulations. In these simulations, we evaluate B-RAIL against four common Lasso-based parametric methods: (i) Lasso with a global penalty for all blocks, (ii) Lasso with separate penalties for each block, (iii) separate Lassos for each block, and (iv) Adaptive Lasso. For the Adaptive Lasso, we use ridge weights as they are better adapted to handle correlated features. Moreover, to avoid biases from penalty selection methods, we use oracle information to select features in the Lasso-based models. That is, if $k$ is the number of true features in the simulation, we fit the full path of the Lasso and select the first $k$ features. In the case of the Lasso with separate penalties, we select the $k$ features with the largest number of true positives. We do not, however, use oracle information for B-RAIL. Instead, B-RAIL internally selects the number of features using stability selection with the threshold $\tau = 0.8$ as outlined in Algorithm~\ref{alg:alg2}, and we set $\epsilon = 0.001/p$.

To systematically compare these methods, we simulate from various designs of $\X$ with three blocks - namely, a Gaussian $\X_1$, Bernoulli $\X_2$, and Poisson $\X_3$ block - and various types of GLM responses $\y$. Due to the popular use of the Gaussian, Bernoulli, and Poisson GLMs, we run simulations with responses $\y$ from each of these families. For the Gaussian response, we fit the linear model $\y = \X\beta + \epsilon$, where $\epsilon \sim N(0,1)$. For the binary and Poisson responses, we use copula transformations \citep{nelsen1999introduction} to simulate $\y$. 

In addition to these response models for $\y$, we consider four simulation designs for $\X$ to understand model behavior under different assumptions. The four simulations designs are: (i) iid features, (ii) independent features with non-constant variance, (iii) correlated features with covariance structure from a Block Directed Markov Random Field, and (iv) a real data inspired simulation with features from The Cancer Genome Atlas (TCGA) ovarian cancer study. We elaborate on each of these designs below. 

Note in all of the simulations, we set the number of true features in each covariate block to 10, and the magnitudes of the true features are drawn from $Unif(4,10)$ with random sign assignment. However, for the Gaussian block in the non-iid simulations below, we artificially lowered the SNR since we know that recovering continuous features is easier than recovering non-continuous features (see Figure~\ref{fig:SNR}). Unless stated otherwise, we simulate $n = 200$ samples and $p_1 = p_2 = p_3 = 300$ features. We also center and scale the design matrix $\X$ before estimation.

\textit{iid Design.} \hspace{.5mm} For each of the three covariate blocks, we simulate $n = 200$ samples from iid features. Here, $p_1=p_2=p_3=300$ for the high-dimensional design and $p_1=p_2=p_3=100$ for the low-dimensional design. 

\textit{Heteroscedasticity Design.} \hspace{.5mm} In this design, we assume that the features are independent but have non-constant variance. For the Gaussian block, the entries in each column are simulated from the normal distribution $N(0, \sigma^2)$, where $\sigma \sim \Gamma(3, 0.6)$. In the Bernoulli block, each column is simulated independently with entries drawn from $Bern(p)$ with $p \sim Unif(0.2, 0.8)$. Similarly, in the Poisson block, the mean $\lambda$ of each column is drawn from the Gamma distribution $\Gamma(4, 0.6)$ (using the shape/scale parameterization).

\textit{Block Directed Graph Design.} \hspace{.5mm} We next drop the independence assumption and use a Block Directed Markov Random Field (BDMRF) \citep{yang2014general} graph to simulate correlated features. In this case, $\X$ is simulated via Gibbs sampling with the partial ordering of the underlying mixed graph given by $P[X_1, X_2, X_3 ] = P[X_1 |X_2,X_3]P[X_2|X_3]P[X_3]$, where $P[X_1 |X_2,X_3]$ is a pairwise Gaussian conditional random field (CRF), $P[X_2|X_3]$ is a pairwise Ising CRF \citep{ravikumar2010high}, and $P[X_3]$ is a pairwise Poisson Markov Random Field (MRF) \citep{yang2013poisson, yang2012graphical}. We set high correlations for the Gaussian and Poisson blocks and low correlations for the binary block and between block structure.  

\textit{Ovarian Cancer Inspired Simulation Design.} \hspace{.5mm} In an attempt to simulate data closest to real world scenarios, we take the continuous-valued miRNA data, proportion-valued methylation data, and the count-valued RNASeq data from The Cancer Genome Atlas (TCGA) ovarian cancer database \citep{cancer2011integrated} to be our covariates. After merging and preprocessing the TCGA ovarian cancer data (refer to Section~\ref{sec:OVCase} for details), we arrive at $n = 293$ samples and $p_{\text{RNASeq}}= 408$, $p_{\text{miRNA}}= 307$, and $p_{\text{Methyl}}= 301$ features. 

\begin{table*}[h!]
\makebox[\linewidth]{
\scalebox{0.7}{
\begin{tabular}{l  ll  ll  ll  ll}

\addlinespace
\multicolumn{9}{c}{ \begin{normalsize}  iid Case, p=300  \end{normalsize} } \\
\midrule

 & \multicolumn{2}{c}{Total} &  \multicolumn{2}{c}{Continuous} & \multicolumn{2}{c}{ Binary} & \multicolumn{2}{c}{Counts}  \\ 
 \cmidrule(lr){2-3}
\cmidrule(lr){4-5}
\cmidrule(lr){6-7}
\cmidrule(lr){8-9}

 & \multicolumn{1}{c}{TPR} &  \multicolumn{1}{c}{FDP}  &  \multicolumn{1}{c}{TPR}  &  \multicolumn{1}{c}{FDP} &  \multicolumn{1}{c}{TPR}  & \multicolumn{1}{c}{FDP}  &  \multicolumn{1}{c}{TPR}  &  \multicolumn{1}{c}{FDP} \\ 
\midrule
B-RAIL & \bft{1.00 (1.1e-3)} & \bft{0.00 (0.0e-0)} & 1.00 (1.7e-3) & 0.00 (0.0e-0) & 1.00 (1.4e-3) & 0.00 (0.0e-0) & 1.00 (1.4e-3) & 0.00 (0.0e-0) \\ 
  Lasso - $\lambda$ (oracle) & 0.87 (1.2e-3) & 0.12 (1.9e-3) & 0.90 (1.4e-3) & 0.12 (5.3e-3) & 0.81 (2.9e-3) & 0.20 (5.2e-4) & 0.90 (0.0e-0) & 0.03 (4.5e-3) \\ 
  Lasso - $\lambda_k$ (oracle) & 0.92 (1.6e-3) & 0.08 (1.6e-3) & 0.96 (4.8e-3) & 0.00 (0.0e-0) & 0.90 (0.0e-0) & 0.15 (4.5e-3) & 0.90 (0.0e-0) & 0.08 (4.4e-3) \\ 
  Separate Lasso (oracle) & 0.75 (1.6e-3) & 0.24 (4.9e-4) & 0.80 (0.0e-0) & 0.20 (0.0e-0) & 0.70 (1.7e-3) & 0.30 (5.7e-4) & 0.77 (4.8e-3) & 0.21 (1.4e-3) \\ 
  Adaptive Lasso (oracle) & \bft{1.00 (0.0e-0)} & \bft{0.00 (0.0e-0)} & 1.00 (0.0e-0) & 0.00 (0.0e-0) & 1.00 (0.0e-0) & 0.00 (0.0e-0) & 1.00 (0.0e-0) & 0.00 (0.0e-0) \\ 

\addlinespace
\addlinespace
\multicolumn{9}{c}{ \begin{normalsize}  iid Case, p=900  \end{normalsize} } \\
\midrule

  & \multicolumn{2}{c}{Total} &  \multicolumn{2}{c}{Continuous} & \multicolumn{2}{c}{ Binary} & \multicolumn{2}{c}{Counts}  \\ 
 \cmidrule(lr){2-3}
\cmidrule(lr){4-5}
\cmidrule(lr){6-7}
\cmidrule(lr){8-9}

 & \multicolumn{1}{c}{TPR} &  \multicolumn{1}{c}{FDP}  &  \multicolumn{1}{c}{TPR}  &  \multicolumn{1}{c}{FDP} &  \multicolumn{1}{c}{TPR}  & \multicolumn{1}{c}{FDP}  &  \multicolumn{1}{c}{TPR}  &  \multicolumn{1}{c}{FDP} \\ 
\midrule
B-RAIL & \bft{0.94 (7.3e-3)} & \bft{0.12 (1.4e-2)} & 0.98 (5.0e-3) & 0.14 (1.7e-2) & 0.95 (9.9e-3) & 0.08 (1.1e-2) & 0.90 (9.2e-3) & 0.12 (1.5e-2) \\ 
  Lasso - $\lambda$ (oracle) & 0.63 (3.3e-4) & 0.37 (5.0e-4) & 0.80 (0.0e-0) & 0.20 (0.0e-0) & 0.50 (1.0e-3) & 0.50 (1.4e-3) & 0.60 (0.0e-0) & 0.40 (0.0e-0) \\ 
  Lasso - $\lambda_k$ (oracle) & 0.74 (1.7e-3) & 0.26 (1.7e-3) & 0.98 (4.4e-3) & 0.19 (3.3e-3) & 0.63 (7.4e-3) & 0.36 (6.5e-3) & 0.62 (3.9e-3) & 0.20 (7.4e-3) \\ 
  Separate Lasso (oracle) & 0.53 (9.6e-4) & 0.46 (1.3e-3) & 0.60 (0.0e-0) & 0.38 (4.4e-3) & 0.49 (2.9e-3) & 0.51 (1.6e-3) & 0.50 (0.0e-0) & 0.50 (0.0e-0) \\ 
  Adaptive Lasso (oracle) & 0.66 (9.6e-4) & 0.34 (6.9e-4) & 0.79 (3.1e-3) & 0.28 (1.3e-3) & 0.50 (3.5e-3) & 0.44 (1.7e-3) & 0.70 (0.0e-0) & 0.30 (9.0e-4) \\ 
\addlinespace
\addlinespace
\multicolumn{9}{c}{ \begin{normalsize} Non-constant Variance (Heteroscedasticity) \end{normalsize}} \\
\midrule

& \multicolumn{2}{c}{Total} &  \multicolumn{2}{c}{Continuous} & \multicolumn{2}{c}{ Binary} & \multicolumn{2}{c}{Counts}  \\ 
\cmidrule(lr){2-3}
\cmidrule(lr){4-5}
\cmidrule(lr){6-7}
\cmidrule(lr){8-9}
 & \multicolumn{1}{c}{TPR} &  \multicolumn{1}{c}{FDP}  &  \multicolumn{1}{c}{TPR}  &  \multicolumn{1}{c}{FDP} &  \multicolumn{1}{c}{TPR}  & \multicolumn{1}{c}{FDP}  &  \multicolumn{1}{c}{TPR}  &  \multicolumn{1}{c}{FDP} \\ 
\midrule
B-RAIL & \bft{0.97 (3.3e-4)} & \bft{0.01 (1.3e-3)} & 0.90 (1.0e-3) & 0.00 (0.0e-0) & 1.00 (0.0e-0) & 0.00 (0.0e-0) & 1.00 (0.0e-0) & 0.02 (3.7e-3) \\ 
  Lasso - $\lambda$ (oracle) & 0.77 (2.2e-3) & 0.21 (2.2e-3) & 0.88 (3.9e-3) & 0.19 (2.9e-3) & 0.83 (4.7e-3) & 0.06 (5.4e-3) & 0.60 (0.0e-0) & 0.37 (3.6e-3) \\ 
  Lasso - $\lambda_k$ (oracle) & 0.83 (0.0e-0) & 0.17 (0.0e-0) & 0.90 (0.0e-0) & 0.11 (2.8e-3) & 1.00 (0.0e-0) & 0.18 (2.4e-3) & 0.60 (0.0e-0) & 0.22 (4.9e-3) \\ 
  Separate Lasso (oracle) & 0.56 (1.4e-3) & 0.43 (1.1e-3) & 0.58 (4.3e-3) & 0.41 (1.9e-3) & 0.80 (0.0e-0) & 0.19 (2.3e-3) & 0.30 (0.0e-0) & 0.70 (1.4e-3) \\ 
  Adaptive Lasso (oracle) & 0.86 (1.3e-3) & 0.13 (1.2e-3) & 0.90 (1.0e-3) & 0.10 (1.8e-3) & 1.00 (0.0e-0) & 0.11 (3.4e-3) & 0.69 (3.4e-3) & 0.20 (5.6e-3) \\ 
\addlinespace
\addlinespace
\multicolumn{9}{c}{ \begin{normalsize} Block Directed Graph Structure \end{normalsize}} \\
\midrule

& \multicolumn{2}{c}{Total} &  \multicolumn{2}{c}{Continuous} & \multicolumn{2}{c}{ Binary} & \multicolumn{2}{c}{Counts}  \\ 
\cmidrule(lr){2-3}
\cmidrule(lr){4-5}
\cmidrule(lr){6-7}
\cmidrule(lr){8-9}
 & \multicolumn{1}{c}{TPR} &  \multicolumn{1}{c}{FDP}  &  \multicolumn{1}{c}{TPR}  &  \multicolumn{1}{c}{FDP} &  \multicolumn{1}{c}{TPR}  & \multicolumn{1}{c}{FDP}  &  \multicolumn{1}{c}{TPR}  &  \multicolumn{1}{c}{FDP} \\ 
\midrule
B-RAIL & \bft{0.88 (4.8e-3)} & \bft{0.16 (1.1e-2)} & 0.79 (4.8e-3) & 0.12 (1.4e-2) & 0.96 (6.8e-3) & 0.12 (7.0e-3) & 0.89 (5.2e-3) & 0.22 (1.4e-2) \\ 
  Lasso - $\lambda$ (oracle) & 0.72 (1.9e-3) & 0.27 (1.9e-3) & 0.88 (5.8e-3) & 0.17 (3.2e-3) & 1.00 (0.0e-0) & 0.28 (3.5e-3) & 0.30 (0.0e-0) & 0.42 (4.2e-3) \\ 
  Lasso - $\lambda_k$ (oracle) & 0.77 (0.0e-0) & 0.23 (0.0e-0) & 1.00 (0.0e-0) & 0.20 (4.0e-3) & 1.00 (0.0e-0) & 0.24 (5.4e-3) & 0.30 (0.0e-0) & 0.26 (1.5e-2) \\ 
  Separate Lasso (oracle) & 0.51 (2.0e-3) & 0.46 (1.9e-3) & 0.79 (3.1e-3) & 0.18 (4.6e-3) & 0.55 (5.0e-3) & 0.42 (2.8e-3) & 0.20 (0.0e-0) & 0.79 (1.3e-3) \\ 
  Adaptive Lasso (oracle) & 0.70 (0.0e-0) & 0.30 (3.4e-4) & 0.80 (0.0e-0) & 0.20 (0.0e-0) & 1.00 (0.0e-0) & 0.29 (1.3e-3) & 0.30 (0.0e-0) & 0.49 (3.0e-3) \\ 

\addlinespace
\addlinespace
\multicolumn{9}{c}{ \begin{normalsize} OV Data \end{normalsize}} \\
\midrule

& \multicolumn{2}{c}{Total} &  \multicolumn{2}{c}{Continuous} & \multicolumn{2}{c}{Proportion} & \multicolumn{2}{c}{Counts}  \\ 
\cmidrule(lr){2-3}
\cmidrule(lr){4-5}
\cmidrule(lr){6-7}
\cmidrule(lr){8-9}
 & \multicolumn{1}{c}{TPR} &  \multicolumn{1}{c}{FDP}  &  \multicolumn{1}{c}{TPR}  &  \multicolumn{1}{c}{FDP} &  \multicolumn{1}{c}{TPR}  & \multicolumn{1}{c}{FDP}  &  \multicolumn{1}{c}{TPR}  &  \multicolumn{1}{c}{FDP} \\ 
\midrule
B-RAIL &\bft{0.95 (2.9e-3)} & \bft{0.11 (4.7e-3)} & 0.85 (8.7e-3) & 0.23 (4.9e-3) & 1.00 (0.0e-0) & 0.09 (8.2e-3) & 1.00 (1.0e-3) & 0.03 (4.4e-3) \\ 
  Lasso - $\lambda$ (oracle) & 0.57 (8.5e-4) & 0.43 (1.2e-3) & 0.80 (0.0e-0) & 0.38 (0.0e-0) & 0.41 (2.6e-3) & 0.31 (6.2e-3) & 0.50 (0.0e-0) & 0.54 (1.5e-3) \\ 
  Lasso - $\lambda_k$ (oracle) & 0.65 (1.6e-3) & 0.35 (1.6e-3) & 0.80 (1.0e-3) & 0.37 (2.3e-3) & 0.51 (3.6e-3) & 0.00 (1.7e-3) & 0.63 (4.6e-3) & 0.48 (4.6e-3) \\ 
  Separate Lasso (oracle) & 0.40 (1.6e-3) & 0.60 (1.3e-3) & 0.58 (4.1e-3) & 0.41 (1.9e-3) & 0.30 (0.0e-0) & 0.70 (0.0e-0) & 0.30 (2.0e-3) & 0.69 (2.3e-3) \\ 
  Adaptive Lasso (oracle) & 0.93 (1.2e-3) & 0.09 (1.1e-3) & 0.80 (0.0e-0) & 0.12 (2.8e-3) & 0.98 (3.6e-3) & 0.00 (1.0e-3) & 1.00 (0.0e-0) & 0.09 (1.7e-3) \\

\specialrule{1pt}{1pt}{1pt}

\end{tabular}
}}
\vspace{3pt}
\caption{We compare various selection methods under 5 different simulation scenarios. For each scenario, we simulate $\X$ with three blocks (continuous, binary, counts) and a Gaussian response $\y$. We report the true positive rate (TPR) and false discovery proportion (FDP) for overall feature recovery and individual block recoveries, averaged across 200 runs with standard errors in parentheses. We bold the best overall $TPR*(1-FDP)$ values for each simulation scenario. Note that we used oracle information for the Lasso-type methods. 
 }\label{tab:Gaus}
\end{table*}

\begin{table*}[h!]
\makebox[\linewidth]{
\scalebox{0.7}{
\begin{tabular}{l  ll  ll  ll  ll}

\addlinespace
\multicolumn{9}{c}{ \begin{normalsize} Binary Response Block Directed Graph Structure \end{normalsize}} \\
\midrule

& \multicolumn{2}{c}{Total} &  \multicolumn{2}{c}{Continuous} & \multicolumn{2}{c}{ Binary} & \multicolumn{2}{c}{Counts}  \\ 
\cmidrule(lr){2-3}
\cmidrule(lr){4-5}
\cmidrule(lr){6-7}
\cmidrule(lr){8-9}
 & \multicolumn{1}{c}{TPR} &  \multicolumn{1}{c}{FDP}  &  \multicolumn{1}{c}{TPR}  &  \multicolumn{1}{c}{FDP} &  \multicolumn{1}{c}{TPR}  & \multicolumn{1}{c}{FDP}  &  \multicolumn{1}{c}{TPR}  &  \multicolumn{1}{c}{FDP} \\ 
 \midrule
B-RAIL & \bft{0.81 (8.6e-3)} & \bft{0.07 (5.2e-3)} & 0.80 (0.0e-0) & 0.04 (5.4e-3) & 0.97 (5.7e-3) & 0.09 (6.2e-3) & 0.68 (2.1e-2) & 0.07 (1.2e-2) \\ 
  Lasso - $\lambda$ (oracle) & 0.70 (0.0e-0) & 0.29 (1.3e-3) & 0.90 (0.0e-0) & 0.23 (4.1e-3) & 0.90 (0.0e-0) & 0.37 (2.0e-3) & 0.30 (0.0e-0) & 0.18 (1.3e-2) \\ 
  Lasso - $\lambda$ (oracle) & 0.70 (8.5e-4) & 0.30 (8.5e-4) & 0.89 (2.6e-3) & 0.21 (4.3e-3) & 0.90 (0.0e-0) & 0.27 (3.6e-3) & 0.30 (0.0e-0) & 0.52 (7.4e-3) \\ 
  Separate Lasso (oracle) & 0.50 (1.3e-3) & 0.46 (2.3e-3) & 0.80 (0.0e-0) & 0.20 (0.0e-0) & 0.51 (3.9e-3) & 0.41 (7.0e-3) & 0.20 (0.0e-0) & 0.79 (1.7e-3) \\ 
  Adaptive Lasso (oracle) & 0.70 (9.6e-4) & 0.29 (1.4e-3) & 0.81 (2.9e-3) & 0.20 (1.3e-3) & 1.00 (0.0e-0) & 0.36 (2.5e-3) & 0.30 (0.0e-0) & 0.27 (5.7e-3) \\ 
    
\addlinespace
\multicolumn{9}{c}{ \begin{normalsize} Poisson Response Block Directed Graph Structure \end{normalsize}} \\
\midrule

& \multicolumn{2}{c}{Total} &  \multicolumn{2}{c}{Continuous} & \multicolumn{2}{c}{ Binary} & \multicolumn{2}{c}{Counts}  \\ 
\cmidrule(lr){2-3}
\cmidrule(lr){4-5}
\cmidrule(lr){6-7}
\cmidrule(lr){8-9}
 & \multicolumn{1}{c}{TPR} &  \multicolumn{1}{c}{FDP}  &  \multicolumn{1}{c}{TPR}  &  \multicolumn{1}{c}{FDP} &  \multicolumn{1}{c}{TPR}  & \multicolumn{1}{c}{FDP}  &  \multicolumn{1}{c}{TPR}  &  \multicolumn{1}{c}{FDP} \\ 
\midrule
B-RAIL & \bft{0.81 (8.6e-3)} & \bft{0.07 (5.2e-3)} & 0.80 (0.0e-0) & 0.04 (5.4e-3) & 0.97 (5.7e-3) & 0.09 (6.2e-3) & 0.68 (2.1e-2) & 0.07 (1.2e-2) \\ 
  Lasso - $\lambda$ (oracle) & 0.70 (0.0e-0) & 0.29 (1.3e-3) & 0.90 (0.0e-0) & 0.23 (4.1e-3) & 0.90 (0.0e-0) & 0.37 (2.0e-3) & 0.30 (0.0e-0) & 0.18 (1.3e-2) \\ 
  Lasso - $\lambda$ (oracle) & 0.70 (8.5e-4) & 0.30 (8.5e-4) & 0.89 (2.6e-3) & 0.21 (4.3e-3) & 0.90 (0.0e-0) & 0.27 (3.6e-3) & 0.30 (0.0e-0) & 0.52 (7.4e-3) \\ 
  Separate Lasso (oracle) & 0.50 (1.3e-3) & 0.46 (2.3e-3) & 0.80 (0.0e-0) & 0.20 (0.0e-0) & 0.51 (3.9e-3) & 0.41 (7.0e-3) & 0.20 (0.0e-0) & 0.79 (1.7e-3) \\ 
  Adaptive Lasso (oracle) & 0.70 (9.6e-4) & 0.29 (1.4e-3) & 0.81 (2.9e-3) & 0.20 (1.3e-3) & 1.00 (0.0e-0) & 0.36 (2.5e-3) & 0.30 (0.0e-0) & 0.27 (5.7e-3) \\ 

\specialrule{1pt}{1pt}{1pt}
\end{tabular}
 }}
 \vspace{3pt}
\caption{We compare various selection methods under the block directed graph simulation design with binary responses and with Poisson responses. We report the TPR and FDP for feature recovery, averaged across 200 runs with standard errors in parentheses. We bold the best overall $TPR*(1-FDP)$ values for each simulation scenario. 
} \label{tab:BinPois}
\end{table*}

Under each of these simulation scenarios, we evaluate the performance of B-RAIL and the oracle Lasso-type methods by reporting the true positive rate (TPR) and false discovery proportion (FDP) for overall feature recovery and individual block recoveries. Due to the large number of features, we use FDP, defined as the number of false positives divided by total the number of recovered non-zero features, instead of the false discovery rate. 

We summarize the results of our simulations with Gaussian responses in Table~\ref{tab:Gaus} and those with binary and Poisson responses in Table~\ref{tab:BinPois}. Note that for the binary and Poisson responses, we show the block directed graph results here and provide the other simulation results in the Appendix. We also highlight in bold the TPR/FDP combination with the highest TPR*(1-FDP) value for overall recovery. In almost all scenarios, the results in Table~\ref{tab:Gaus} and Table~\ref{tab:BinPois} indicate that B-RAIL (with no oracle information) is able to achieve a higher TPR and lower FDP than its competitive Lasso-type methods with oracle information.

When oracle information is unavailable, model selection techniques can introduce additional errors and further complicate feature selection. Table~\ref{tab:Methods} shows one such case and compares the block directed graph simulation performance of B-RAIL against the Lasso-type methods using 5-fold cross-validation, extended BIC, and stability selection to select the penalty parameters. We also include the oracle estimators for the same set of simulations to emphasize the large decrease in performance when the Lasso-type methods do not have oracle information. These simulations indicate that cross-validation tends to over-select the number of features in the model while extended BIC under-selects, and stability selection performs the best but pales in comparison to oracle selection. In contrast, B-RAIL when initialized to an over-selection using the pre-specified sparsity level of $0.2p_k$ outperforms the Lasso-type methods even when oracle selection is used for these competitive methods.  Additional simulations, confirming the strong empirical performance of B-RAIL, are provided in the Appendix.

\begin{table*}[h!]
\makebox[\linewidth]{
\scalebox{0.7}{
\begin{tabular}{l  ll  ll  ll  ll}
  
  \multicolumn{9}{c}{ \begin{normalsize} Block Directed Graph Structure \end{normalsize}} \\
\midrule

& \multicolumn{2}{c}{Total} &  \multicolumn{2}{c}{Continuous} & \multicolumn{2}{c}{ Binary} & \multicolumn{2}{c}{Counts}  \\ 
\cmidrule(lr){2-3}
\cmidrule(lr){4-5}
\cmidrule(lr){6-7}
\cmidrule(lr){8-9}
 & \multicolumn{1}{c}{TPR} &  \multicolumn{1}{c}{FDP}  &  \multicolumn{1}{c}{TPR}  &  \multicolumn{1}{c}{FDP} &  \multicolumn{1}{c}{TPR}  & \multicolumn{1}{c}{FDP}  &  \multicolumn{1}{c}{TPR}  &  \multicolumn{1}{c}{FDP} \\ 
\midrule
B-RAIL & \bft{0.86 (1.2e-2)} & \bft{0.20 (2.8e-2)} & 0.78 (9.2e-3) & 0.15 (3.6e-2) & 0.93 (1.9e-2) & 0.16 (1.8e-2) & 0.88 (1.2e-2) & 0.29 (4.3e-2) \\ 
  Lasso - $\lambda$ (oracle) & 0.72 (3.5e-3) & 0.27 (4.8e-3) & 0.87 (1.1e-2) & 0.20 (5.0e-3) & 1.00 (0.0e-0) & 0.40 (1.6e-2) & 0.30 (0.0e-0) & 0.21 (5.0e-3) \\ 
  Lasso - $\lambda_k$ (oracle) & 0.77 (0.0e-0) & 0.23 (0.0e-0) & 1.00 (0.0e-0) & 0.24 (1.4e-2) & 1.00 (0.0e-0) & 0.32 (2.7e-2) & 0.30 (0.0e-0) & 0.14 (2.2e-2) \\ 
  Separate Lasso (oracle) & 0.51 (5.0e-3) & 0.44 (5.6e-3) & 0.79 (8.2e-3) & 0.18 (1.2e-2) & 0.55 (1.1e-2) & 0.41 (5.0e-3) & 0.20 (0.0e-0) & 0.73 (1.1e-2) \\ 
  Adaptive Lasso (oracle) & 0.70 (0.0e-0) & 0.30 (2.3e-3) & 0.80 (0.0e-0) & 0.20 (0.0e-0) & 1.00 (0.0e-0) & 0.42 (9.2e-3) & 0.30 (0.0e-0) & 0.27 (1.1e-2) \\ 

\addlinespace
\addlinespace
\multicolumn{9}{c}{ \begin{normalsize} 5 Fold Cross Validation  \end{normalsize} } \\
\midrule

  & \multicolumn{2}{c}{Total} &  \multicolumn{2}{c}{Continuous} & \multicolumn{2}{c}{ Binary} & \multicolumn{2}{c}{Counts}  \\ 
 \cmidrule(lr){2-3}
\cmidrule(lr){4-5}
\cmidrule(lr){6-7}
\cmidrule(lr){8-9}

 & \multicolumn{1}{c}{TPR} &  \multicolumn{1}{c}{FDP}  &  \multicolumn{1}{c}{TPR}  &  \multicolumn{1}{c}{FDP} &  \multicolumn{1}{c}{TPR}  & \multicolumn{1}{c}{FDP}  &  \multicolumn{1}{c}{TPR}  &  \multicolumn{1}{c}{FDP} \\ 
\midrule
Lasso - $\lambda$ & 0.98 (3.0e-3) & 0.62 (4.1e-3) & 1.00 (0.0e-0) & 0.58 (6.3e-3) & 1.00 (0.0e-0) & 0.63 (4.0e-3) & 0.95 (8.9e-3) & 0.63 (3.9e-3) \\ 
  Lasso - $\lambda_k$ & 0.60 (2.3e-3) & 0.17 (2.2e-3) & 0.80 (0.0e-0) & 0.13 (3.6e-3) & 0.69 (6.8e-3) & 0.07 (6.3e-3) & 0.30 (0.0e-0) & 0.40 (1.0e-3) \\ 
  Separate Lasso & 0.37 (8.1e-3) & 0.28 (1.5e-2) & 0.58 (9.9e-3) & 0.05 (9.9e-3) & 0.54 (1.8e-2) & 0.38 (2.0e-2) & 0.00 (0.0e-0) & 0.47 (5.0e-2) \\ 
  Adaptive Lasso & 0.73 (3.0e-3) & 0.37 (7.8e-3) & 0.87 (6.8e-3) & 0.23 (6.4e-3) & 1.00 (2.0e-3) & 0.39 (8.1e-3) & 0.31 (2.9e-3) & 0.55 (9.5e-3) \\ 
\addlinespace
\addlinespace
\multicolumn{9}{c}{ \begin{normalsize} Extended BIC \end{normalsize}} \\
\midrule

& \multicolumn{2}{c}{Total} &  \multicolumn{2}{c}{Continuous} & \multicolumn{2}{c}{ Binary} & \multicolumn{2}{c}{Counts}  \\ 
\cmidrule(lr){2-3}
\cmidrule(lr){4-5}
\cmidrule(lr){6-7}
\cmidrule(lr){8-9}
 & \multicolumn{1}{c}{TPR} &  \multicolumn{1}{c}{FDP}  &  \multicolumn{1}{c}{TPR}  &  \multicolumn{1}{c}{FDP} &  \multicolumn{1}{c}{TPR}  & \multicolumn{1}{c}{FDP}  &  \multicolumn{1}{c}{TPR}  &  \multicolumn{1}{c}{FDP} \\ 
\midrule
Lasso - $\lambda$ & 0.03 (0.0e-0) & 0.00 (0.0e-0) & 0.10 (0.0e-0) & 0.00 (0.0e-0) & 0.00 (0.0e-0) & 0.00 (0.0e-0) & 0.00 (0.0e-0) & 0.00 (0.0e-0) \\ 
  Lasso - $\lambda_k$ & 0.58 (2.0e-3) & 0.18 (3.1e-3) & 0.80 (0.0e-0) & 0.15 (5.3e-3) & 0.64 (5.9e-3) & 0.04 (6.1e-3) & 0.30 (0.0e-0) & 0.42 (4.0e-3) \\ 
  Separate Lasso & 0.18 (1.7e-3) & 0.07 (7.9e-3) & 0.50 (1.0e-3) & 0.00 (0.0e-0) & 0.04 (4.9e-3) & 0.00 (0.0e-0) & 0.00 (0.0e-0) & 0.47 (5.0e-2) \\ 
  Adaptive Lasso & 0.30 (0.0e-0) & 0.00 (0.0e-0) & 0.50 (0.0e-0) & 0.00 (0.0e-0) & 0.40 (0.0e-0) & 0.00 (0.0e-0) & 0.00 (0.0e-0) & 0.00 (0.0e-0) \\ 
\addlinespace
\addlinespace
\multicolumn{9}{c}{ \begin{normalsize} Stability Selection \end{normalsize}} \\
\midrule

& \multicolumn{2}{c}{Total} &  \multicolumn{2}{c}{Continuous} & \multicolumn{2}{c}{ Binary} & \multicolumn{2}{c}{Counts}  \\ 
\cmidrule(lr){2-3}
\cmidrule(lr){4-5}
\cmidrule(lr){6-7}
\cmidrule(lr){8-9}
 & \multicolumn{1}{c}{TPR} &  \multicolumn{1}{c}{FDP}  &  \multicolumn{1}{c}{TPR}  &  \multicolumn{1}{c}{FDP} &  \multicolumn{1}{c}{TPR}  & \multicolumn{1}{c}{FDP}  &  \multicolumn{1}{c}{TPR}  &  \multicolumn{1}{c}{FDP} \\ 
\midrule
Lasso -$\lambda$ & 0.67 (2.4e-3) & 0.01 (1.8e-3) & 0.80 (0.0e-0) & 0.02 (4.2e-3) & 0.92 (6.9e-3) & 0.00 (9.1e-4) & 0.29 (2.4e-3) & 0.00 (0.0e-0) \\ 
  Lasso - $\lambda_k$  & 0.58 (2.7e-3) & 0.04 (2.3e-3) & 0.80 (0.0e-0) & 0.09 (4.7e-3) & 0.66 (6.5e-3) & 0.00 (0.0e-0) & 0.28 (4.3e-3) & 0.00 (2.5e-3) \\ 
  Separate Lasso & 0.42 (3.9e-3) & 0.28 (6.6e-3) & 0.69 (2.7e-3) & 0.13 (7.2e-3) & 0.55 (1.1e-2) & 0.39 (8.8e-3) & 0.00 (0.0e-0) & 0.29 (4.6e-2) \\ 
  Adaptive Lasso & 0.65 (1.7e-3) & 0.08 (2.6e-3) & 0.80 (0.0e-0) & 0.11 (0.0e-0) & 0.84 (5.1e-3) & 0.07 (6.0e-3) & 0.30 (0.0e-0) & 0.01 (4.3e-3) \\ 

\specialrule{1pt}{1pt}{1pt}

\end{tabular}
}}
\vspace{3pt}
\caption{We compare feature recovery for B-RAIL and Lasso-type methods with various model selection methods. Here, we simulate from the block directed graph simulation design with Gaussian responses and report the TPR and FDP, averaged across 200 runs with standard errors in parentheses. We highlight the best overall $TPR*(1-FDP)$ values in bold. Note for stability selection, we initialize $\lambda$ using the $\lambda$ selected by CV.
 } \label{tab:Methods}
\end{table*}

\section{Case Study: Integrative Genomics of Ovarian Cancer}
\label{sec:OVCase}



One promising practical application for our research on multi-view feature selection lies in integrative cancer genomics. Here, scientists seek to integrate data from multiple sources of high-throughput genomic data to more holistically model the genomic systems in cancer cells, leading to a better understanding of disease mechanisms and possible therapies.    

In this case study, we seek to integrate three different types of genomic data to study how epigenetics and short RNAs influence the gene regulatory system in ovarian cancer.  Specifically, we are interested in discovering miRNAs and CpG sites, which affect the gene expression of well-known oncogenes in ovarian cancer and hence can serve as potential drug targets for blocking or decreasing the expression of these oncogenes. Driven by this goal of discovering potential drug targets, we use our proposed B-RAIL method to estimate the integrative ovarian cancer gene regulatory network with the specific intention of identifying miRNAs and CpG sites that are directly linked to known oncogenes of ovarian cancer.

In this investigation, we integrate the following three data sets: (1) count-valued gene expression measured via RNASeq, (2) continuous (Gaussian) miRNA expression, and (3) proportion-valued DNA methylation data from The Cancer Genome Atlas (TCGA) ovarian cancer study, which is publicly available \citep{cancer2011integrated}. The TCGA data originally contained $19,990$ genes, $27,578$ CpG sites, and $799$ miRNAs but only $n = 293$ common patients across all three data sets of interest. We hence reduced the number of features to manageable sizes by first filtering features according to their association with several important clinical outcomes - \emph{survival} via a univariate cox model, \emph{chemo-resistance} via a univariate logistic model, and \emph{recurrence} via a univariate logistic model. In addition, we transformed the RNASeq data using the Kolmogorov-Smirnov Test ($\alpha = 0.262$) to alleviate the problem of very large counts (up to 20,000). This preprocessing yielded $p_1 = 408$ genes, $p_2 = 301$ CpG sites, and $p_3 = 307$ miRNAs in the RNASeq, methylation, and miRNA data sets, respectively. Lastly, per the recommendation of scientists, we included 20 additional highly mutated genes that were experimentally identified as important in ovarian cancer, resulting in $p_1 = 428$ genes in the RNASeq data set. 

To estimate the integrated ovarian cancer network, we fit a Block Directed Markov Random Field (BDMRF) model \citep{yang2014general} using B-RAIL to estimate the neighborhood of each node in the graph. Note that since miRNAs and methylation are both gene regulatory mechanisms, miRNAs and methylation can affect expression levels (measured via RNASeq), but the converse is not possible. To agree with this known physical mechanism, we set the partial ordering of the mixed graph underlying BDMRF as $P[X_1, X_2, X_3] = P[X_1 |X_2,X_3]P[X_2]P[X_3]$, where $P[X_2]$ is a pairwise Ising MRF for the proportion-valued methylation data, $P[X_3]$ is a pairwise Gaussian MRF for the continuous miRNA data, and $P[X_1 |X_2,X_3]$ is a pairwise Poisson CRF for the count-valued RNASeq data. However, we recall that only negative conditional dependencies are permitted in the Poisson MRF and CRF models. Since this constraint is unrealistic for genomics data, we fit a Sub-Linear Poisson CRF, in lieu of the usual Poisson CRF, to allow for both positive and negative conditional dependencies \cite{yang2013poisson}. Under this specified BDMRF model, we employ node-wise neighborhood selection \citep{meinshausen2006high, yang2015graphical} using B-RAIL to learn the edge structure of the integrated network.


%
%

\begin{figure}[h!]
\centering
\begin{minipage}{1\textwidth}
  \centering
  \includegraphics[width=.65\linewidth]{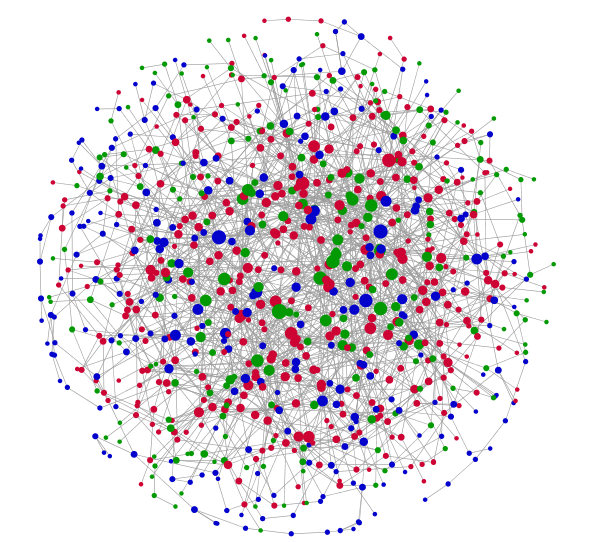}
  \captionof{figure}{We present the integrated ovarian cancer genetic network estimated by the B-RAIL algorithm. The blue nodes denote miRNAs, green nodes denote CpG sites, red nodes denote gene expression via RNASeq, and the size of each node is proportional to the number of connected first neighbors.}
  \label{fig:OVnetwork}
\end{minipage}%
\end{figure}

\begin{figure}[h!]
\centering
\begin{tabular}[c]{cc}
\begin{subfigure}[b]{0.5\linewidth}
\includegraphics[width=\textwidth]{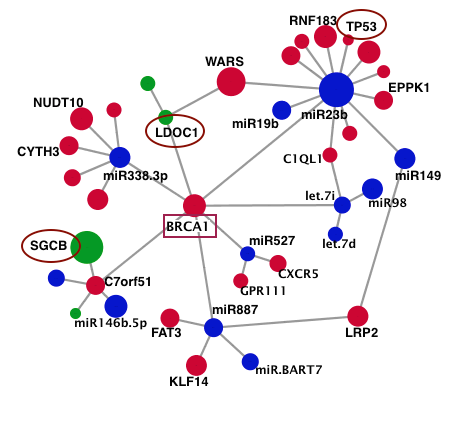}
  \captionof{figure}{Sub-network for BRCA1 gene}
\end{subfigure}&
\hspace*{\fill} 
\begin{subfigure}[b]{0.5\linewidth}
\includegraphics[width=\textwidth]{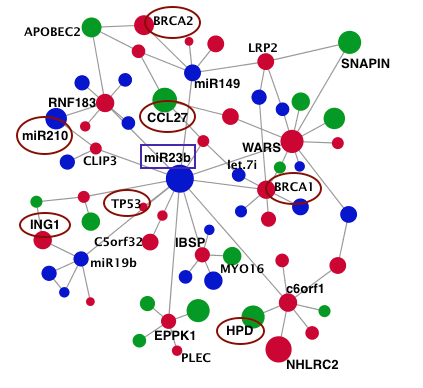}
  \captionof{figure}{ Sub-network for miR23b}
\end{subfigure}\\
  \end{tabular}    
  \caption{We zoom in on the sub-networks for two known biomarkers, which have been previously implicated in ovarian cancer studies. Key mutated cancer biomarkers such as miR23b and BRCA1 are found to have many inter-connections to biomarkers, circled in red, that are consistent with the cancer literature.  \citep{buchholtz2014epigenetic, obermayr2010assessment, giannakakis2008mir, freier2016role, gao2009nk, toyama1999suppression, tong2017down}}
  \label{fig:Zoom}
\end{figure}

Our overall BRAIL-estimated network is presented in Figure~\ref{fig:OVnetwork}, and in Figure~\ref{fig:Zoom}, we more closely examine the relationships between the oncogenes, miRNAs, and CpG sites by zooming in on the sub-networks for the well-known oncogene, BRCA1, and its direct neighbor, miRNA23b. Both BRCA1 and miRNA23b are well-known biomarkers and have been implicated in several ovarian cancer studies \citep{antoniou2003average, king2003breast, brca1994strong, li2014microrna, yan2016mir, geng2012methylation}. Moreover, miRNA23b is known to play a key role in p53 signaling (via TP53) \citep{boren2009micrornas}, agreeing with the estimated edge between the TP53 oncogene and miRNA23b in Figure~\ref{fig:Zoom}(b).

Aside from this link however, the estimated edges between genes, CpG sites, and miRNAs in Figure~\ref{fig:OVnetwork} are largely unexplored and unknown by researchers since B-RAIL is one of the first practical approaches for multi-view feature selection. Nevertheless, we can partially validate our B-RAIL-estimated network by highlighting the many genes with verified connections in the ovarian cancer and cancer proliferation/suppression literatures. In Figure~\ref{fig:Zoom}, we circle this collection of implicated genes, which includes LDOC1, SGCB, and miRNA210 \citep{buchholtz2014epigenetic, obermayr2010assessment, giannakakis2008mir}.

As we have noted, there is substantial evidence in the scientific literature, suggesting that our proposed B-RAIL algorithm successfully identified promising candidates as well as known biomarkers involved in ovarian cancer. By taking into account the relationships between genes, miRNAs, and CpG sites, our integrative analysis via B-RAIL leads to valuable insights beyond a single biomarker type and to novel discoveries of direct connections between miRNAs, CpG sites, and known oncogenes, which may aid the development of targeted drug therapies for ovarian cancer. This is the first integrative analysis of its kind, and future experiments studying the connections between known ovarian cancer oncogenes and candidate miRNAs and CpG sites would be of great value to validate our findings.

\section{Discussion}
\label{sec:Disc}

Though we have primarily focused on applications to integrative genomics in this work, B-RAIL is not limited to this context. B-RAIL can be applied to any field that yields high-dimensional multi-view data, and with the rapid advances in technologies, we expect B-RAIL to have a growing and far-reaching impact in fields such as imaging genetics, national security, climate studies, spatial statistics, Internet data, marketing, and economics. B-RAIL is also a versatile tool that can be used to for any sparse regression or graph selection problem in this multi-view context.

In addition to developing an effective data integration tool for multi-view feature selection, our work addresses the many difficulties of performing multi-view feature selection in practice. These practical challenges were severely under-studied prior to this work, but we partially resolve this gap, identifying four root challenges which interact with one another to impede recovery. Throughout our investigation of these practical challenges, we provide strong empirical evidence of the existence as well as the adverse consequences of such challenges. However, the theoretical underpinnings of these issues are still unknown. Understanding exactly how challenges such as shrinkage noise and the beta-min condition are influenced by varying domains and signals would be of great benefit to the field of data integration as a whole. We also highlight that while the Lasso has been well-studied under Gaussianity and idealized assumptions, the increasing abundance of correlated non-Gaussian data in multi-view settings requires a greater push for theoretical studies on feature selection with heterogeneous data and the GLM Lasso.


Overall, we have demonstrated many challenges posed by multi-view feature selection, and in our investigation of these challenges, we opened up new avenues for future theoretical work to rigorously understand how the heterogeneity of multi-view data complicates feature selection. Driven by these challenges and the ineffectiveness of existing methods, we developed a practical solution to overcome the current challenges. Our method, B-RAIL, is one of the first practical tools for multi-view feature selection and is grounded in deep theoretical foundations. With its versatility and strong empirical performance, B-RAIL facilitates impactful integrative analyses across a broad spectrum of fields.

\section*{Acknowledgements}

Y.B. acknowledges support from the NIH/NCI T32 CA096520 training program in Biostatistics for Cancer Research, grant number 5T32CA09652011. T.T. acknowledges support from the NSF Graduate Research Fellowship Program DGE-1752814 and ARO, grant number W911NF1710005. GA acknowledges support from NSF DMS-1554821, NSF NeuroNex-1707400, and NSF DMS-1264058. We also thank Zhandong Liu, Ying-Wooi Wan, and Matthew L. Anderson at Baylor College of Medicine for thoughtful discussions related to this work.

\bibliographystyle{imsart-nameyear}
\bibliography{paper3bib}

\begin{thebibliography}{53}

\bibitem[\protect\citeauthoryear{Acar, Kolda and Dunlavy}{2011}]{acar2011all}
\begin{barticle}[author]
\bauthor{\bsnm{Acar},~\bfnm{Evrim}\binits{E.}},
  \bauthor{\bsnm{Kolda},~\bfnm{Tamara~G}\binits{T.~G.}} \AND
  \bauthor{\bsnm{Dunlavy},~\bfnm{Daniel~M}\binits{D.~M.}}
(\byear{2011}).
\btitle{All-at-once optimization for coupled matrix and tensor factorizations}.
\bjournal{arXiv preprint arXiv:1105.3422}.
\end{barticle}
\endbibitem

\bibitem[\protect\citeauthoryear{Allen}{1974}]{allen1974cv}
\begin{barticle}[author]
\bauthor{\bsnm{Allen},~\bfnm{David~M}\binits{D.~M.}}
(\byear{1974}).
\btitle{The relationship between variable selection and data agumentation and a
  method for prediction}.
\bjournal{Technometrics}
\bvolume{16}
\bpages{125--127}.
\end{barticle}
\endbibitem

\bibitem[\protect\citeauthoryear{Antoniou et~al.}{2003}]{antoniou2003average}
\begin{barticle}[author]
\bauthor{\bsnm{Antoniou},~\bfnm{Anthony}\binits{A.}},
  \bauthor{\bsnm{Pharoah},~\bfnm{PDP}\binits{P.}},
  \bauthor{\bsnm{Narod},~\bfnm{Steven}\binits{S.}},
  \bauthor{\bsnm{Risch},~\bfnm{Harvey~A}\binits{H.~A.}},
  \bauthor{\bsnm{Eyfjord},~\bfnm{Jorunn~E}\binits{J.~E.}},
  \bauthor{\bsnm{Hopper},~\bfnm{JL}\binits{J.}},
  \bauthor{\bsnm{Loman},~\bfnm{Niklas}\binits{N.}},
  \bauthor{\bsnm{Olsson},~\bfnm{H{\aa}kan}\binits{H.}},
  \bauthor{\bsnm{Johannsson},~\bfnm{O}\binits{O.}},
  \bauthor{\bsnm{Borg},~\bfnm{{\AA}ke}\binits{{\AA}.}} \betal{et~al.}
(\byear{2003}).
\btitle{Average risks of breast and ovarian cancer associated with BRCA1 or
  BRCA2 mutations detected in case series unselected for family history: a
  combined analysis of 22 studies}.
\bjournal{The American Journal of Human Genetics}
\bvolume{72}
\bpages{1117--1130}.
\end{barticle}
\endbibitem

\bibitem[\protect\citeauthoryear{Boren et~al.}{2009}]{boren2009micrornas}
\begin{barticle}[author]
\bauthor{\bsnm{Boren},~\bfnm{Todd}\binits{T.}},
  \bauthor{\bsnm{Xiong},~\bfnm{Yin}\binits{Y.}},
  \bauthor{\bsnm{Hakam},~\bfnm{Ardeshir}\binits{A.}},
  \bauthor{\bsnm{Wenham},~\bfnm{Robert}\binits{R.}},
  \bauthor{\bsnm{Apte},~\bfnm{Sachin}\binits{S.}},
  \bauthor{\bsnm{Chan},~\bfnm{Gina}\binits{G.}},
  \bauthor{\bsnm{Kamath},~\bfnm{Siddharth~G}\binits{S.~G.}},
  \bauthor{\bsnm{Chen},~\bfnm{Dung-Tsa}\binits{D.-T.}},
  \bauthor{\bsnm{Dressman},~\bfnm{Holly}\binits{H.}} \AND
  \bauthor{\bsnm{Lancaster},~\bfnm{Johnathan~M}\binits{J.~M.}}
(\byear{2009}).
\btitle{MicroRNAs and their target messenger RNAs associated with ovarian
  cancer response to chemotherapy}.
\bjournal{Gynecologic oncology}
\bvolume{113}
\bpages{249--255}.
\end{barticle}
\endbibitem

\bibitem[\protect\citeauthoryear{BRCA}{1994}]{brca1994strong}
\begin{barticle}[author]
\bauthor{\bsnm{BRCA},~\bfnm{Susceptibility~Gene}\binits{S.~G.}}
(\byear{1994}).
\btitle{A strong candidate for the breast and ovarian cancer susceptibility
  gene BRCA1}.
\bjournal{Science}
\bvolume{266}
\bpages{7}.
\end{barticle}
\endbibitem

\bibitem[\protect\citeauthoryear{Buchholtz
  et~al.}{2014}]{buchholtz2014epigenetic}
\begin{barticle}[author]
\bauthor{\bsnm{Buchholtz},~\bfnm{Marie-Luise}\binits{M.-L.}},
  \bauthor{\bsnm{Br{\"u}ning},~\bfnm{Ansgar}\binits{A.}},
  \bauthor{\bsnm{Mylonas},~\bfnm{Ioannis}\binits{I.}} \AND
  \bauthor{\bsnm{J{\"u}ckstock},~\bfnm{Julia}\binits{J.}}
(\byear{2014}).
\btitle{Epigenetic silencing of the LDOC1 tumor suppressor gene in ovarian
  cancer cells}.
\bjournal{Archives of gynecology and obstetrics}
\bvolume{290}
\bpages{149--154}.
\end{barticle}
\endbibitem

\bibitem[\protect\citeauthoryear{B{\"u}hlmann
  et~al.}{2013}]{buhlmann2013statistical}
\begin{barticle}[author]
\bauthor{\bsnm{B{\"u}hlmann},~\bfnm{Peter}\binits{P.}} \betal{et~al.}
(\byear{2013}).
\btitle{Statistical significance in high-dimensional linear models}.
\bjournal{Bernoulli}
\bvolume{19}
\bpages{1212--1242}.
\end{barticle}
\endbibitem

\bibitem[\protect\citeauthoryear{Bunea et~al.}{2007}]{bunea2007sparsity}
\begin{barticle}[author]
\bauthor{\bsnm{Bunea},~\bfnm{Florentina}\binits{F.}},
  \bauthor{\bsnm{Tsybakov},~\bfnm{Alexandre}\binits{A.}},
  \bauthor{\bsnm{Wegkamp},~\bfnm{Marten}\binits{M.}} \betal{et~al.}
(\byear{2007}).
\btitle{Sparsity oracle inequalities for the Lasso}.
\bjournal{Electronic Journal of Statistics}
\bvolume{1}
\bpages{169--194}.
\end{barticle}
\endbibitem

\bibitem[\protect\citeauthoryear{Chen and Chen}{2012}]{chen2012extended}
\begin{barticle}[author]
\bauthor{\bsnm{Chen},~\bfnm{Jiahua}\binits{J.}} \AND
  \bauthor{\bsnm{Chen},~\bfnm{Zehua}\binits{Z.}}
(\byear{2012}).
\btitle{Extended BIC for small-n-large-P sparse GLM}.
\bjournal{Statistica Sinica}
\bpages{555--574}.
\end{barticle}
\endbibitem

\bibitem[\protect\citeauthoryear{Chen, Witten and
  Shojaie}{2014}]{chen2014selection}
\begin{barticle}[author]
\bauthor{\bsnm{Chen},~\bfnm{Shizhe}\binits{S.}},
  \bauthor{\bsnm{Witten},~\bfnm{Daniela~M}\binits{D.~M.}} \AND
  \bauthor{\bsnm{Shojaie},~\bfnm{Ali}\binits{A.}}
(\byear{2014}).
\btitle{Selection and estimation for mixed graphical models}.
\bjournal{Biometrika}
\bvolume{102}
\bpages{47--64}.
\end{barticle}
\endbibitem

\bibitem[\protect\citeauthoryear{Cheng et~al.}{2013}]{cheng2013high}
\begin{barticle}[author]
\bauthor{\bsnm{Cheng},~\bfnm{Jie}\binits{J.}},
  \bauthor{\bsnm{Li},~\bfnm{Tianxi}\binits{T.}},
  \bauthor{\bsnm{Levina},~\bfnm{Elizaveta}\binits{E.}} \AND
  \bauthor{\bsnm{Zhu},~\bfnm{Ji}\binits{J.}}
(\byear{2013}).
\btitle{High-dimensional mixed graphical models}.
\bjournal{arXiv preprint arXiv:1304.2810}.
\end{barticle}
\endbibitem

\bibitem[\protect\citeauthoryear{Fan and Li}{2001}]{fan2001variable}
\begin{barticle}[author]
\bauthor{\bsnm{Fan},~\bfnm{Jianqing}\binits{J.}} \AND
  \bauthor{\bsnm{Li},~\bfnm{Runze}\binits{R.}}
(\byear{2001}).
\btitle{Variable selection via nonconcave penalized likelihood and its oracle
  properties}.
\bjournal{Journal of the American statistical Association}
\bvolume{96}
\bpages{1348--1360}.
\end{barticle}
\endbibitem

\bibitem[\protect\citeauthoryear{Freier}{2016}]{freier2016role}
\begin{bphdthesis}[author]
\bauthor{\bsnm{Freier},~\bfnm{Christoph}\binits{C.}}
(\byear{2016}).
\btitle{Role of regulatory T cells and associated chemokines in human
  gynecological tumors},
\btype{PhD thesis},
\bpublisher{lmu}.
\end{bphdthesis}
\endbibitem

\bibitem[\protect\citeauthoryear{Gao et~al.}{2009}]{gao2009nk}
\begin{barticle}[author]
\bauthor{\bsnm{Gao},~\bfnm{Jian-Qing}\binits{J.-Q.}},
  \bauthor{\bsnm{Tsuda},~\bfnm{Yasuhiro}\binits{Y.}},
  \bauthor{\bsnm{Han},~\bfnm{Min}\binits{M.}},
  \bauthor{\bsnm{Xu},~\bfnm{Dong-Hang}\binits{D.-H.}},
  \bauthor{\bsnm{Kanagawa},~\bfnm{Naoko}\binits{N.}},
  \bauthor{\bsnm{Hatanaka},~\bfnm{Yutaka}\binits{Y.}},
  \bauthor{\bsnm{Tani},~\bfnm{Yoichi}\binits{Y.}},
  \bauthor{\bsnm{Mizuguchi},~\bfnm{Hiroyuki}\binits{H.}},
  \bauthor{\bsnm{Tsutsumi},~\bfnm{Yasuo}\binits{Y.}},
  \bauthor{\bsnm{Mayumi},~\bfnm{Tadanori}\binits{T.}} \betal{et~al.}
(\byear{2009}).
\btitle{NK cells are migrated and indispensable in the anti-tumor activity
  induced by CCL27 gene therapy}.
\bjournal{Cancer immunology, immunotherapy}
\bvolume{58}
\bpages{291}.
\end{barticle}
\endbibitem

\bibitem[\protect\citeauthoryear{Geng et~al.}{2012}]{geng2012methylation}
\begin{barticle}[author]
\bauthor{\bsnm{Geng},~\bfnm{Jiong}\binits{J.}},
  \bauthor{\bsnm{Luo},~\bfnm{Hui}\binits{H.}},
  \bauthor{\bsnm{Pu},~\bfnm{Yi}\binits{Y.}},
  \bauthor{\bsnm{Zhou},~\bfnm{Zhimin}\binits{Z.}},
  \bauthor{\bsnm{Wu},~\bfnm{Xiaoming}\binits{X.}},
  \bauthor{\bsnm{Xu},~\bfnm{Wenhui}\binits{W.}} \AND
  \bauthor{\bsnm{Yang},~\bfnm{Zhengxiang}\binits{Z.}}
(\byear{2012}).
\btitle{Methylation mediated silencing of miR-23b expression and its role in
  glioma stem cells}.
\bjournal{Neuroscience letters}
\bvolume{528}
\bpages{185--189}.
\end{barticle}
\endbibitem

\bibitem[\protect\citeauthoryear{Giannakakis et~al.}{2008}]{giannakakis2008mir}
\begin{barticle}[author]
\bauthor{\bsnm{Giannakakis},~\bfnm{Antonis}\binits{A.}},
  \bauthor{\bsnm{Sandaltzopoulos},~\bfnm{Raphael}\binits{R.}},
  \bauthor{\bsnm{Greshock},~\bfnm{Joel}\binits{J.}},
  \bauthor{\bsnm{Liang},~\bfnm{Shun}\binits{S.}},
  \bauthor{\bsnm{Huang},~\bfnm{Jia}\binits{J.}},
  \bauthor{\bsnm{Hasegawa},~\bfnm{Kosei}\binits{K.}},
  \bauthor{\bsnm{Li},~\bfnm{Chunsheng}\binits{C.}},
  \bauthor{\bsnm{O'Brien-Jenkins},~\bfnm{Ann}\binits{A.}},
  \bauthor{\bsnm{Katsaros},~\bfnm{Dionyssios}\binits{D.}},
  \bauthor{\bsnm{Weber},~\bfnm{Barbara~L}\binits{B.~L.}} \betal{et~al.}
(\byear{2008}).
\btitle{miR-210 links hypoxia with cell cycle regulation and is deleted in
  human epithelial ovarian cancer}.
\bjournal{Cancer biology \& therapy}
\bvolume{7}
\bpages{255--264}.
\end{barticle}
\endbibitem

\bibitem[\protect\citeauthoryear{Hall and Llinas}{1997}]{hall1997introduction}
\begin{barticle}[author]
\bauthor{\bsnm{Hall},~\bfnm{David~L}\binits{D.~L.}} \AND
  \bauthor{\bsnm{Llinas},~\bfnm{James}\binits{J.}}
(\byear{1997}).
\btitle{An introduction to multisensor data fusion}.
\bjournal{Proceedings of the IEEE}
\bvolume{85}
\bpages{6--23}.
\end{barticle}
\endbibitem

\bibitem[\protect\citeauthoryear{Haslbeck and Waldorp}{2015}]{haslbeck2015mgm}
\begin{barticle}[author]
\bauthor{\bsnm{Haslbeck},~\bfnm{Jonas}\binits{J.}} \AND
  \bauthor{\bsnm{Waldorp},~\bfnm{Lourens~J}\binits{L.~J.}}
(\byear{2015}).
\btitle{mgm: Structure Estimation for time-varying mixed graphical models in
  high-dimensional data}.
\bjournal{arXiv preprint arXiv:1510.06871}.
\end{barticle}
\endbibitem

\bibitem[\protect\citeauthoryear{Jalali et~al.}{2011}]{jalali2011learning}
\begin{binproceedings}[author]
\bauthor{\bsnm{Jalali},~\bfnm{Ali}\binits{A.}},
  \bauthor{\bsnm{Ravikumar},~\bfnm{Pradeep}\binits{P.}},
  \bauthor{\bsnm{Vasuki},~\bfnm{Vishvas}\binits{V.}} \AND
  \bauthor{\bsnm{Sanghavi},~\bfnm{Sujay}\binits{S.}}
(\byear{2011}).
\btitle{On learning discrete graphical models using group-sparse
  regularization}.
In \bbooktitle{Proceedings of the Fourteenth International Conference on
  Artificial Intelligence and Statistics}
\bpages{378--387}.
\end{binproceedings}
\endbibitem

\bibitem[\protect\citeauthoryear{King et~al.}{2003}]{king2003breast}
\begin{barticle}[author]
\bauthor{\bsnm{King},~\bfnm{Mary-Claire}\binits{M.-C.}},
  \bauthor{\bsnm{Marks},~\bfnm{Joan~H}\binits{J.~H.}},
  \bauthor{\bsnm{Mandell},~\bfnm{Jessica~B}\binits{J.~B.}} \betal{et~al.}
(\byear{2003}).
\btitle{Breast and ovarian cancer risks due to inherited mutations in BRCA1 and
  BRCA2}.
\bjournal{Science}
\bvolume{302}
\bpages{643--646}.
\end{barticle}
\endbibitem

\bibitem[\protect\citeauthoryear{Lee and Hastie}{2013}]{lee2013structure}
\begin{binproceedings}[author]
\bauthor{\bsnm{Lee},~\bfnm{Jason}\binits{J.}} \AND
  \bauthor{\bsnm{Hastie},~\bfnm{Trevor}\binits{T.}}
(\byear{2013}).
\btitle{Structure learning of mixed graphical models}.
In \bbooktitle{Artificial Intelligence and Statistics}
\bpages{388--396}.
\end{binproceedings}
\endbibitem

\bibitem[\protect\citeauthoryear{Li et~al.}{2014}]{li2014microrna}
\begin{barticle}[author]
\bauthor{\bsnm{Li},~\bfnm{Weiping}\binits{W.}},
  \bauthor{\bsnm{Liu},~\bfnm{Zhongyu}\binits{Z.}},
  \bauthor{\bsnm{Chen},~\bfnm{Li}\binits{L.}},
  \bauthor{\bsnm{Zhou},~\bfnm{Li}\binits{L.}} \AND
  \bauthor{\bsnm{Yao},~\bfnm{Yuanqing}\binits{Y.}}
(\byear{2014}).
\btitle{MicroRNA-23b is an independent prognostic marker and suppresses ovarian
  cancer progression by targeting runt-related transcription factor-2}.
\bjournal{FEBS letters}
\bvolume{588}
\bpages{1608--1615}.
\end{barticle}
\endbibitem

\bibitem[\protect\citeauthoryear{Liu, Roeder and
  Wasserman}{2010}]{liu2010stability}
\begin{binproceedings}[author]
\bauthor{\bsnm{Liu},~\bfnm{Han}\binits{H.}},
  \bauthor{\bsnm{Roeder},~\bfnm{Kathryn}\binits{K.}} \AND
  \bauthor{\bsnm{Wasserman},~\bfnm{Larry}\binits{L.}}
(\byear{2010}).
\btitle{Stability approach to regularization selection (stars) for high
  dimensional graphical models}.
In \bbooktitle{Advances in neural information processing systems}
\bpages{1432--1440}.
\end{binproceedings}
\endbibitem

\bibitem[\protect\citeauthoryear{Meinshausen and
  B{\"u}hlmann}{2006}]{meinshausen2006high}
\begin{barticle}[author]
\bauthor{\bsnm{Meinshausen},~\bfnm{Nicolai}\binits{N.}} \AND
  \bauthor{\bsnm{B{\"u}hlmann},~\bfnm{Peter}\binits{P.}}
(\byear{2006}).
\btitle{High-dimensional graphs and variable selection with the lasso}.
\bjournal{The annals of statistics}
\bpages{1436--1462}.
\end{barticle}
\endbibitem

\bibitem[\protect\citeauthoryear{Meinshausen and
  B{\"u}hlmann}{2010}]{meinshausen2010stability}
\begin{barticle}[author]
\bauthor{\bsnm{Meinshausen},~\bfnm{Nicolai}\binits{N.}} \AND
  \bauthor{\bsnm{B{\"u}hlmann},~\bfnm{Peter}\binits{P.}}
(\byear{2010}).
\btitle{Stability selection}.
\bjournal{Journal of the Royal Statistical Society: Series B (Statistical
  Methodology)}
\bvolume{72}
\bpages{417--473}.
\end{barticle}
\endbibitem

\bibitem[\protect\citeauthoryear{Meinshausen and
  Yu}{2009}]{meinshausen2009lasso}
\begin{barticle}[author]
\bauthor{\bsnm{Meinshausen},~\bfnm{Nicolai}\binits{N.}} \AND
  \bauthor{\bsnm{Yu},~\bfnm{Bin}\binits{B.}}
(\byear{2009}).
\btitle{Lasso-type recovery of sparse representations for high-dimensional
  data}.
\bjournal{The Annals of Statistics}
\bpages{246--270}.
\end{barticle}
\endbibitem

\bibitem[\protect\citeauthoryear{Nelsen}{1999}]{nelsen1999introduction}
\begin{bincollection}[author]
\bauthor{\bsnm{Nelsen},~\bfnm{Roger~B}\binits{R.~B.}}
(\byear{1999}).
\btitle{Introduction}.
In \bbooktitle{An Introduction to Copulas}
\bpages{1--4}.
\bpublisher{Springer}.
\end{bincollection}
\endbibitem

\bibitem[\protect\citeauthoryear{{The Cancer Genome Atlas Research
  Network}}{2011}]{cancer2011integrated}
\begin{barticle}[author]
\bauthor{\bsnm{{The Cancer Genome Atlas Research Network}}}
(\byear{2011}).
\btitle{Integrated genomic analyses of ovarian carcinoma}.
\bjournal{Nature}
\bvolume{474}
\bpages{609}.
\end{barticle}
\endbibitem

\bibitem[\protect\citeauthoryear{Obermayr
  et~al.}{2010}]{obermayr2010assessment}
\begin{barticle}[author]
\bauthor{\bsnm{Obermayr},~\bfnm{Eva}\binits{E.}},
  \bauthor{\bsnm{Sanchez-Cabo},~\bfnm{Fatima}\binits{F.}},
  \bauthor{\bsnm{Tea},~\bfnm{Muy-Kheng~M}\binits{M.-K.~M.}},
  \bauthor{\bsnm{Singer},~\bfnm{Christian~F}\binits{C.~F.}},
  \bauthor{\bsnm{Krainer},~\bfnm{Michael}\binits{M.}},
  \bauthor{\bsnm{Fischer},~\bfnm{Michael~B}\binits{M.~B.}},
  \bauthor{\bsnm{Sehouli},~\bfnm{Jalid}\binits{J.}},
  \bauthor{\bsnm{Reinthaller},~\bfnm{Alexander}\binits{A.}},
  \bauthor{\bsnm{Horvat},~\bfnm{Reinhard}\binits{R.}},
  \bauthor{\bsnm{Heinze},~\bfnm{Georg}\binits{G.}} \betal{et~al.}
(\byear{2010}).
\btitle{Assessment of a six gene panel for the molecular detection of
  circulating tumor cells in the blood of female cancer patients}.
\bjournal{BMC cancer}
\bvolume{10}
\bpages{666}.
\end{barticle}
\endbibitem

\bibitem[\protect\citeauthoryear{Ravikumar et~al.}{2010}]{ravikumar2010high}
\begin{barticle}[author]
\bauthor{\bsnm{Ravikumar},~\bfnm{Pradeep}\binits{P.}},
  \bauthor{\bsnm{Wainwright},~\bfnm{Martin~J}\binits{M.~J.}},
  \bauthor{\bsnm{Lafferty},~\bfnm{John~D}\binits{J.~D.}} \betal{et~al.}
(\byear{2010}).
\btitle{High-dimensional Ising model selection using $\ell_1$-regularized
  logistic regression}.
\bjournal{The Annals of Statistics}
\bvolume{38}
\bpages{1287--1319}.
\end{barticle}
\endbibitem

\bibitem[\protect\citeauthoryear{Shao}{1993}]{shao1993linear}
\begin{barticle}[author]
\bauthor{\bsnm{Shao},~\bfnm{Jun}\binits{J.}}
(\byear{1993}).
\btitle{Linear model selection by cross-validation}.
\bjournal{Journal of the American statistical Association}
\bvolume{88}
\bpages{486--494}.
\end{barticle}
\endbibitem

\bibitem[\protect\citeauthoryear{Shen, Olshen and
  Ladanyi}{2009}]{shen2009integrative}
\begin{barticle}[author]
\bauthor{\bsnm{Shen},~\bfnm{Ronglai}\binits{R.}},
  \bauthor{\bsnm{Olshen},~\bfnm{Adam~B}\binits{A.~B.}} \AND
  \bauthor{\bsnm{Ladanyi},~\bfnm{Marc}\binits{M.}}
(\byear{2009}).
\btitle{Integrative clustering of multiple genomic data types using a joint
  latent variable model with application to breast and lung cancer subtype
  analysis}.
\bjournal{Bioinformatics}
\bvolume{25}
\bpages{2906--2912}.
\end{barticle}
\endbibitem

\bibitem[\protect\citeauthoryear{Stone}{1974}]{stone1974cv}
\begin{barticle}[author]
\bauthor{\bsnm{Stone},~\bfnm{Mervyn}\binits{M.}}
(\byear{1974}).
\btitle{Cross-validatory choice and assessment of statistical predictions}.
\bjournal{Journal of the Royal Statistical Society: Series B (Methodological)}
\bvolume{36}
\bpages{111--133}.
\end{barticle}
\endbibitem

\bibitem[\protect\citeauthoryear{Su, Bogdan and Candes}{2015}]{su2015false}
\begin{barticle}[author]
\bauthor{\bsnm{Su},~\bfnm{Weijie}\binits{W.}},
  \bauthor{\bsnm{Bogdan},~\bfnm{Malgorzata}\binits{M.}} \AND
  \bauthor{\bsnm{Candes},~\bfnm{Emmanuel}\binits{E.}}
(\byear{2015}).
\btitle{False discoveries occur early on the lasso path}.
\bjournal{arXiv preprint arXiv:1511.01957}.
\end{barticle}
\endbibitem

\bibitem[\protect\citeauthoryear{Tibshirani}{1996}]{tibshirani1996regression}
\begin{barticle}[author]
\bauthor{\bsnm{Tibshirani},~\bfnm{Robert}\binits{R.}}
(\byear{1996}).
\btitle{Regression shrinkage and selection via the lasso}.
\bjournal{Journal of the Royal Statistical Society. Series B (Methodological)}
\bpages{267--288}.
\end{barticle}
\endbibitem

\bibitem[\protect\citeauthoryear{Tibshirani et~al.}{2013}]{tibshirani2013lasso}
\begin{barticle}[author]
\bauthor{\bsnm{Tibshirani},~\bfnm{Ryan~J}\binits{R.~J.}} \betal{et~al.}
(\byear{2013}).
\btitle{The lasso problem and uniqueness}.
\bjournal{Electronic Journal of Statistics}
\bvolume{7}
\bpages{1456--1490}.
\end{barticle}
\endbibitem

\bibitem[\protect\citeauthoryear{Tong et~al.}{2017}]{tong2017down}
\begin{bmisc}[author]
\bauthor{\bsnm{Tong},~\bfnm{Man}\binits{M.}},
  \bauthor{\bsnm{Wong},~\bfnm{Tin~Lok}\binits{T.~L.}},
  \bauthor{\bsnm{Luk},~\bfnm{Steve Tin-Chi}\binits{S.~T.-C.}},
  \bauthor{\bsnm{Che},~\bfnm{No{\'e}lia}\binits{N.}},
  \bauthor{\bsnm{Wong},~\bfnm{Kai~Yau}\binits{K.~Y.}},
  \bauthor{\bsnm{Fung},~\bfnm{Tsun~Ming}\binits{T.~M.}},
  \bauthor{\bsnm{Guan},~\bfnm{Xin-Yuan}\binits{X.-Y.}},
  \bauthor{\bsnm{Lee},~\bfnm{Nikki~P}\binits{N.~P.}},
  \bauthor{\bsnm{Yuan},~\bfnm{Yun-Fei}\binits{Y.-F.}},
  \bauthor{\bsnm{Lee},~\bfnm{Terence~K}\binits{T.~K.}} \betal{et~al.}
(\byear{2017}).
\btitle{Down-regulation of 4-hydroxyphenylpyruvate dioxygenate (HPD)
  contributes to the pathogenesis of hepatocellular carcinoma (HCC) through
  ERK/BCL-2 signalling activation}.
\end{bmisc}
\endbibitem

\bibitem[\protect\citeauthoryear{Toyama et~al.}{1999}]{toyama1999suppression}
\begin{barticle}[author]
\bauthor{\bsnm{Toyama},~\bfnm{Tatsuya}\binits{T.}},
  \bauthor{\bsnm{Iwase},~\bfnm{Hirotaka}\binits{H.}},
  \bauthor{\bsnm{Watson},~\bfnm{Peter}\binits{P.}},
  \bauthor{\bsnm{Muzik},~\bfnm{Huong}\binits{H.}},
  \bauthor{\bsnm{Saettler},~\bfnm{Elizabeth}\binits{E.}},
  \bauthor{\bsnm{Magliocco},~\bfnm{Anthony}\binits{A.}},
  \bauthor{\bsnm{DiFrancesco},~\bfnm{Lisa}\binits{L.}},
  \bauthor{\bsnm{Forsyth},~\bfnm{Peter}\binits{P.}},
  \bauthor{\bsnm{Garkavtsev},~\bfnm{Igor}\binits{I.}},
  \bauthor{\bsnm{Kobayashi},~\bfnm{Shunzo}\binits{S.}} \betal{et~al.}
(\byear{1999}).
\btitle{Suppression of ING1 expression in sporadic breast cancer.}
\bjournal{Oncogene}
\bvolume{18}.
\end{barticle}
\endbibitem

\bibitem[\protect\citeauthoryear{Tseng}{2001}]{tseng2001convergence}
\begin{barticle}[author]
\bauthor{\bsnm{Tseng},~\bfnm{Paul}\binits{P.}}
(\byear{2001}).
\btitle{Convergence of a block coordinate descent method for nondifferentiable
  minimization}.
\bjournal{Journal of optimization theory and applications}
\bvolume{109}
\bpages{475--494}.
\end{barticle}
\endbibitem

\bibitem[\protect\citeauthoryear{Wainwright}{2009}]{wainwright2009sharp}
\begin{barticle}[author]
\bauthor{\bsnm{Wainwright},~\bfnm{Martin~J}\binits{M.~J.}}
(\byear{2009}).
\btitle{Sharp thresholds for High-Dimensional and noisy sparsity recovery using
  $\ell_1$ - Constrained Quadratic Programming (Lasso)}.
\bjournal{IEEE transactions on information theory}
\bvolume{55}
\bpages{2183--2202}.
\end{barticle}
\endbibitem

\bibitem[\protect\citeauthoryear{Wang, Wainwright and
  Ramchandran}{2010}]{wang2010information}
\begin{barticle}[author]
\bauthor{\bsnm{Wang},~\bfnm{Wei}\binits{W.}},
  \bauthor{\bsnm{Wainwright},~\bfnm{Martin~J}\binits{M.~J.}} \AND
  \bauthor{\bsnm{Ramchandran},~\bfnm{Kannan}\binits{K.}}
(\byear{2010}).
\btitle{Information-theoretic limits on sparse signal recovery: Dense versus
  sparse measurement matrices}.
\bjournal{IEEE Transactions on Information Theory}
\bvolume{56}
\bpages{2967--2979}.
\end{barticle}
\endbibitem

\bibitem[\protect\citeauthoryear{Yan et~al.}{2016}]{yan2016mir}
\begin{barticle}[author]
\bauthor{\bsnm{Yan},~\bfnm{Jing}\binits{J.}},
  \bauthor{\bsnm{Jiang},~\bfnm{Jing-yi}\binits{J.-y.}},
  \bauthor{\bsnm{Meng},~\bfnm{Xiao-Na}\binits{X.-N.}},
  \bauthor{\bsnm{Xiu},~\bfnm{Yin-Ling}\binits{Y.-L.}} \AND
  \bauthor{\bsnm{Zong},~\bfnm{Zhi-Hong}\binits{Z.-H.}}
(\byear{2016}).
\btitle{MiR-23b targets cyclin G1 and suppresses ovarian cancer tumorigenesis
  and progression}.
\bjournal{Journal of Experimental \& Clinical Cancer Research}
\bvolume{35}
\bpages{31}.
\end{barticle}
\endbibitem

\bibitem[\protect\citeauthoryear{Yang et~al.}{2012}]{yang2012graphical}
\begin{binproceedings}[author]
\bauthor{\bsnm{Yang},~\bfnm{Eunho}\binits{E.}},
  \bauthor{\bsnm{Allen},~\bfnm{Genevera}\binits{G.}},
  \bauthor{\bsnm{Liu},~\bfnm{Zhandong}\binits{Z.}} \AND
  \bauthor{\bsnm{Ravikumar},~\bfnm{Pradeep~K}\binits{P.~K.}}
(\byear{2012}).
\btitle{Graphical models via generalized linear models}.
In \bbooktitle{Advances in Neural Information Processing Systems}
\bpages{1358--1366}.
\end{binproceedings}
\endbibitem

\bibitem[\protect\citeauthoryear{Yang et~al.}{2013}]{yang2013poisson}
\begin{binproceedings}[author]
\bauthor{\bsnm{Yang},~\bfnm{Eunho}\binits{E.}},
  \bauthor{\bsnm{Ravikumar},~\bfnm{Pradeep~K}\binits{P.~K.}},
  \bauthor{\bsnm{Allen},~\bfnm{Genevera~I}\binits{G.~I.}} \AND
  \bauthor{\bsnm{Liu},~\bfnm{Zhandong}\binits{Z.}}
(\byear{2013}).
\btitle{On Poisson graphical models}.
In \bbooktitle{Advances in Neural Information Processing Systems}
\bpages{1718--1726}.
\end{binproceedings}
\endbibitem

\bibitem[\protect\citeauthoryear{Yang et~al.}{2014a}]{yang2014general}
\begin{barticle}[author]
\bauthor{\bsnm{Yang},~\bfnm{Eunho}\binits{E.}},
  \bauthor{\bsnm{Ravikumar},~\bfnm{Pradeep}\binits{P.}},
  \bauthor{\bsnm{Allen},~\bfnm{Genevera~I}\binits{G.~I.}},
  \bauthor{\bsnm{Baker},~\bfnm{Yulia}\binits{Y.}},
  \bauthor{\bsnm{Wan},~\bfnm{Ying-Wooi}\binits{Y.-W.}} \AND
  \bauthor{\bsnm{Liu},~\bfnm{Zhandong}\binits{Z.}}
(\byear{2014}a).
\btitle{A General Framework for Mixed Graphical Models}.
\bjournal{arXiv preprint arXiv:1411.0288}.
\end{barticle}
\endbibitem

\bibitem[\protect\citeauthoryear{Yang et~al.}{2014b}]{yang2014mixed}
\begin{binproceedings}[author]
\bauthor{\bsnm{Yang},~\bfnm{Eunho}\binits{E.}},
  \bauthor{\bsnm{Baker},~\bfnm{Yulia}\binits{Y.}},
  \bauthor{\bsnm{Ravikumar},~\bfnm{Pradeep}\binits{P.}},
  \bauthor{\bsnm{Allen},~\bfnm{Genevera}\binits{G.}} \AND
  \bauthor{\bsnm{Liu},~\bfnm{Zhandong}\binits{Z.}}
(\byear{2014}b).
\btitle{Mixed graphical models via exponential families}.
In \bbooktitle{Artificial Intelligence and Statistics}
\bpages{1042--1050}.
\end{binproceedings}
\endbibitem

\bibitem[\protect\citeauthoryear{Yang et~al.}{2015}]{yang2015graphical}
\begin{barticle}[author]
\bauthor{\bsnm{Yang},~\bfnm{Eunho}\binits{E.}},
  \bauthor{\bsnm{Ravikumar},~\bfnm{Pradeep}\binits{P.}},
  \bauthor{\bsnm{Allen},~\bfnm{Genevera~I}\binits{G.~I.}} \AND
  \bauthor{\bsnm{Liu},~\bfnm{Zhandong}\binits{Z.}}
(\byear{2015}).
\btitle{Graphical models via univariate exponential family distributions.}
\bjournal{Journal of Machine Learning Research}
\bvolume{16}
\bpages{3813--3847}.
\end{barticle}
\endbibitem

\bibitem[\protect\citeauthoryear{Yu}{2013}]{yu2013stability}
\begin{barticle}[author]
\bauthor{\bsnm{Yu},~\bfnm{Bin}\binits{B.}}
(\byear{2013}).
\btitle{Stability}.
\bjournal{Bernoulli}
\bvolume{19}
\bpages{1484--1500}.
\bdoi{10.3150/13-BEJSP14}
\end{barticle}
\endbibitem

\bibitem[\protect\citeauthoryear{Yuan and Lin}{2007}]{yuan2007model}
\begin{barticle}[author]
\bauthor{\bsnm{Yuan},~\bfnm{Ming}\binits{M.}} \AND
  \bauthor{\bsnm{Lin},~\bfnm{Yi}\binits{Y.}}
(\byear{2007}).
\btitle{Model selection and estimation in the Gaussian graphical model}.
\bjournal{Biometrika}
\bvolume{94}
\bpages{19--35}.
\end{barticle}
\endbibitem

\bibitem[\protect\citeauthoryear{Zhang et~al.}{2010}]{zhang2010nearly}
\begin{barticle}[author]
\bauthor{\bsnm{Zhang},~\bfnm{Cun-Hui}\binits{C.-H.}} \betal{et~al.}
(\byear{2010}).
\btitle{Nearly unbiased variable selection under minimax concave penalty}.
\bjournal{The Annals of statistics}
\bvolume{38}
\bpages{894--942}.
\end{barticle}
\endbibitem

\bibitem[\protect\citeauthoryear{Zhang and Huang}{2008}]{zhang2008sparsity}
\begin{barticle}[author]
\bauthor{\bsnm{Zhang},~\bfnm{Cun-Hui}\binits{C.-H.}} \AND
  \bauthor{\bsnm{Huang},~\bfnm{Jian}\binits{J.}}
(\byear{2008}).
\btitle{The sparsity and bias of the lasso selection in high-dimensional linear
  regression}.
\bjournal{The Annals of Statistics}
\bpages{1567--1594}.
\end{barticle}
\endbibitem

\bibitem[\protect\citeauthoryear{Zhao and Yu}{2006}]{zhao2006model}
\begin{barticle}[author]
\bauthor{\bsnm{Zhao},~\bfnm{Peng}\binits{P.}} \AND
  \bauthor{\bsnm{Yu},~\bfnm{Bin}\binits{B.}}
(\byear{2006}).
\btitle{On model selection consistency of Lasso}.
\bjournal{Journal of Machine learning research}
\bvolume{7}
\bpages{2541--2563}.
\end{barticle}
\endbibitem

\bibitem[\protect\citeauthoryear{Zou}{2006}]{zou2006adaptive}
\begin{barticle}[author]
\bauthor{\bsnm{Zou},~\bfnm{Hui}\binits{H.}}
(\byear{2006}).
\btitle{The adaptive lasso and its oracle properties}.
\bjournal{Journal of the American statistical association}
\bvolume{101}
\bpages{1418--1429}.
\end{barticle}
\endbibitem

\end{thebibliography}

\newpage

\appendix

\section{B-RAIL Convergence} \label{app:converge}

\begin{figure}[h!]
\centering
\begin{tabular}[c]{c}
\begin{subfigure}[b]{0.7\linewidth}
\includegraphics[width=\textwidth]{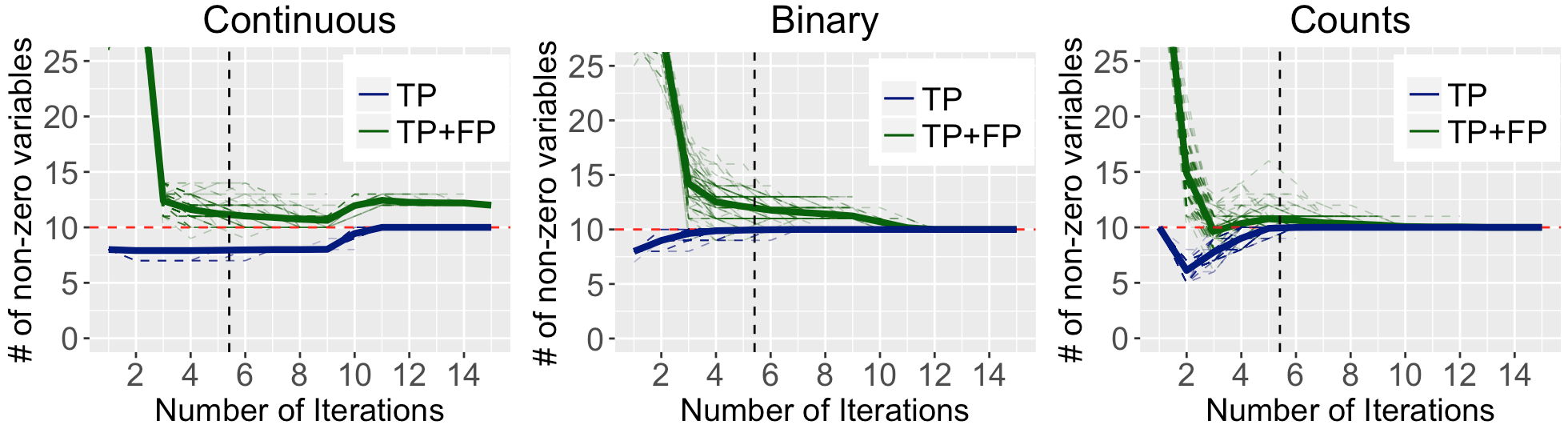}
  \captionof{figure}{Gaussian response, OV data}
\end{subfigure}\\
\begin{subfigure}[b]{0.7\linewidth}
\includegraphics[width=\textwidth]{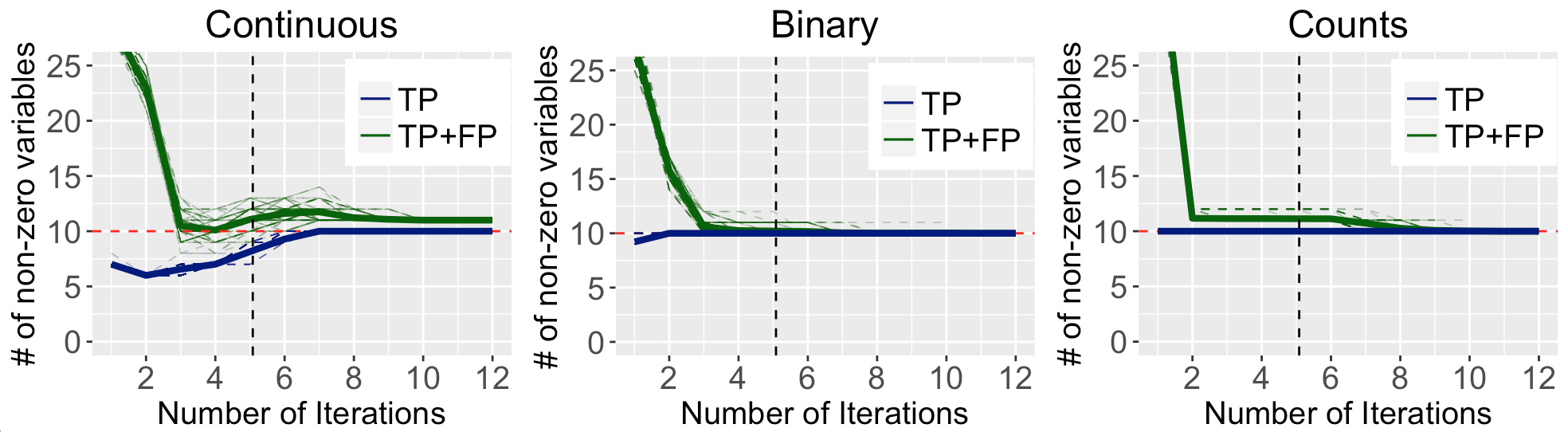}
  \captionof{figure}{Binary response, OV data}
\end{subfigure} \\
\begin{subfigure}[b]{0.7\linewidth}
\includegraphics[width=\textwidth]{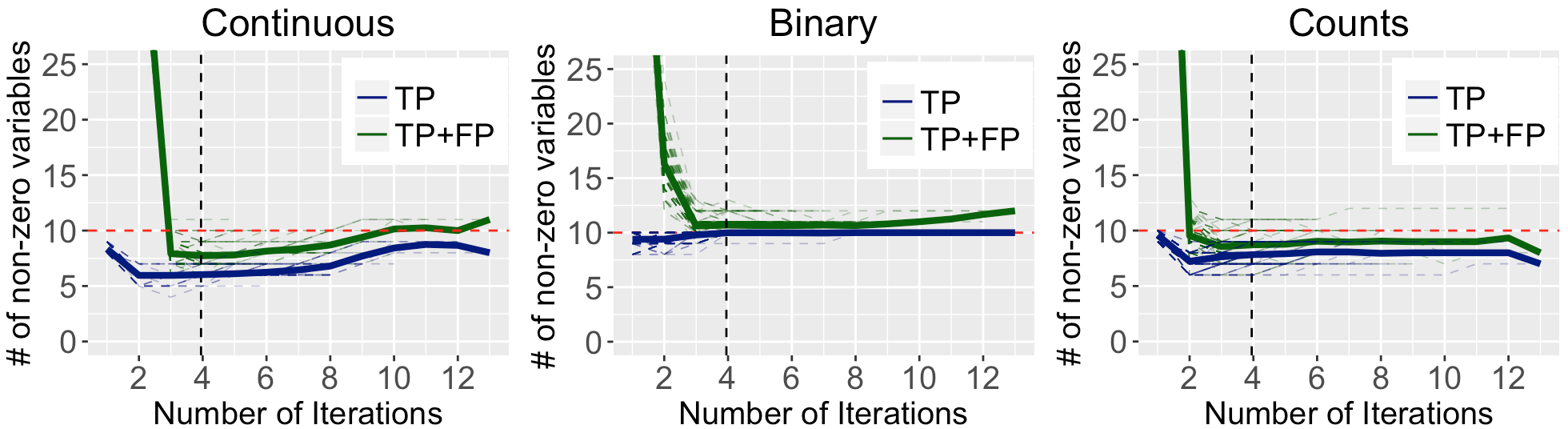}
  \captionof{figure}{Poisson response, OV data}
\end{subfigure}\\
  \end{tabular}    
  \caption{For each block, we report signal recovery across iterations of the B-RAIL algorithm. The solid lines indicate the average feature recovery over 100 runs, and the faint dashed lines in the background represent individual runs. We represent the total number of selected features in green and the number of true positives in blue. The red horizontal line indicates the number of true features in each block, and the vertical black line represents the average number of iterations until B-RAIL converges over the 100 runs.}
  \label{fig:Conv}
\end{figure}

We investigate the convergence of the B-RAIL algorithm. Our empirical analysis indicates that B-RAIL has quick support convergence for all simulation scenarios. We demonstrate this convergence for one type of simulation in Figure~\ref{fig:Conv}. Here, we simulate data using predictors from the TCGA ovarian cancer data (see Section~\ref{sec:OVCase}) for three types of responses: (a) continuous, (b) binary, and (c) counts. We report the number of iterations over 100 runs of the B-RAIL algorithm and denote the average number of iterations until convergence by the dashed vertical black line. We see that the average number of iterations is between 4 and 5 with the maximum number of iterations reaching 15. These ranges were similar for all simulation designs, confirming relatively fast convergence of B-RAIL. Furthermore, we also show the true positive rates and the total number of selected features in Figure~\ref{fig:Conv} to highlight B-RAIL's convergence to a relatively accurate solution.

\begin{algorithm}[H]
{\fontsize{8}{8}\selectfont
\caption{Blockwise Lasso Algorithm}
\begin{flushleft}
\textbf{Given} fixed sequence of regularization parameters $\lambda_{k, j}$ \\
\textbf{Initialize} $t=0$ and $\hat{\bbeta}^{(0)}_{k}$ to have a fixed proportion of sparsity for $k = 1, \ldots K$ \\
\vspace{1mm}
\textbf{Do} until $\Supp(\hat{\bbeta}^{(t)})$ stops changing:  \\ 
\begin{itemize} 
\setlength{\itemindent}{-1em}
\item Set $t = t+1$.
\item For $k = 1, \dots, K$, estimate $\hat{\bbeta}^{(t)}_k$ blockwise, holding $\hat{\bbeta}^{(t)}_l$ ($l < k$) and $\hat{\bbeta}^{(t-1)}_l$ ($l > k$) fixed:
\begin{align*}
 \hat{\bbeta}_k^{(t)} = \argmin_{\alpha, \bbeta} -\frac{1}{n} \ell \left(\by; \alpha + \bX_k \bbeta + \bPhi^{(t)}_k \right) + \sum_{j=1}^{p_k} \lambda_{k,j} |\beta_{j}|
\end{align*}
where $\bPhi^{(t)}_k = \sum_{l <  k} \bX_l \hat{\bbeta}^{(t)}_l   + \sum_{l >  k} \bX_l \hat{\bbeta}^{(t-1)}_l $.
\end{itemize}
\textbf{Output} $\hat{\bbeta}_1 , \ldots \hat{\bbeta}_{K}$.
\end{flushleft}
\label{alg:block_lasso}
}
\end{algorithm}

We provide the blockwise Lasso algorithm in Algorithm~\ref{alg:block_lasso} and prove its convergence below.

\begin{prop}
Suppose that the objective function in \eqref{eq:block_lasso} is bounded below. Then the blockwise (GLM) Lasso converges to a global minimum of\eqref{eq:block_lasso}.
\end{prop}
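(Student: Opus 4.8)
The plan is to cast the blockwise (GLM) Lasso as a cyclic (Gauss--Seidel) block coordinate descent method and to invoke Theorem~4.1 of \citet{tseng2001convergence}. Write the objective in \eqref{eq:block_lasso} as $F(\alpha,\bbeta) = g(\alpha,\bbeta) + \sum_{k=1}^{K} h_k(\bbeta_k)$, where $g$ collects the negative GLM log-likelihood terms and $h_k(\bbeta_k) = \sum_{j=1}^{p_k}\lambda_{k,j}\,|\bbeta_{k,j}|$. For the canonical-link families considered here (Gaussian, Bernoulli, Poisson) the map $\eta \mapsto -\tfrac1n\ell(\y;\eta)$ is convex and continuously differentiable on all of $\real^n$, so $g$ is convex and $C^1$; each $h_k$ is a proper, closed, convex function; and the nonsmooth part $\sum_k h_k$ is \emph{separable} across the coordinate blocks $\bbeta_1,\dots,\bbeta_K$. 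The update in Algorithm~\ref{alg:block_lasso} is precisely exact minimization of $F$ over the block $(\alpha_k,\bbeta_k)$ with the other blocks frozen at their current iterates --- the offset $\bPhi_k^{(t)}$ is just the contribution of the frozen blocks to the linear predictor --- performed cyclically in $k = 1,\dots,K$.

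With this structure in hand, I would verify the hypotheses of Tseng's theorem: $g$ is continuous and Gateaux differentiable on its (open) domain; the nondifferentiable term is block-separable; $F$ is convex; and each block subproblem attains its infimum (this follows from $F$ being bounded below together with coercivity of the $\ell_1$-penalized block objective in the penalized directions; if one is uneasy about unpenalized directions, the infinitesimal ridge term $\epsilon\|\cdot\|_2^2$ that B-RAIL uses in \eqref{eq:coef} makes each block subproblem strictly convex and uniquely solved). Tseng's result then gives that $\{F(\hat\alpha^{(t)},\hat\bbeta^{(t)})\}$ is non-increasing and that every cluster point of the iterates is a \emph{coordinatewise} minimum of $F$. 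The final step upgrades ``coordinatewise minimum'' to ``global minimum'': since $\partial\bigl(\sum_k h_k\bigr)(\bbeta) = \prod_k \partial h_k(\bbeta_k)$, the optimality condition $0 \in \nabla g(\bbeta^\star) + \partial\bigl(\sum_k h_k\bigr)(\bbeta^\star)$ for the convex program \eqref{eq:block_lasso} decouples into $K$ blockwise conditions, so a point that is optimal in each block separately satisfies the global first-order condition and, by convexity, lies in $\argmin F$. Hence every limit point of the blockwise Lasso iterates is a global solution of \eqref{eq:block_lasso}, and the objective values converge to the optimal value.

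I expect the main obstacle to be closing the small gap between the bare hypothesis ``$F$ bounded below'' and what the convergence theorem literally needs. Two points require care. First, to guarantee that cluster points exist one needs the iterate sequence to be bounded; bounded below alone does not provide this, so I would either strengthen the assumption to compact sublevel sets (coercivity) of $F$, or argue boundedness directly from the coercivity contributed by the penalties and from re-estimating the intercept in every sweep. Second, for $K \ge 3$ blocks one should check the regularity condition under which Gauss--Seidel is guaranteed to converge; here the block-separability of the nonsmooth part makes $F$ regular in Tseng's sense (a coordinatewise-stationary point is stationary), so no extra strict-convexity assumption on $g$ is needed --- though the ridge-augmented subproblems of B-RAIL would supply it anyway and additionally pin the iterates down to a single limit. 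The remaining items --- convexity and $C^1$-smoothness of the GLM negative log-likelihood, separability of the $\ell_1$ penalty --- are routine and I would not belabor them.
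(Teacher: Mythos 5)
Your proof follows essentially the same route as the paper's: both cast the blockwise Lasso as cyclic block coordinate descent, invoke Theorem 4.1 of Tseng (2001) using regularity (via Gateaux differentiability of the smooth GLM term and separability of the $\ell_1$ penalty) to conclude convergence to a coordinatewise stationary point, and then use convexity to upgrade this to a global minimum. If anything you are more careful than the paper, which silently assumes that cluster points of the iterates exist and that each block subproblem attains its minimum --- precisely the gaps you flag.
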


\begin{proof}
Let $f(z_1, \dots, z_k)$ denote the objective function in \eqref{eq:block_lasso}, where $z_k = (\alpha_k, \bbeta_k)$ for each $k = 1, \dots, K$. Since the domain of the GLM log-likelihood $\ell$ is open and Gateaux-differentiable on its domain, then $f$ is regular at each point in the domain of $f$ by Lemma 3.1 in \citet{tseng2001convergence}. Note also that since the GLM negative log-likelihood is convex and the $\ell_1$ penalty term in \eqref{eq:block_lasso} is convex and separable, then $f(z_1, \dots, z_K)$ is convex with respect to each block $z_k$ for $k = 1, \dots, K$. Because $f$ is regular and convex with respect to each block $z_k$ ($k = 1, \dots, K$), it follows that the blockwise (GLM) Lasso always converges to a stationary point of $f$ by Theorem 4.1 in \citet{tseng2001convergence}. By convexity of $f$, this implies that the blockwise (GLM) Lasso converges to a global minimum.
\end{proof}

\section{Additional B-RAIL Simulations}\label{app:sims}
%
%

We provide the following figures and tables to supplement our simulations and to further support the strong empirical performance of B-RAIL.

\begin{figure}[H]
\begin{tabular}[c]{cc}
\captionsetup{type=figure}\addtocounter{figure}{-1}
\begin{subfigure}[b]{0.44\textwidth}
\includegraphics[width=.85\linewidth]{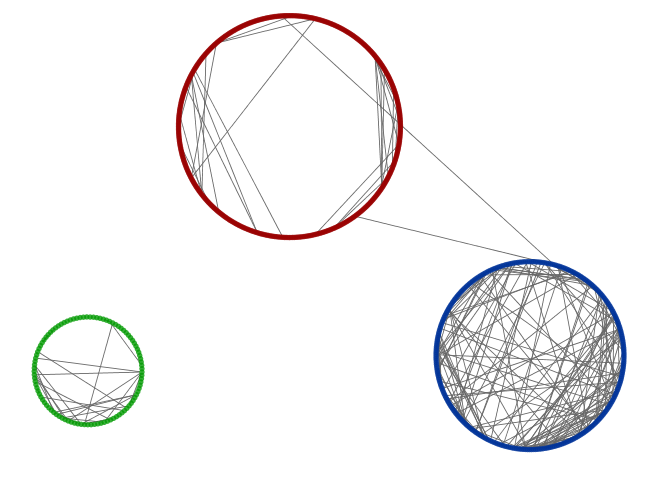}
  \captionof{figure}{MGM with EBIC}
\end{subfigure}& 
\hspace{.5em}
\begin{subfigure}[b]{0.44\textwidth}
\includegraphics[width=.85\linewidth]{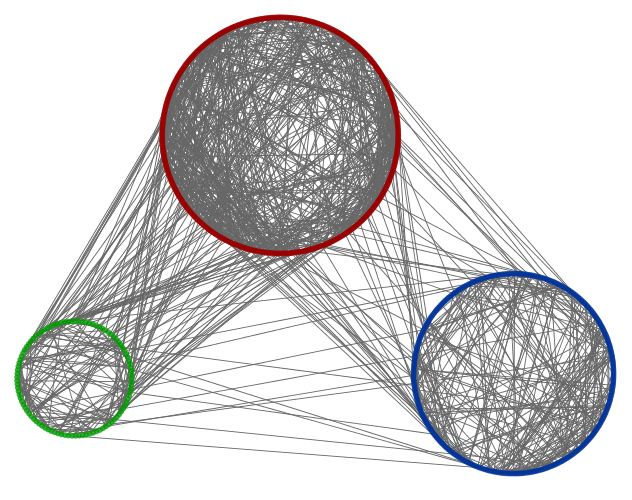}
  \captionof{figure}{MGM with CV}
\end{subfigure}\\
\multicolumn{2}{c}{
\hspace{.5em}
\begin{subfigure}[b]{0.44\textwidth}
\includegraphics[width=.85\linewidth]{BRAIL_AND.png}
  \captionof{figure}{The B-RAIL algorithm}
\end{subfigure} }\\
  \end{tabular}   
  \captionof{figure}{We compare three different graph selection methods (B-RAIL and two methods from the mgm R package) when applied to real ovarian cancer genomics data. The data is comprised of three blocks: RNASeq (red), miRNA (blue), and methylation (green), with $n = 293$ and overall $p = 836$.
}
    \label{fig:MGM}
\end{figure}

To augment the motivating example in Figure~\ref{fig:MotivPlot}, we provide comparisons of B-RAIL to two other mixed graphical selection methods from the mgm R package \citep{haslbeck2015mgm} in Figure~\ref{fig:MGM}. mgm takes a node-wise neighborhood estimation approach, and for each node, mgm selects the Lasso regularization parameter using either the extended BIC or CV, fits a penalized GLM model, and applies additional thresholding to the estimated coefficients to remove noise. Here, we used the ``AND'' rule to combine estimated neighborhoods for all three graphs. (Note that we had to convert the proportion-valued methylation values into 0-1 binary values in order to comply with mgm package restrictions.)

From Figure~\ref{fig:MGM}, we see that mgm with the extended BIC selection criteria tends to under-select features while mgm with CV often over-selects. This agrees with our simulations and discussion of the Lasso-type model selection biases in Section~\ref{sec:Sims}. We also observe that like the Lasso-type methods in Figure~\ref{fig:MotivPlot}, mgm with CV and extended BIC can result in imbalanced selection between the blocks.

\begin{figure}[h!]
\centering
\begin{tabular}[c]{cc}
\begin{subfigure}[b]{0.45\linewidth}
\includegraphics[width=\textwidth]{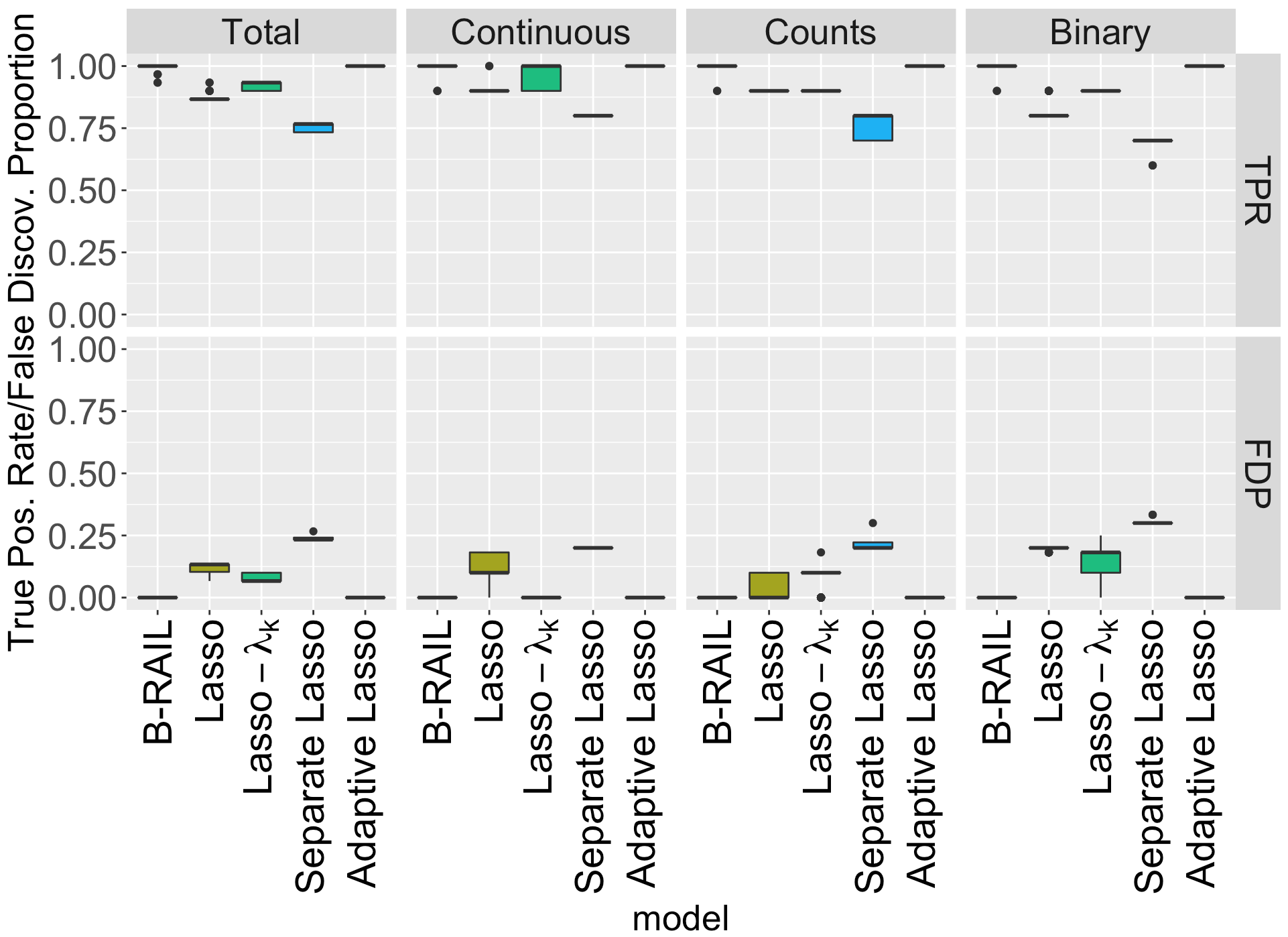}
  \captionof{figure}{iid case, $p = 300$}
\end{subfigure}&
\hspace*{\fill} 
\begin{subfigure}[b]{0.45\linewidth}
\includegraphics[width=\textwidth]{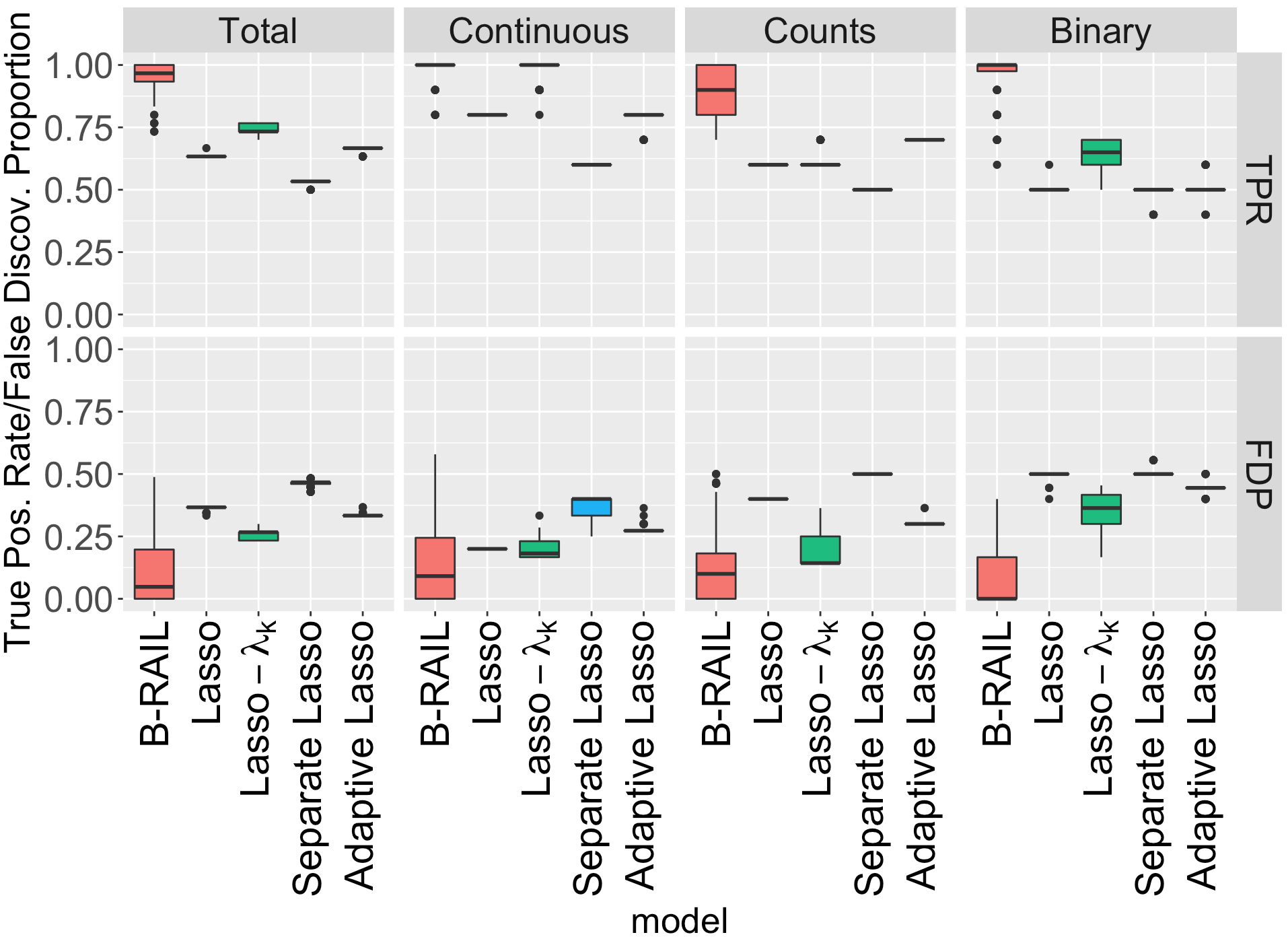}
  \captionof{figure}{iid case, $p = 900$}
\end{subfigure}\\
\hspace*{\fill} 
\begin{subfigure}[b]{0.45\linewidth}
\includegraphics[width=\textwidth]{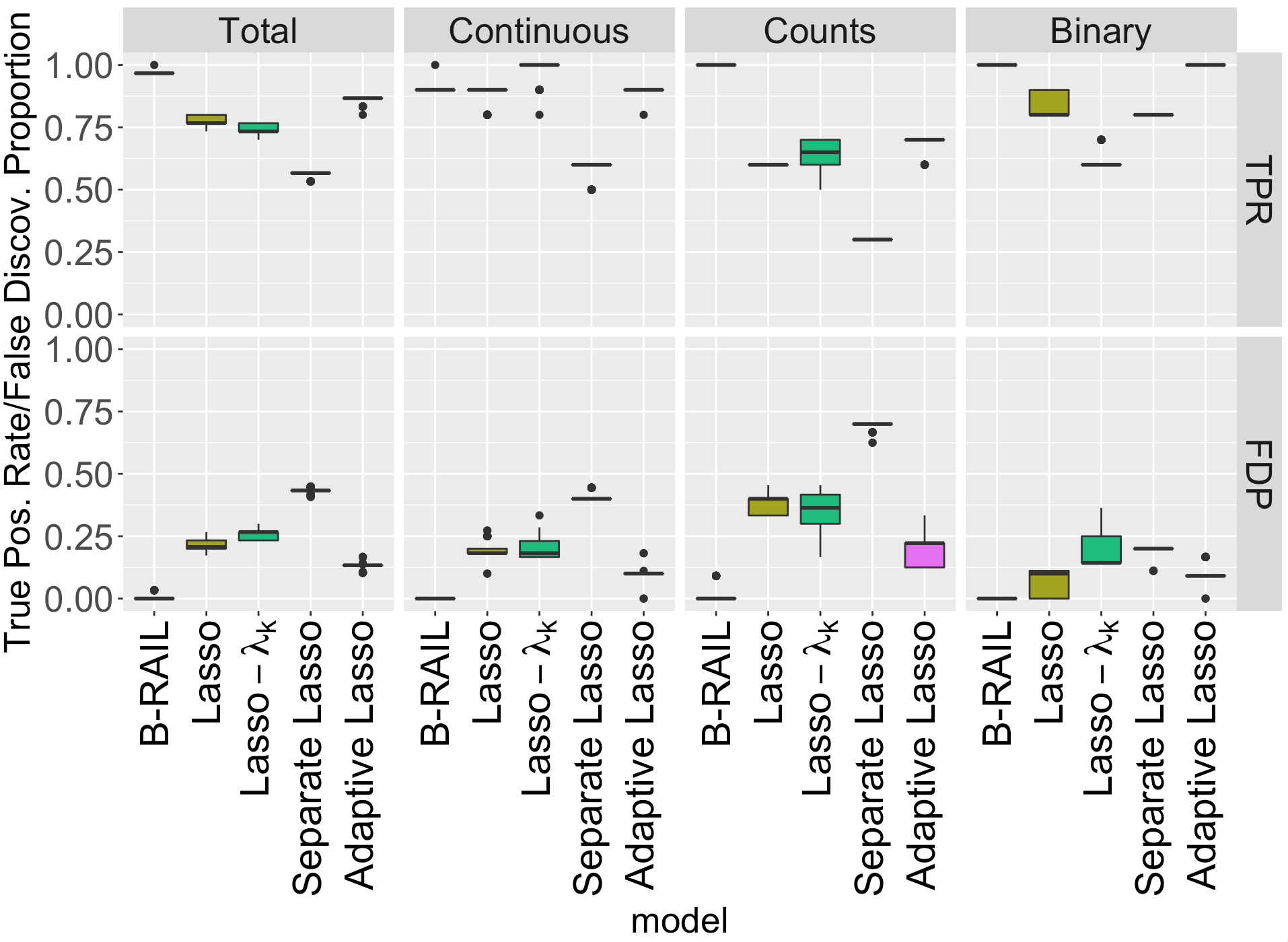}
  \captionof{figure}{Non-constant variance}
\end{subfigure}&
\hspace*{\fill} 
\begin{subfigure}[b]{0.45\linewidth}
\includegraphics[width=\textwidth]{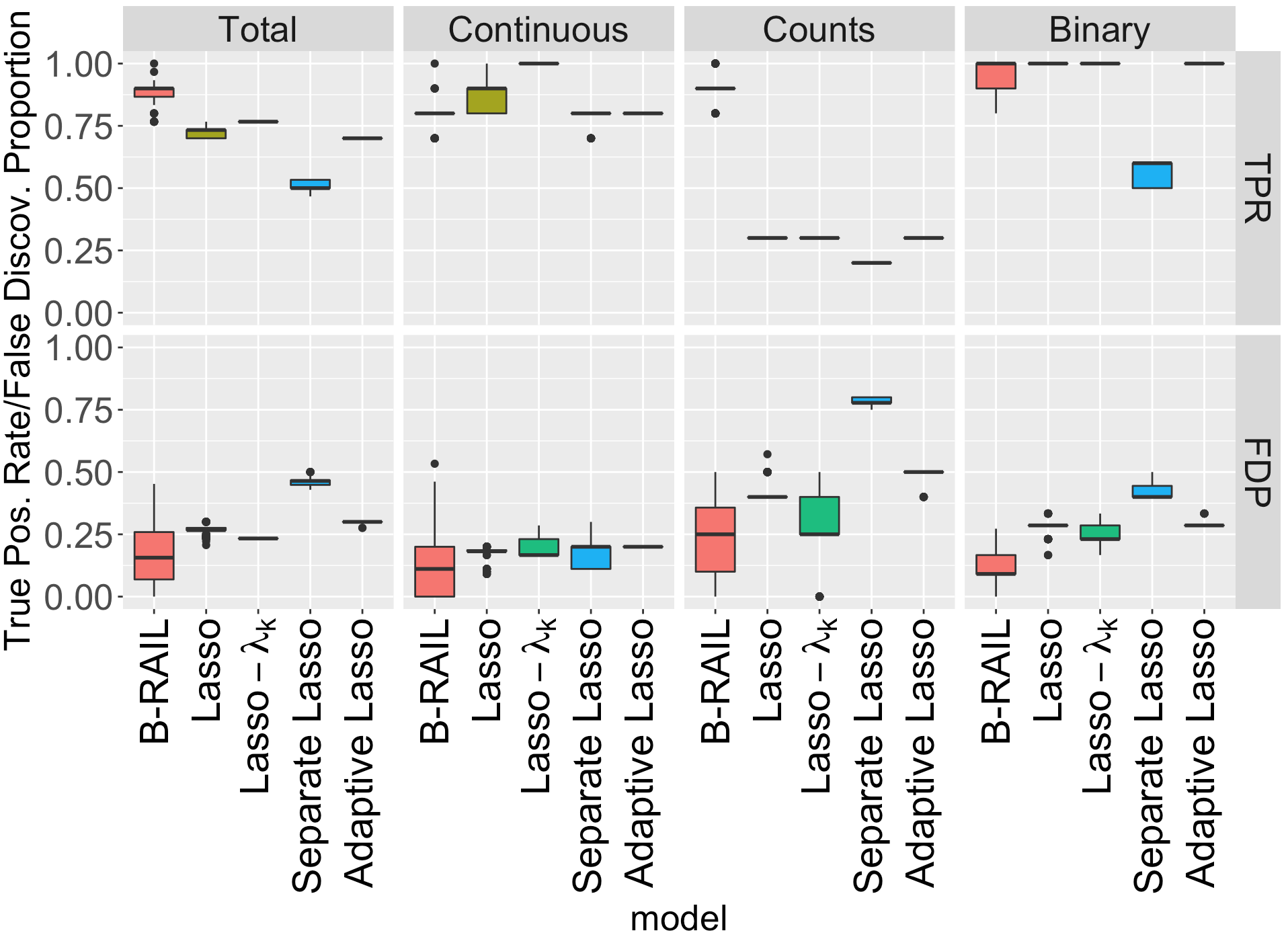}
  \captionof{figure}{Block directed graph structure}
\end{subfigure}\\
\multicolumn{2}{c}{
\hspace{.5em}
\begin{subfigure}[b]{0.45\textwidth}
\includegraphics[width=\textwidth]{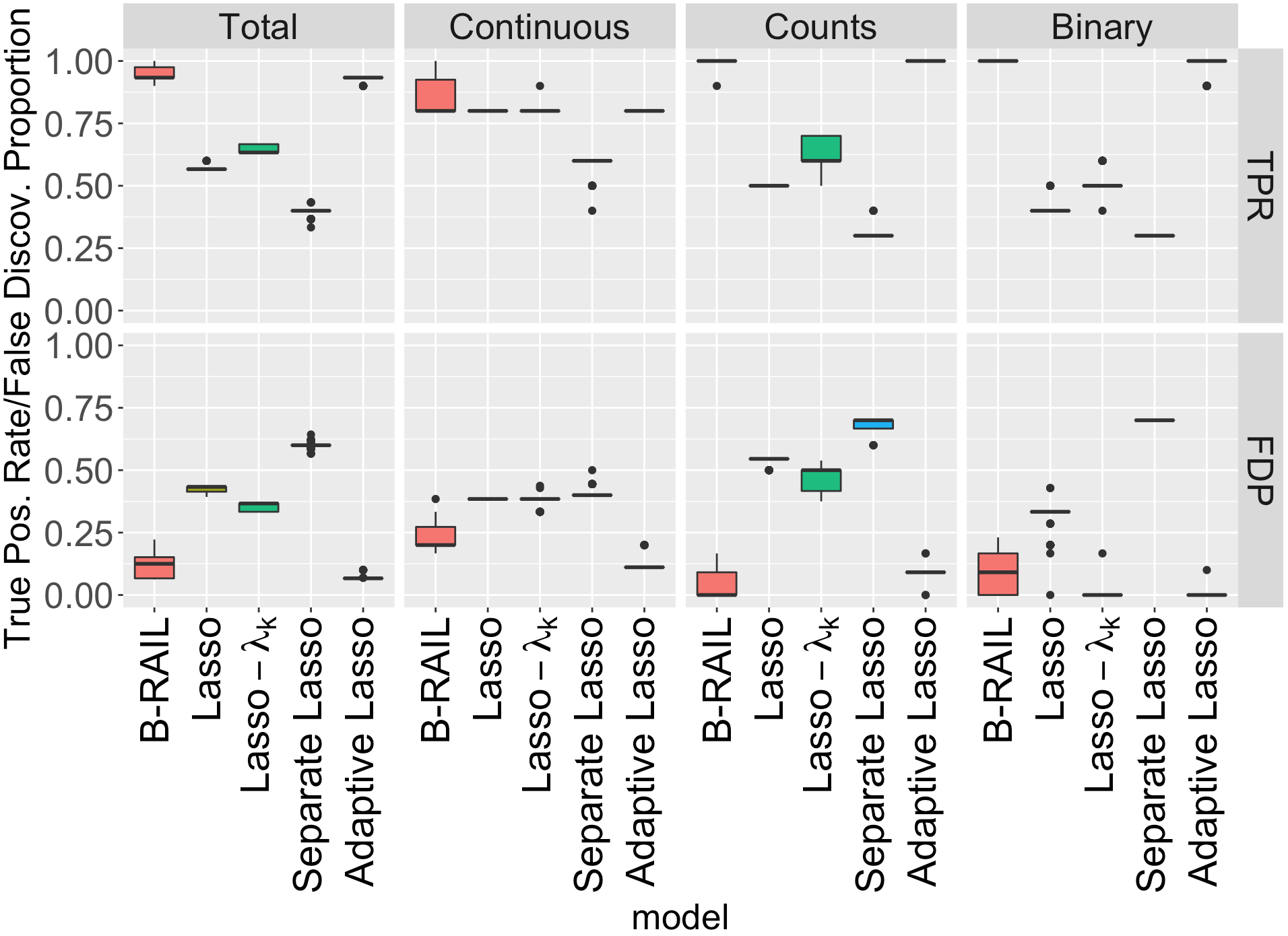}
  \captionof{figure}{OV data}
\end{subfigure} }
  \end{tabular}    
  \caption{We compare various selection methods under 5 different simulation scenarios. For each scenario, we simulate $\X$ with three blocks (continuous, binary, counts) and a Gaussian response $\y$. We report the TPR and FDP for overall feature recovery and individual block recoveries across 200 runs. Note that we used oracle information for the Lasso-type methods.}
  \label{fig:fig3}
\end{figure}

\newpage

Figure~\ref{fig:fig3} duplicates the information in Table~\ref{tab:Gaus} but using boxplots for better visualization. We recall that these simulations compared B-RAIL and various Lasso-type methods (using oracle information) under four simulation designs with Gaussian responses (see Section~\ref{sec:Sims} for further details). In almost all of these simulations, B-RAIL is able to achieve a higher TPR while maintaining a low FDP.

\begin{figure}[h!]
\centering
\begin{tabular}[c]{cc}
\begin{subfigure}[b]{0.45\linewidth}
\includegraphics[width=\textwidth]{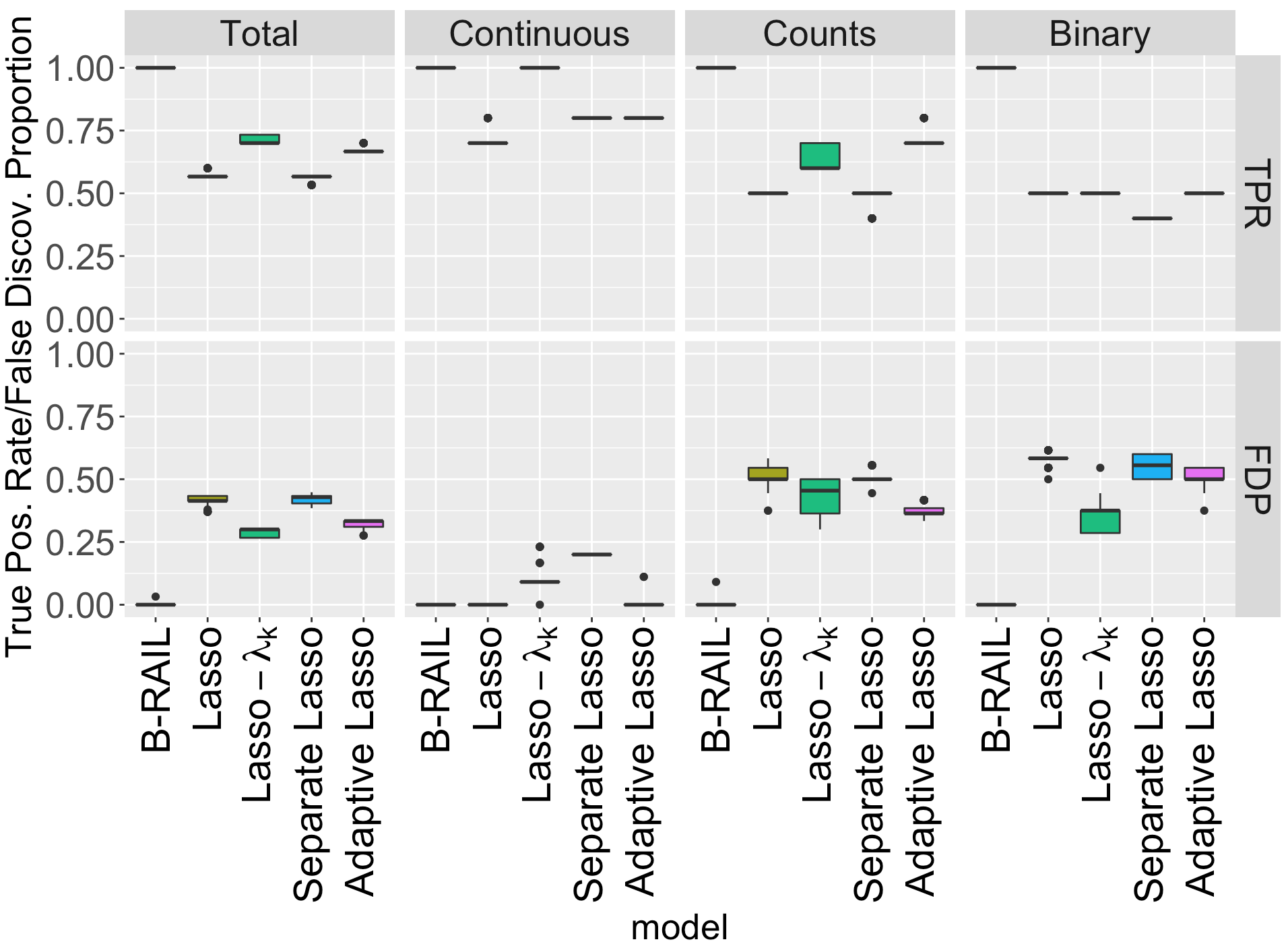}
  \captionof{figure}{$p_{1}=50$ $p_{2}=250$, $p_{3}=500$, $||\beta||_0=10$ in each block}
\end{subfigure}&
\hspace*{\fill} 
\begin{subfigure}[b]{0.45\linewidth}
\includegraphics[width=\textwidth]{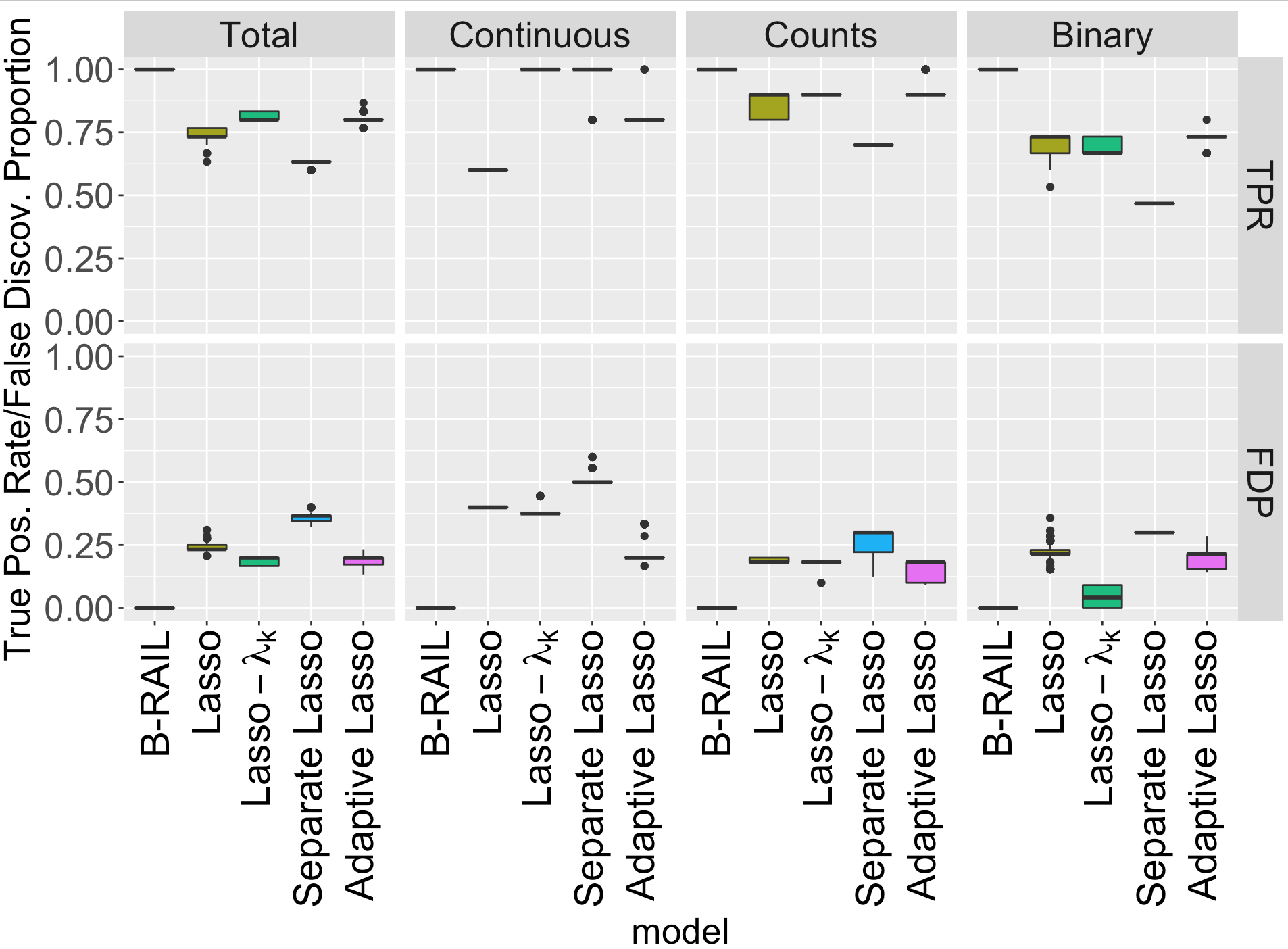}
  \captionof{figure}{$p_{1}=50$ $p_{2}=250$, $p_{3}=500$, $||\beta_1||_0=5$, $||\beta_2||_0=10$, $||\beta_3||_0=15$}
\end{subfigure}\\
  \end{tabular}    
  \caption{We compare various selection methods for non-uniform choices of $p_k$ and $||\bbeta_k||_0$ under the iid simulation design. Here, we simulated $\X$ with three blocks (continuous, binary, counts) and a Gaussian response $\y$. Note that there were 200 runs and that we used oracle information for the Lasso-type methods.}
  \label{fig:figDiffpk}
\end{figure}

We next verify that the above simulation results are not heavily dependent on our choice of $p_k$ and $||\beta_k||_0$. In Figure~\ref{fig:figDiffpk}, we ran the iid simulation design with Gaussian responses for different non-uniform values of $p_k$ and $||\beta_k||_0$. These results show that B-RAIL can successfully recover features from unequally sized blocks with different amounts of sparsity while other methods may struggle to account for biases introduced by the different $p_k$'s and $||\beta_k||_0$'s.


\newpage

\begin{table*}
\makebox[\linewidth]{
\scalebox{0.7}{
\begin{tabular}{l  ll  ll  ll  ll}

\addlinespace
\multicolumn{9}{c}{ \begin{normalsize}  iid Case, Poisson Response \end{normalsize} } \\
\midrule

  & \multicolumn{2}{c}{Total} &  \multicolumn{2}{c}{Continuous} & \multicolumn{2}{c}{ Binary} & \multicolumn{2}{c}{Counts}  \\ 
 \cmidrule(lr){2-3}
\cmidrule(lr){4-5}
\cmidrule(lr){6-7}
\cmidrule(lr){8-9}

 & \multicolumn{1}{c}{TPR} &  \multicolumn{1}{c}{FDP}  &  \multicolumn{1}{c}{TPR}  &  \multicolumn{1}{c}{FDP} &  \multicolumn{1}{c}{TPR}  & \multicolumn{1}{c}{FDP}  &  \multicolumn{1}{c}{TPR}  &  \multicolumn{1}{c}{FDP} \\ 
\midrule
B-RAIL & \bft{0.86 (6.3e-3)} & \bft{0.10 (4.7e-3)} & 0.92 (3.8e-3) & 0.15 (8.2e-3) & 0.90 (6.2e-3) & 0.00 (0.0e-0) & 0.77 (1.1e-2) & 0.13 (6.4e-3) \\ 
  Lasso - $\lambda$ (oracle) & 0.67 (3.3e-4) & 0.33 (5.4e-4) & 0.90 (1.0e-3) & 0.18 (1.8e-4) & 0.50 (0.0e-0) & 0.28 (1.2e-3) & 0.60 (0.0e-0) & 0.50 (9.0e-4) \\ 
  Lasso - $\lambda$ (oracle) & 0.74 (1.6e-3) & 0.26 (1.6e-3) & 0.93 (4.6e-3) & 0.21 (3.5e-3) & 0.70 (4.9e-3) & 0.29 (5.9e-3) & 0.60 (1.4e-3) & 0.26 (3.1e-3) \\ 
  Separate Lasso (oracle) & 0.55 (1.6e-3) & 0.43 (2.2e-3) & 0.70 (0.0e-0) & 0.30 (7.8e-4) & 0.44 (4.8e-3) & 0.51 (5.9e-3) & 0.50 (0.0e-0) & 0.48 (2.7e-3) \\ 
  Adaptive Lasso (oracle) & 0.65 (2.2e-3) & 0.34 (1.8e-3) & 0.77 (4.5e-3) & 0.29 (2.5e-3) & 0.47 (4.6e-3) & 0.29 (6.1e-3) & 0.70 (0.0e-0) & 0.41 (3.6e-3) \\ 
\addlinespace
\multicolumn{9}{c}{ \begin{normalsize} Block Directed Graph Structure, Poisson Response \end{normalsize}} \\
\midrule

& \multicolumn{2}{c}{Total} &  \multicolumn{2}{c}{Continuous} & \multicolumn{2}{c}{ Binary} & \multicolumn{2}{c}{Counts}  \\ 
\cmidrule(lr){2-3}
\cmidrule(lr){4-5}
\cmidrule(lr){6-7}
\cmidrule(lr){8-9}
 & \multicolumn{1}{c}{TPR} &  \multicolumn{1}{c}{FDP}  &  \multicolumn{1}{c}{TPR}  &  \multicolumn{1}{c}{FDP} &  \multicolumn{1}{c}{TPR}  & \multicolumn{1}{c}{FDP}  &  \multicolumn{1}{c}{TPR}  &  \multicolumn{1}{c}{FDP} \\ 
\midrule
B-RAIL & \bft{0.81 (8.6e-3)} & \bft{0.07 (5.2e-3)} & 0.80 (0.0e-0) & 0.04 (5.4e-3) & 0.97 (5.7e-3) & 0.09 (6.2e-3) & 0.68 (2.1e-2) & 0.07 (1.2e-2) \\ 
  Lasso - $\lambda$ (oracle) & 0.70 (0.0e-0) & 0.29 (1.3e-3) & 0.90 (0.0e-0) & 0.23 (4.1e-3) & 0.90 (0.0e-0) & 0.37 (2.0e-3) & 0.30 (0.0e-0) & 0.18 (1.3e-2) \\ 
  Lasso - $\lambda$ (oracle) & 0.70 (8.5e-4) & 0.30 (8.5e-4) & 0.89 (2.6e-3) & 0.21 (4.3e-3) & 0.90 (0.0e-0) & 0.27 (3.6e-3) & 0.30 (0.0e-0) & 0.52 (7.4e-3) \\ 
  Separate Lasso (oracle) & 0.50 (1.3e-3) & 0.46 (2.3e-3) & 0.80 (0.0e-0) & 0.20 (0.0e-0) & 0.51 (3.9e-3) & 0.41 (7.0e-3) & 0.20 (0.0e-0) & 0.79 (1.7e-3) \\ 
  Adaptive Lasso (oracle) & 0.70 (9.6e-4) & 0.29 (1.4e-3) & 0.81 (2.9e-3) & 0.20 (1.3e-3) & 1.00 (0.0e-0) & 0.36 (2.5e-3) & 0.30 (0.0e-0) & 0.27 (5.7e-3) \\ 
    
\addlinespace
\multicolumn{9}{c}{ \begin{normalsize} OV Data, Poisson Response \end{normalsize}} \\
\midrule

& \multicolumn{2}{c}{Total} &  \multicolumn{2}{c}{Continuous} & \multicolumn{2}{c}{Proportion} & \multicolumn{2}{c}{Counts}  \\ 
\cmidrule(lr){2-3}
\cmidrule(lr){4-5}
\cmidrule(lr){6-7}
\cmidrule(lr){8-9}
 & \multicolumn{1}{c}{TPR} &  \multicolumn{1}{c}{FDP}  &  \multicolumn{1}{c}{TPR}  &  \multicolumn{1}{c}{FDP} &  \multicolumn{1}{c}{TPR}  & \multicolumn{1}{c}{FDP}  &  \multicolumn{1}{c}{TPR}  &  \multicolumn{1}{c}{FDP} \\ 
\midrule
B-RAIL & \bft{0.99 (2.1e-3)} & \bft{0.05 (3.3e-3)} & 0.98 (6.4e-3) & 0.11 (5.3e-3) & 1.00 (0.0e-0) & 0.01 (2.2e-3) & 1.00 (0.0e-0) & 0.03 (4.3e-3) \\ 
  Lasso - $\lambda$  (oracle) & 0.51 (1.9e-3) & 0.46 (1.9e-3) & 0.54 (5.6e-3) & 0.45 (3.9e-3) & 0.60 (0.0e-0) & 0.31 (3.8e-3) & 0.40 (0.0e-0) & 0.61 (2.0e-3) \\ 
  Lasso - $\lambda_k$  (oracle) & 0.60 (1.1e-3) & 0.40 (1.1e-3) & 0.83 (4.8e-3) & 0.46 (2.0e-3) & 0.61 (4.9e-3) & 0.21 (8.0e-3) & 0.36 (7.9e-3) & 0.43 (1.5e-2) \\ 
  Separate Lasso (oracle) & 0.38 (2.2e-3) & 0.60 (1.9e-3) & 0.43 (4.6e-3) & 0.53 (4.4e-3) & 0.32 (3.8e-3) & 0.65 (3.2e-3) & 0.40 (1.0e-3) & 0.60 (6.7e-4) \\ 
  Adaptive Lasso (oracle) & 0.93 (1.5e-3) & 0.07 (1.2e-3) & 0.80 (0.0e-0) & 0.12 (3.2e-3) & 1.00 (1.0e-3) & 0.00 (0.0e-0) & 0.98 (4.2e-3) & 0.09 (3.8e-4) \\

\specialrule{1pt}{1pt}{1pt}
\end{tabular}
}}
\vspace{3pt}
\caption{We compare various selection methods under 3 different simulation scenarios. For each scenario, we simulate $\X$ with three blocks (continuous, binary, counts) and a Poisson response $\y$. We report the TPR and FDP for overall feature recovery and individual block recoveries, averaged across 200 runs with standard errors in parentheses. We bold the best overall $TPR * (1 - FDP)$ values for each simulation scenario.} \label{tab:Pois}
\end{table*}

\begin{table*}
\captionsetup{justification=raggedright, singlelinecheck=false}
\makebox[\linewidth]{
\scalebox{0.7}{
\begin{tabular}{l  ll  ll  ll  ll}

\addlinespace
\multicolumn{9}{c}{ \begin{normalsize}  iid Case, Binary Response \end{normalsize} } \\
\midrule

  & \multicolumn{2}{c}{Total} &  \multicolumn{2}{c}{Continuous} & \multicolumn{2}{c}{ Binary} & \multicolumn{2}{c}{Counts}  \\ 
 \cmidrule(lr){2-3}
\cmidrule(lr){4-5}
\cmidrule(lr){6-7}
\cmidrule(lr){8-9}

 & \multicolumn{1}{c}{TPR} &  \multicolumn{1}{c}{FDP}  &  \multicolumn{1}{c}{TPR}  &  \multicolumn{1}{c}{FDP} &  \multicolumn{1}{c}{TPR}  & \multicolumn{1}{c}{FDP}  &  \multicolumn{1}{c}{TPR}  &  \multicolumn{1}{c}{FDP} \\ 
\midrule
B-RAIL & \bft{0.83 (9.4e-3)} & \bft{0.11 (1.0e-2)} & 0.92 (6.9e-3) & 0.09 (1.0e-2) & 0.73 (1.5e-2) & 0.12 (1.3e-2) & 0.82 (1.0e-2) & 0.11 (1.2e-2) \\ 
  Lasso - $\lambda$ (oracle) & 0.65 (2.9e-3) & 0.34 (3.0e-3) & 0.80 (2.0e-3) & 0.15 (4.9e-3) & 0.54 (8.3e-3) & 0.42 (6.9e-3) & 0.60 (0.0e-0) & 0.43 (3.9e-3) \\ 
  Lasso - $\lambda_k$ (oracle) & 0.72 (3.7e-3) & 0.31 (2.8e-3) & 0.87 (5.6e-3) & 0.18 (8.0e-3) & 0.69 (9.8e-3) & 0.37 (7.6e-3) & 0.61 (2.6e-3) & 0.37 (6.0e-3) \\ 
  Separate Lasso (oracle) & 0.51 (1.5e-3) & 0.47 (2.3e-3) & 0.60 (0.0e-0) & 0.39 (2.8e-3) & 0.43 (4.4e-3) & 0.55 (5.0e-3) & 0.50 (0.0e-0) & 0.47 (4.3e-3) \\ 
  Adaptive Lasso (oracle) & 0.60 (3.0e-3) & 0.39 (2.7e-3) & 0.91 (3.0e-3) & 0.50 (3.7e-3) & 0.39 (7.8e-3) & 0.05 (9.6e-3) & 0.49 (9.9e-3) & 0.31 (6.2e-3) \\ 
    
\addlinespace
\multicolumn{9}{c}{ \begin{normalsize} Block Directed Graph Structure, Binary Response \end{normalsize}} \\
\midrule

& \multicolumn{2}{c}{Total} &  \multicolumn{2}{c}{Continuous} & \multicolumn{2}{c}{ Binary} & \multicolumn{2}{c}{Counts}  \\ 
\cmidrule(lr){2-3}
\cmidrule(lr){4-5}
\cmidrule(lr){6-7}
\cmidrule(lr){8-9}
 & \multicolumn{1}{c}{TPR} &  \multicolumn{1}{c}{FDP}  &  \multicolumn{1}{c}{TPR}  &  \multicolumn{1}{c}{FDP} &  \multicolumn{1}{c}{TPR}  & \multicolumn{1}{c}{FDP}  &  \multicolumn{1}{c}{TPR}  &  \multicolumn{1}{c}{FDP} \\ 
\midrule
B-RAIL & \bft{0.75 (3.5e-3)} & \bft{0.15 (5.0e-3)} & 0.90 (0.0e-0) & 0.01 (3.6e-3) & 0.83 (8.9e-3) & 0.23 (7.2e-3) & 0.51 (3.8e-3) & 0.19 (1.1e-2) \\ 
  Lasso - $\lambda$ (oracle) & 0.68 (1.9e-3) & 0.30 (1.8e-3) & 0.80 (0.0e-0) & 0.20 (5.7e-3) & 0.95 (5.6e-3) & 0.30 (3.9e-3) & 0.30 (0.0e-0) & 0.48 (5.7e-3) \\ 
  Lasso - $\lambda_k$ (oracle) & 0.69 (1.8e-3) & 0.31 (1.9e-3) & 0.80 (1.4e-3) & 0.12 (3.8e-3) & 0.96 (5.2e-3) & 0.22 (6.1e-3) & 0.30 (0.0e-0) & 0.64 (6.6e-3) \\ 
  Separate Lasso (oracle) & 0.49 (2.9e-3) & 0.49 (2.6e-3) & 0.74 (5.9e-3) & 0.20 (5.7e-3) & 0.51 (6.8e-3) & 0.45 (6.3e-3) & 0.21 (3.3e-3) & 0.78 (3.2e-3) \\ 
  Adaptive Lasso (oracle) & 0.68 (2.3e-3) & 0.31 (2.7e-3) & 0.96 (5.2e-3) & 0.38 (4.5e-3) & 0.98 (3.7e-3) & 0.21 (6.4e-3) & 0.09 (2.9e-3) & 0.24 (2.9e-2) \\ 
    
\addlinespace
\multicolumn{9}{c}{ \begin{normalsize} OV Data, Binary Response \end{normalsize}} \\
\midrule

& \multicolumn{2}{c}{Total} &  \multicolumn{2}{c}{Continuous} & \multicolumn{2}{c}{Proportion} & \multicolumn{2}{c}{Counts}  \\ 
\cmidrule(lr){2-3}
\cmidrule(lr){4-5}
\cmidrule(lr){6-7}
\cmidrule(lr){8-9}
 & \multicolumn{1}{c}{TPR} &  \multicolumn{1}{c}{FDP}  &  \multicolumn{1}{c}{TPR}  &  \multicolumn{1}{c}{FDP} &  \multicolumn{1}{c}{TPR}  & \multicolumn{1}{c}{FDP}  &  \multicolumn{1}{c}{TPR}  &  \multicolumn{1}{c}{FDP} \\ 
\midrule
B-RAIL & \bft{0.81 (3.5e-3)} & \bft{0.11 (4.9e-3)} & 0.64 (7.8e-3) & 0.21 (8.5e-3) & 1.00 (0.0e-0) & 0.05 (5.9e-3) & 0.80 (6.1e-3) & 0.09 (8.5e-3) \\ 
  Lasso - $\lambda$ (oracle) & 0.59 (2.9e-3) & 0.39 (2.8e-3) & 0.80 (0.0e-0) & 0.35 (2.8e-3) & 0.49 (7.2e-3) & 0.33 (7.6e-3) & 0.49 (5.7e-3) & 0.49 (5.0e-3) \\ 
  Lasso - $\lambda_k$ (oracle) & 0.63 (2.2e-3) & 0.37 (2.2e-3) & 0.80 (0.0e-0) & 0.39 (3.8e-3) & 0.71 (7.1e-3) & 0.40 (5.3e-3) & 0.36 (5.6e-3) & 0.19 (1.6e-2) \\ 
  Separate Lasso (oracle) & 0.44 (2.1e-3) & 0.54 (1.9e-3) & 0.59 (3.1e-3) & 0.40 (3.4e-3) & 0.34 (5.0e-3) & 0.63 (4.3e-3) & 0.40 (0.0e-0) & 0.60 (1.3e-3) \\ 
  Adaptive Lasso (oracle) & 0.83 (2.9e-3) & 0.16 (2.6e-3) & 0.79 (3.8e-3) & 0.22 (3.8e-3) & 0.99 (3.0e-3) & 0.01 (2.6e-3) & 0.72 (6.3e-3) & 0.24 (6.9e-3) \\ 

\specialrule{1pt}{1pt}{1pt}
\end{tabular}
}}
\vspace{3pt}
\caption{We compare various selection methods under 3 different simulation scenarios. For each scenario, we simulate $\X$ with three blocks (continuous, binary, counts) and a binary response $\y$. We report the TPR and FDP for overall feature recovery and individual block recoveries, averaged across 200 runs with standard errors in parentheses. We bold the best overall $TPR * (1 - FDP)$ values for each simulation scenario.} \label{tab:Bin}
\end{table*}

For Poisson and binary responses, Tables~\ref{tab:Pois} and \ref{tab:Bin} provide the results from additional simulation designs to supplement Table~\ref{tab:BinPois}. Here, the response $\y$ and predictors $\X$ were simulated according to the description in Section~\ref{sec:Sims} with $n = 200$, $p_1 = p_2 = p_3 = 300$, and $||\bbeta_k||_0 = 10$ for each block. 

For easier visualizations, we duplicate the results of Tables~\ref{tab:Pois} and \ref{tab:Bin} using boxplots in Figure~\ref{fig:fig4}. As shown by the plots, B-RAIL is able to achieve higher TPR and maintain low FDP across various simulations for both binary and Poisson responses. 

\begin{figure}[H]
\centering
\begin{tabular}[c]{cc}
\hspace*{\fill} 
\begin{subfigure}[b]{0.45\linewidth}
\includegraphics[width=\textwidth]{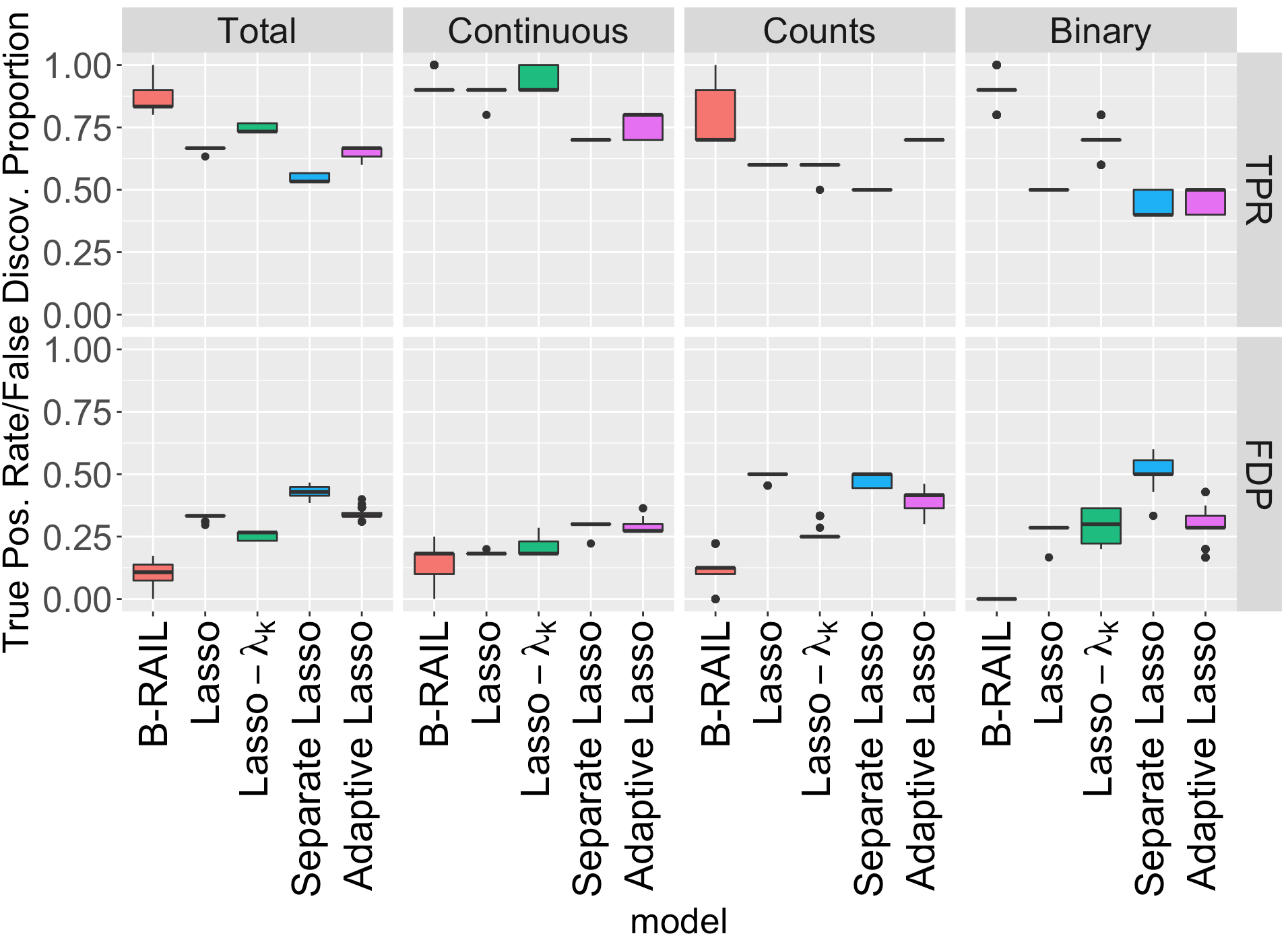}
  \captionof{figure}{Binary response, iid case}
\end{subfigure}&
\hspace*{\fill} 
\begin{subfigure}[b]{0.45\linewidth}
\includegraphics[width=\textwidth]{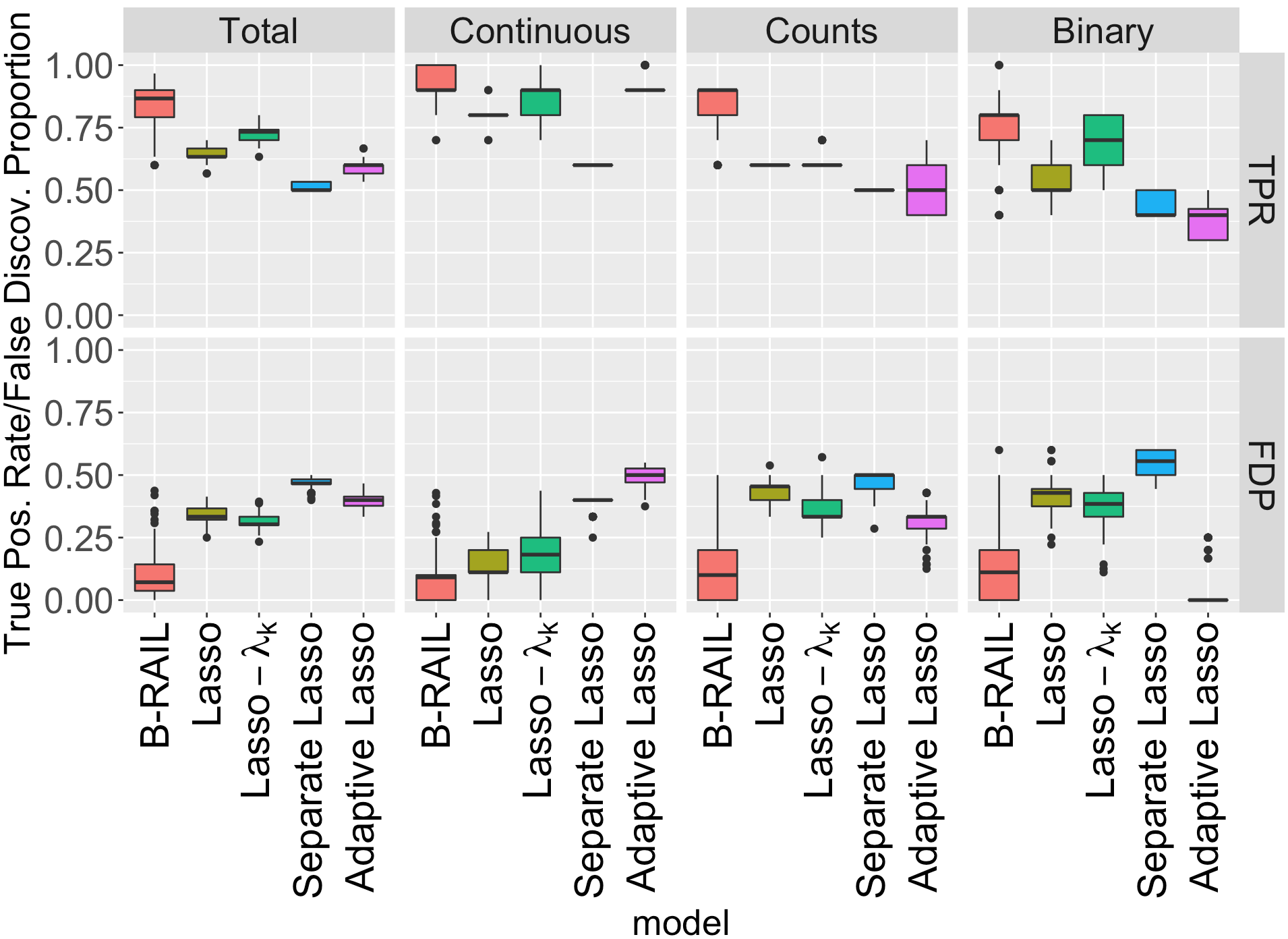}
  \captionof{figure}{Poisson response, iid case}
\end{subfigure}\\
\hspace*{\fill} 
\begin{subfigure}[b]{0.45\linewidth}
\includegraphics[width=\textwidth]{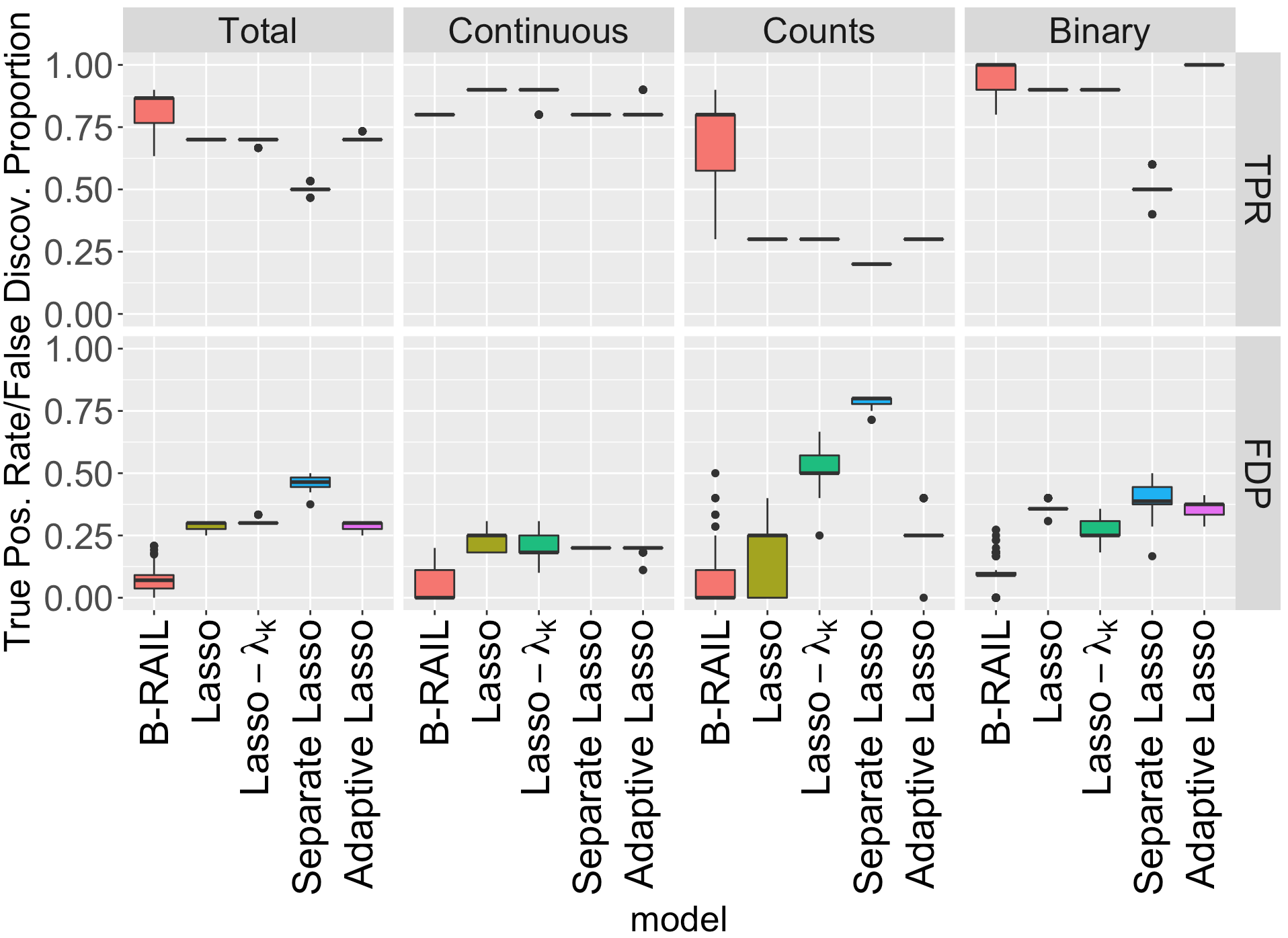}
  \captionof{figure}{Binary response, Block directed graph structure}
\end{subfigure}&
\hspace*{\fill} 
\begin{subfigure}[b]{0.45\linewidth}
\includegraphics[width=\textwidth]{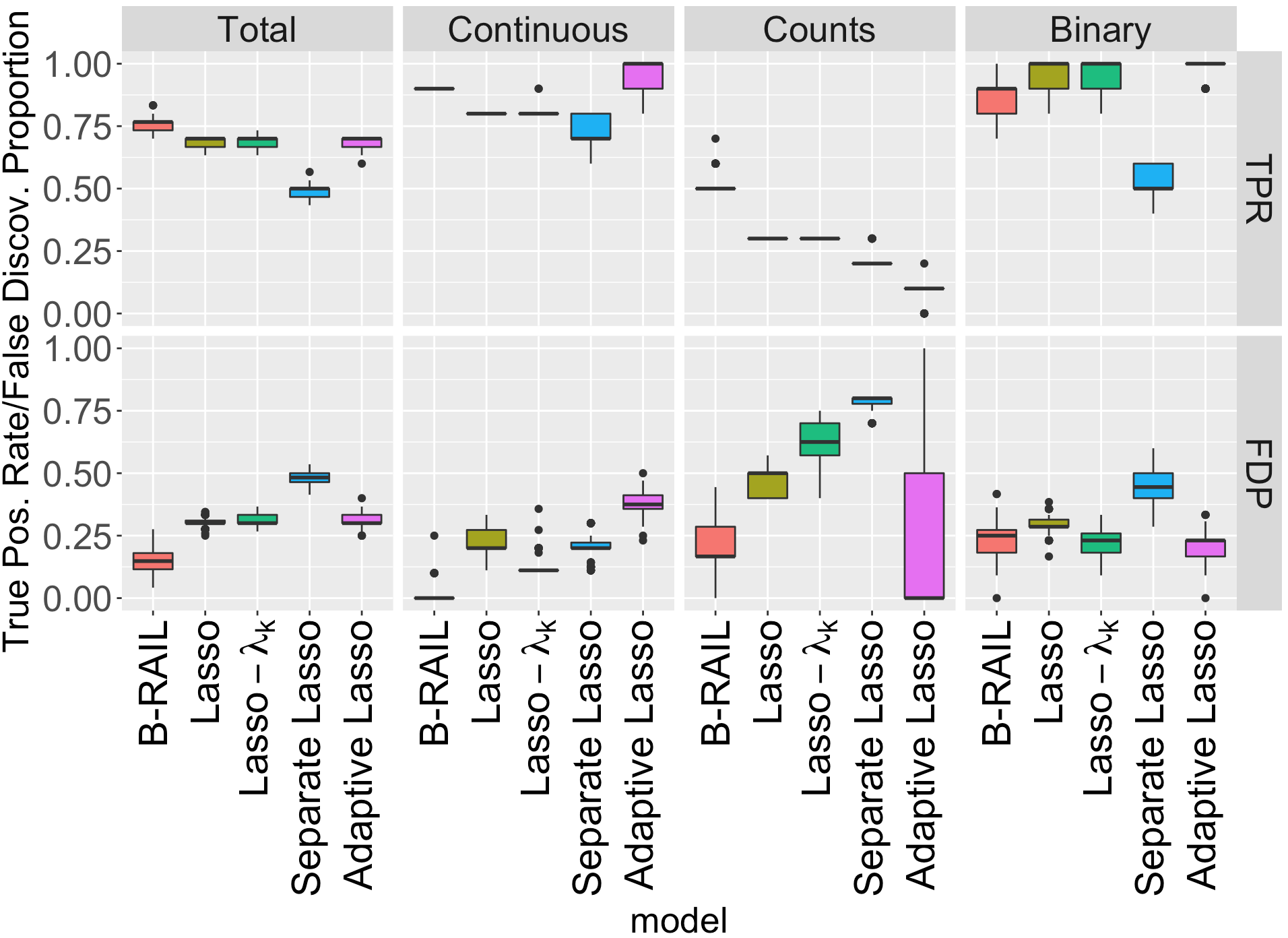}
  \captionof{figure}{Poisson response, Block directed graph structure}
\end{subfigure}\\
\hspace*{\fill} 
\begin{subfigure}[b]{0.45\linewidth}
\includegraphics[width=\textwidth]{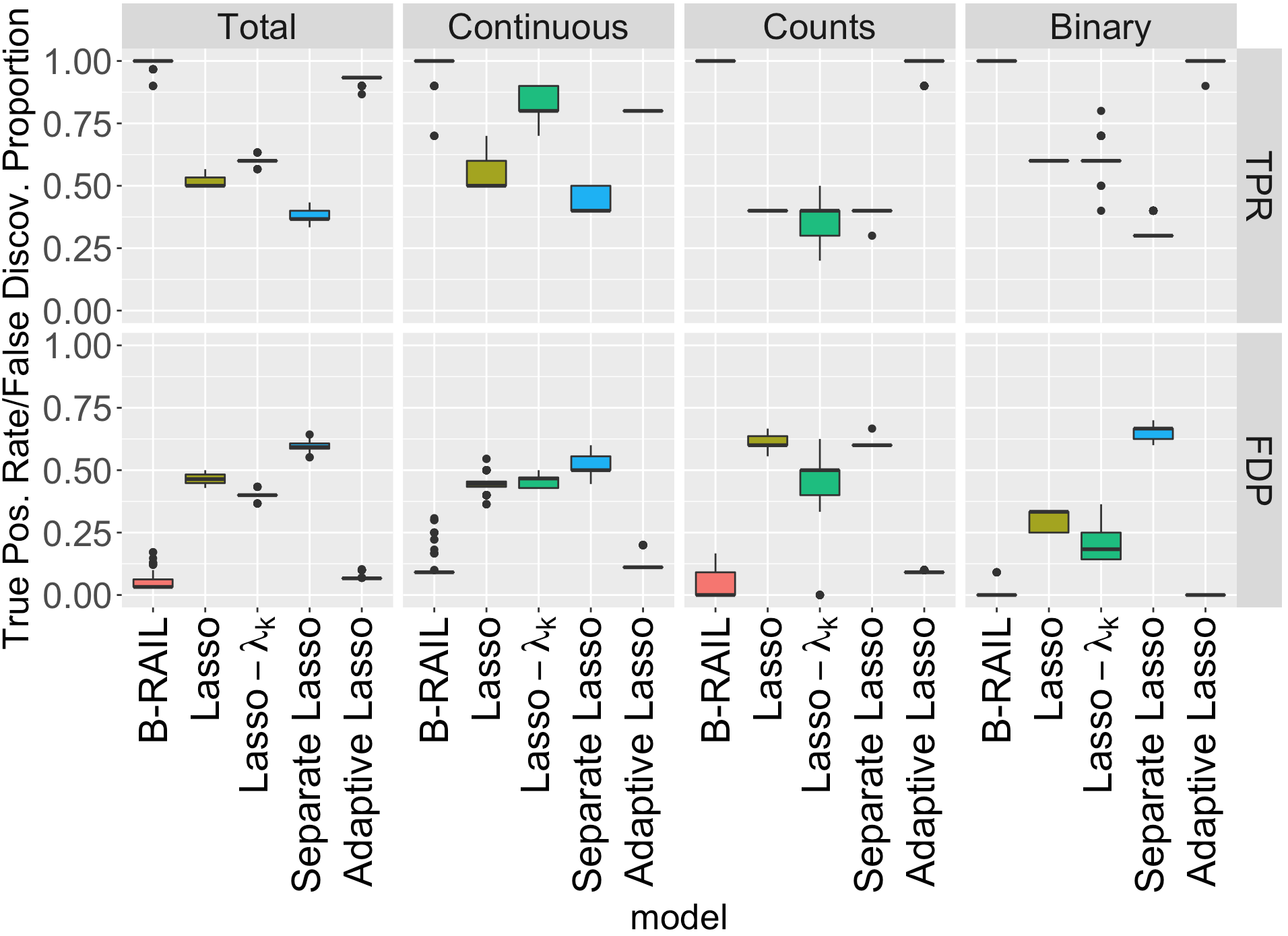}
  \captionof{figure}{Binary response, OV data}
\end{subfigure}&
\hspace*{\fill} 
\begin{subfigure}[b]{0.45\linewidth}
\includegraphics[width=\textwidth]{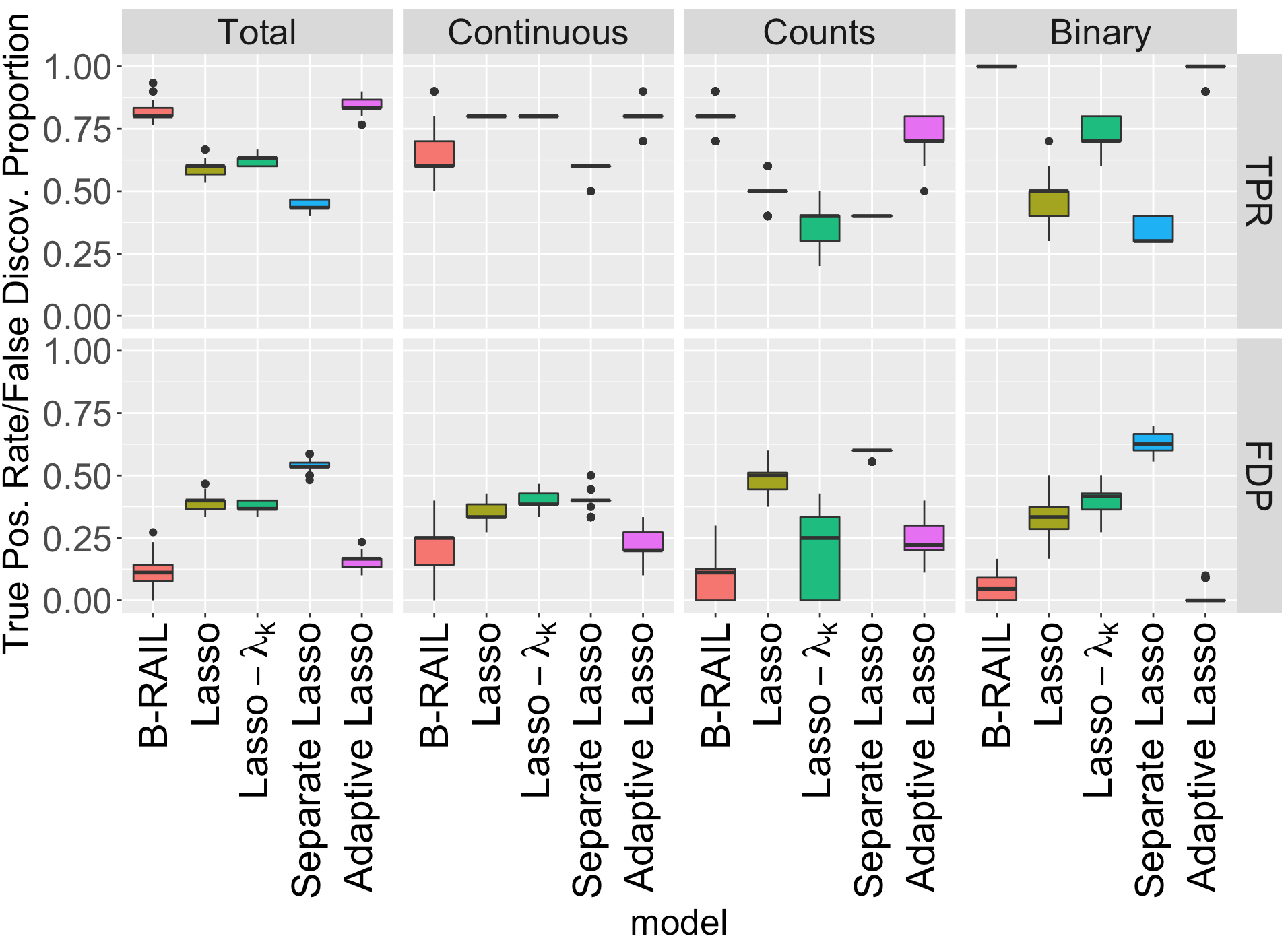}
  \captionof{figure}{Poisson response, OV data}
\end{subfigure}\\
  \end{tabular}    
  \caption{We compare various selection methods under 3 different simulation scenarios. For each scenario, we simulate $\X$ with three blocks (continuous, binary, counts) and either a binary response (left column of subplots) or Poisson response (right column of subplots). We report the TPR and FDP for overall feature recovery and individual block recoveries across 200 runs. Note that we used oracle information for the Lasso-type methods.}
  \label{fig:fig4}
\end{figure}

\begin{table*}[h]
\makebox[\linewidth]{
\scalebox{0.7}{
\begin{tabular}{l  ll  ll  ll  ll}


\addlinespace
\multicolumn{9}{c}{ \begin{normalsize}  Gaussian response \end{normalsize} } \\
\midrule


  & \multicolumn{2}{c}{Total} &  \multicolumn{2}{c}{Continuous} & \multicolumn{2}{c}{ Binary} & \multicolumn{2}{c}{Counts}  \\ 
 \cmidrule(lr){2-3}
\cmidrule(lr){4-5}
\cmidrule(lr){6-7}
\cmidrule(lr){8-9}

 & \multicolumn{1}{c}{TPR} &  \multicolumn{1}{c}{FDP}  &  \multicolumn{1}{c}{TPR}  &  \multicolumn{1}{c}{FDP} &  \multicolumn{1}{c}{TPR}  & \multicolumn{1}{c}{FDP}  &  \multicolumn{1}{c}{TPR}  &  \multicolumn{1}{c}{FDP} \\ 
\midrule

MCP (oracle) & \bft{1.00 (5.2e-4)} & \bft{0.00 (5.2e-4) }& 1.00 (5.0e-4) & 0.00 (0.0e-0) & 1.00 (0.0e-0) & 0.00 (4.5e-4) & 1.00 (1.3e-3) & 0.00 (1.3e-3) \\ 
SCAD (oracle) & 0.77 (5.2e-3) & 0.23 (5.2e-3) & 0.92 (2.8e-3) & 0.18 (4.7e-3) & 1.00 (0.0e-0) & 0.17 (5.1e-3) & 0.38 (1.4e-2) & 0.43 (1.1e-2) \\ 
B-RAIL & 0.88 (4.8e-3) & 0.16 (1.1e-2) & 0.79 (4.8e-3) & 0.12 (1.4e-2) & 0.96 (6.8e-3) & 0.12 (7.0e-3) & 0.89 (5.2e-3) & 0.22 (1.4e-2) \\ 

\addlinespace
\multicolumn{9}{c}{ \begin{normalsize}  Binary response \end{normalsize} } \\
\midrule


  & \multicolumn{2}{c}{Total} &  \multicolumn{2}{c}{Continuous} & \multicolumn{2}{c}{ Binary} & \multicolumn{2}{c}{Counts}  \\ 
 \cmidrule(lr){2-3}
\cmidrule(lr){4-5}
\cmidrule(lr){6-7}
\cmidrule(lr){8-9}

 & \multicolumn{1}{c}{TPR} &  \multicolumn{1}{c}{FDP}  &  \multicolumn{1}{c}{TPR}  &  \multicolumn{1}{c}{FDP} &  \multicolumn{1}{c}{TPR}  & \multicolumn{1}{c}{FDP}  &  \multicolumn{1}{c}{TPR}  &  \multicolumn{1}{c}{FDP} \\ 
\midrule

  MCP (oracle) & 0.49 (2.7e-3) & 0.47 (2.7e-3) & 0.65 (4.1e-3) & 0.27 (9.7e-3) & 0.56 (5.2e-3) & 0.55 (5.6e-3) & 0.26 (3.7e-3) & 0.52 (1.6e-2) \\ 
   SCAD (oracle) & 0.50 (6.7e-3) & 0.50 (6.8e-3) & 0.58 (8.1e-3) & 0.49 (4.4e-3) & 0.61 (1.3e-2) & 0.53 (9.5e-3) & 0.30 (2.7e-3) & 0.40 (1.4e-2) \\ 
  B-RAIL & \bft{0.75 (3.5e-3)} & \bft{0.15 (5.0e-3)} & 0.90 (0.0e-0) & 0.01 (3.6e-3) & 0.83 (8.9e-3) & 0.23 (7.2e-3) & 0.51 (3.8e-3) & 0.19 (1.1e-2) \\ 

\addlinespace
\multicolumn{9}{c}{ \begin{normalsize}  Poisson response \end{normalsize} } \\
\midrule


  & \multicolumn{2}{c}{Total} &  \multicolumn{2}{c}{Continuous} & \multicolumn{2}{c}{ Binary} & \multicolumn{2}{c}{Counts}  \\ 
 \cmidrule(lr){2-3}
\cmidrule(lr){4-5}
\cmidrule(lr){6-7}
\cmidrule(lr){8-9}

 & \multicolumn{1}{c}{TPR} &  \multicolumn{1}{c}{FDP}  &  \multicolumn{1}{c}{TPR}  &  \multicolumn{1}{c}{FDP} &  \multicolumn{1}{c}{TPR}  & \multicolumn{1}{c}{FDP}  &  \multicolumn{1}{c}{TPR}  &  \multicolumn{1}{c}{FDP} \\ 
\midrule

  MCP (oracle) & 0.86 (1.3e-3) & 0.13 (1.5e-3) & 0.90 (1.6e-3) & 0.08 (5.3e-3) & 1.00 (0.0e-0) & 0.18 (4.1e-3) & 0.70 (3.6e-3) & 0.07 (6.1e-3) \\ 
  SCAD (oracle) & 0.65 (1.9e-3) & 0.34 (2.0e-3) & 0.81 (1.8e-3) & 0.28 (3.7e-3) & 0.84 (4.5e-3) & 0.25 (4.1e-3) & 0.30 (1.5e-3) & 0.56 (5.0e-3) \\ 
  B-RAIL & \bft{0.81 (8.6e-3)} & \bft{0.07 (5.2e-3)} & 0.80 (0.0e-0) & 0.04 (5.4e-3) & 0.97 (5.7e-3) & 0.09 (6.2e-3) & 0.68 (2.1e-2) & 0.07 (1.2e-2) \\ 

\specialrule{1pt}{1pt}{1pt}
\end{tabular}
}}
\vspace{3pt}
\caption{We compare B-RAIL to MCP and SCAD for three types of responses and three types of covariates (continuous $\X_1$, counts $\X_2$, and binary $\X_3$) under the block directed graph simulation design. Here $n=200$, $p_{1}=p_{2}=p_{3}=300$, and $||\beta||_0=10$ in each block. Oracle model selection was used for MCP and SCAD penalties. For B-RAIL, we used stability selection, outlined in Section \ref{sec:BRAIL}, thresholded at $0.8$. We report the TPR and FDP for overall recovery and for each block separately across 200 runs. For the binary and Poisson responses, MCP and SCAD do not perform as well as B-RAIL.} \label{tab:SCAD_MCP}
\end{table*}

In Table \ref{tab:SCAD_MCP}, we compare B-RAIL to the non-convex penalties, MCP and SCAD, under the block directed graph simulation design with three different types of responses. We see that MCP performs well with Gaussian responses, and in fact, the MCP penalty can be used instead of a Lasso penalty in the B-RAIL algorithm for Gaussian responses. However, the B-RAIL algorithm outperforms MCP and SCAD for non-Gaussian responses. We thus chose to use a Lasso penalty when introducing the B-RAIL algorithm for consistency.  

\begin{figure}[H]
\centering
\begin{tabular}[c]{cc}
\begin{subfigure}[b]{0.45\linewidth}
\includegraphics[width=\textwidth]{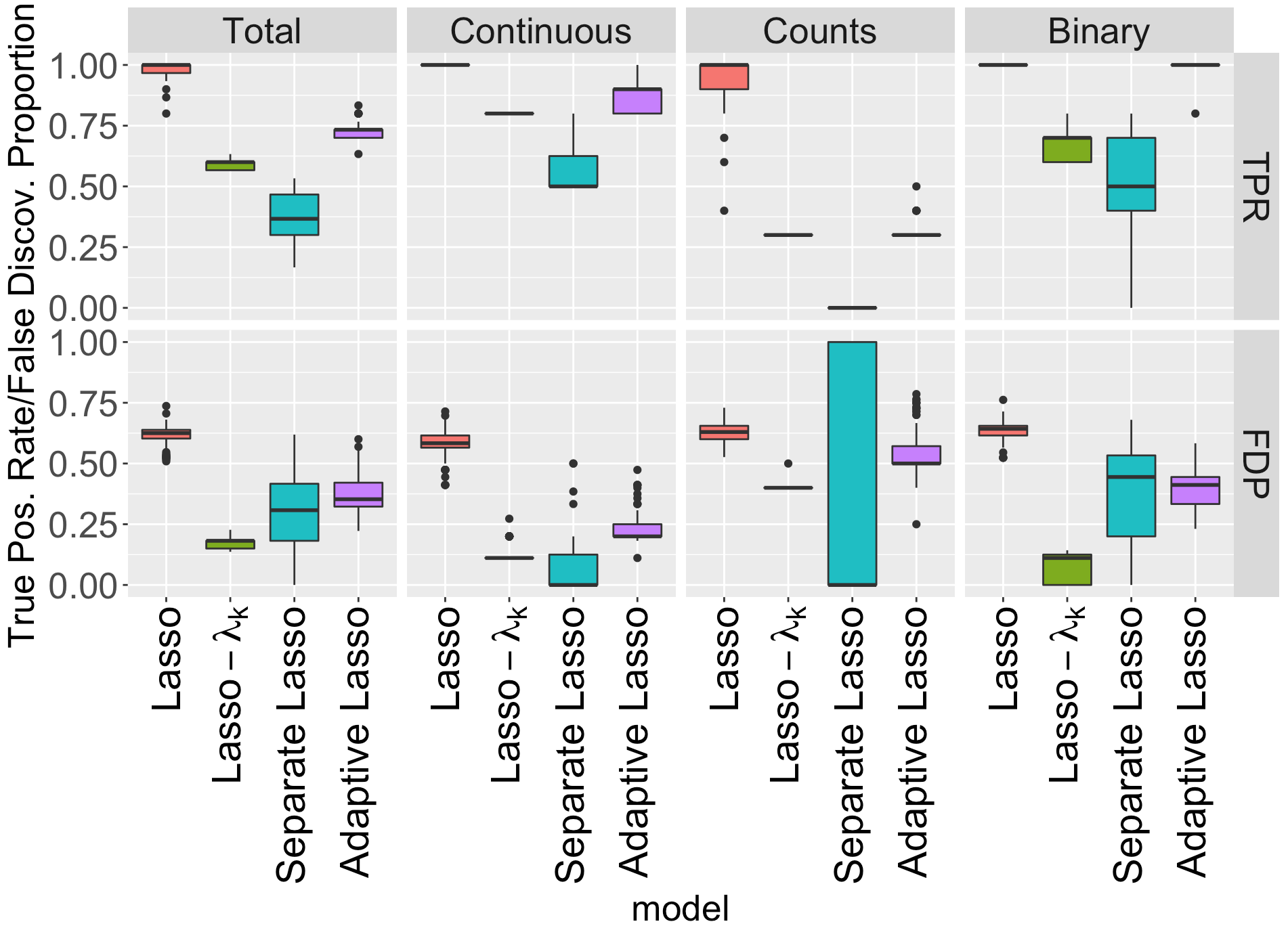}
  \captionof{figure}{5 fold cross validation }
\end{subfigure}&
\hspace*{\fill} 
\begin{subfigure}[b]{0.45\linewidth}
\includegraphics[width=\textwidth]{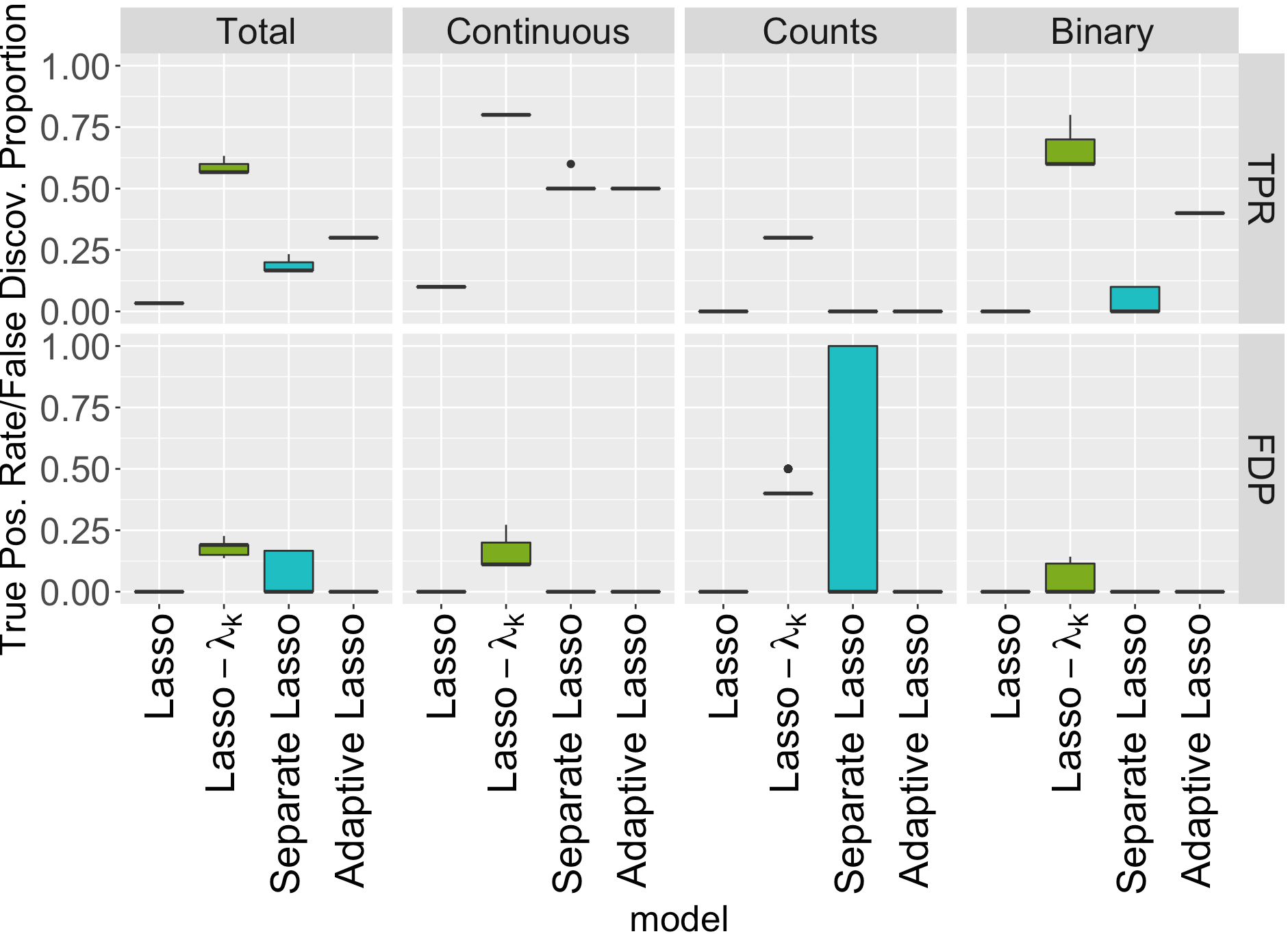}
  \captionof{figure}{Extended BIC}
\end{subfigure}\\
\hspace*{\fill} 
\begin{subfigure}[b]{0.45\linewidth}
\includegraphics[width=\textwidth]{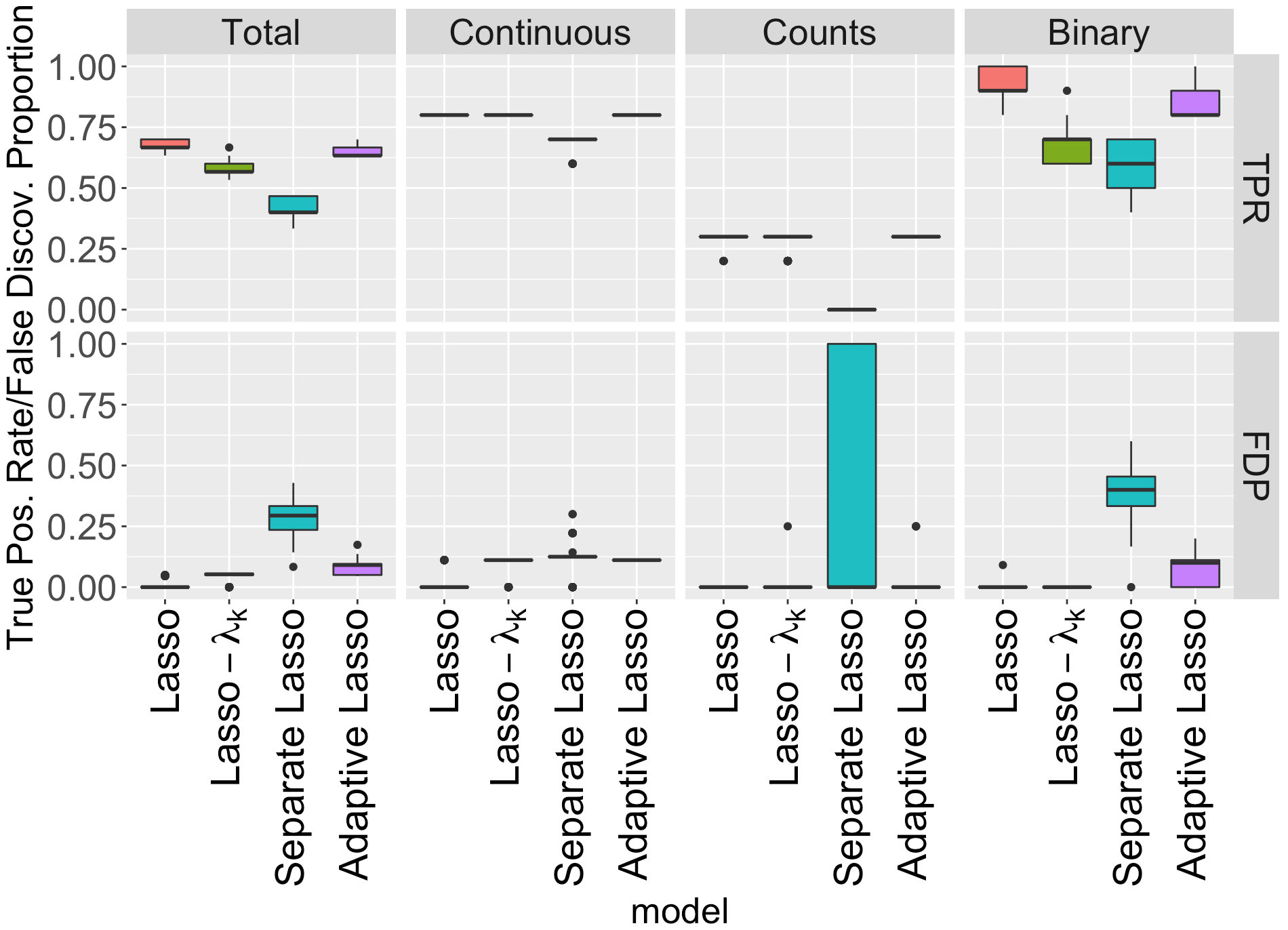}
  \captionof{figure}{Stability Selection ($\tau=.75$)}
\end{subfigure}&
\hspace*{\fill} 
\begin{subfigure}[b]{0.45\linewidth}
\includegraphics[width=\textwidth]{GausMGM.png}
  \captionof{figure}{Oracle selection}
\end{subfigure}\\
  \end{tabular}    
  \caption{We compare feature recovery for B-RAIL and Lasso-type selection methods using oracle information, 5-fold CV, extended BIC, and stability selection to select the regularization parameters. Here, we simulate from the block directed graph simulation design with Gaussian responses. We report the TPR and FDP for overall feature recovery and individual block recoveries across 200 runs.}\label{fig:CompBoxPots}
\end{figure}

Figure~\ref{fig:CompBoxPots} provides the same information as Table~\ref{tab:Methods} but using boxplots for easier visualization. While the model selection techniques (i.e. CV, extended BIC, and stability selection) give lower values of $TPR * (1 - FDP)$ than their oracle selection counterparts, B-RAIL outperforms even the oracle selection methods and yields the highest $TPR * (1 - FDP)$.

\end{document}